\definecolor{dullmagenta}{RGB}{102,0,102}
\def\col{dullmagenta} %color para links %Blue,Maroon
\renewcommand\paragraph{\@startsection{paragraph}{4}{\z@}
                                      {\parskip}%{3.25ex \@plus1ex \@minus.2ex}%
                                      {-1em}%
                                      {\normalfont\normalsize\bfseries}}
\newcommand{\lag}{\mathfrak{g}}
\def\lagk{\lag^k}
\def\lagdk{(\lag^*)^k}
\newcommand{\R}{\mathbb{R}}
\newcommand{\disV}{\mathcal{V}}
\newcommand{\disW}{\mathcal{W}}
\newcommand{\disR}{\mathcal{R}}
\def\fpd#1#2{{\displaystyle\frac{\partial #1}{\partial #2}}}
\newcommand{\st}{\;\ifnum\currentgrouptype=16 \middle\fi|\;}
\def\smallunderbrace#1{\mathop{\vtop{\m@th\ialign{##\crcr
   $\hfil\displaystyle{#1}\hfil$\crcr
   \noalign{\kern3\p@\nointerlineskip}%
   \tiny\upbracefill\crcr\noalign{\kern3\p@}}}}\limits}
\def\@endtheorem{\endtrivlist}% NEW
\def\th@plain{%
  \thm@notefont{}% same as heading font
  \itshape % body font
}
\def\th@definition{%
  \thm@notefont{}% same as heading font
  \normalfont % body font
}
\theoremstyle{plain}
\newtheorem{theorem}{Theorem}
\newtheorem{lemma}{Lemma}
\newtheorem{proposition}{Proposition}
\newtheorem{corollary}{Corollary}
\theoremstyle{definition}
\newtheorem{remark}{Remark}
\newtheorem{example}{Example}
\newtheorem{definition}{Definition}
\begin{document}

\date{}
\title{Conditions  for symmetry reduction of polysymplectic and polycosymplectic structures}

\author{E.\ Garc\'{\i}a-Tora\~{n}o Andr\'{e}s\textsuperscript{a,b} and T.\ Mestdag\textsuperscript{c}\\[2mm]
{\small \textsuperscript{a} Departamento de Matem\'atica,
Universidad Nacional del Sur (UNS),}  \\
{\small  Av.\ Alem 1253, 8000 Bah\'ia Blanca, Argentina}\\[1mm]
{\small \textsuperscript{b} Instituto de Matem\'atica de Bah\'ia  UNS-CONICET, }\\
{\small  Universidad Nacional del Sur,}\\
{\small  Av.\ Alem 1253, 8000 Bah\'ia Blanca, Argentina}\\[1mm]
{\small \textsuperscript{c} Department of Mathematics,  University of Antwerp,}\\
{\small Middelheimlaan 1, 2020 Antwerpen, Belgium}\\
}

\maketitle

\begin{abstract}
For Hamiltonian field theories on polysymplectic manifolds  with a symmetry group action and a momentum map, we explore the redundancy in a set of necessary conditions that has appeared in the literature, for a generalized version of the Marsden-Weinstein symmetry reduction theorem. Next, we prove a necessary and sufficient condition for polycosymplectic reduction. We relate polycosymplectic reduction in a one-to-one way to  the reduction of an associated larger polysymplectic manifold. Throughout the paper, we provide examples and discuss special cases.

\vspace{3mm}

\textbf{Keywords:} Symmetry reduction, momentum map, polysymplectic structures, polycosymplectic structures, Hamiltonian field theory. 

 \vspace{3mm}

 \textbf{Mathematics Subject Classification:}
 37J06, 53D20, 53Z05, 70G45, 70S05, 70S10.

\end{abstract}

\section{Introduction}\label{sec:Introduction}

The last decades have seen an ever growing interest in classical field theories with a symmetry group, and in the application of symmetry reduction techniques. Much of this interest is motivated by the wide range of applications that these methods have found in mathematical physics in general, and in geometric mechanics in particular. 

Since classical field theories find their origin in the calculus of variations, most of the modern literature on their reduction is typically set in a Lagrangian framework and is aimed at extending so-called ``Lagrange-Poincar\'e reduction''  from mechanics \cite{LagRedbyStag} to field theories  \cite{Castrillon2,LagPoincare_JGP,LTM_LP}. In a nutshell, one uses the symmetry to reduce the configuration space to a quotient space, and  to define a variational principle for a reduced Lagrangian field theory there. Remarkably, the equivalence between the solutions of the reduced and the original field theory is not as neat as it is in mechanics, and obstructions to the reconstruction of solutions occur.

It is well-known that symplectic structures lie at the basis of the Hamiltonian formulation of classical mechanics. Besides variational methods, there exist also geometric approaches to field theories that rely on generalizations of symplectic geometry. One of them, is the so-called polysymplectic formalism. Introduced by G{\"{u}}nther~\cite{Gunter}, it makes use of a family of $k$ closed 2-forms $\omega^a$ on a manifold $M$ which share a generalized nondegeneracy condition (see Section~\ref{sec:polysympletic} for the preliminaries). The corresponding Hamiltonian field theory of $(M,\omega^a)$ follows from considering the integral sections of a family of vector fields that satisfies a symplectic-type Hamiltonian equation. Locally, these equations correspond to the Hamilton-De~Donder-Weyl equations: If $\psi^i(t)$ represents the sought-for field, depending on $k$ parameters $t^a$, and if $H(q^i,p_i^a)$ is the Hamiltonian, we want to find the  first part of a solution $(\psi^i(t), \psi_i^a(t))$ of
\begin{equation}\label{eq:HdDW}
\left.\fpd{H}{q^i}\right|_{\varphi(t)}=-\left.\sum_{a=1}^k\fpd{\psi^a_i}{t^a}\right|_{t},\qquad \left.\fpd{H}{p^a_i}\right|_{\varphi(t)}=\left.\fpd{\psi^i}{t^a}\right|_{t}.
\end{equation}

An important point to make for this paper is that polysymplectic field theories are restricted to Hamiltonians $H(q^i,p_i^a)$ that do not depend explicitly on the independent parameters (for physical field theories, these are typically the space-time coordinates). One may include Hamiltonians of the type $H(t^a,q^i,p_i^a)$   by considering $k$-cosymplectic geomety \cite{Hamiltoniansystemskcosymplectic} or, more in general, polycosymplectic geometry. Essentially we need to bring $k$ extra closed one-forms $\eta^a$ in the picture (see Section~\ref{sec:reduction} for the precise definition), but the base line is the following: Just like a polysymplectic structure $(M,\omega^a)$ is a generalization of a symplectic structure, polycosymplectic geometry $(M,\omega^a,\eta^a)$ is inspired by cosymplectic geometry. 

Besides polysymplectic and polycosymplectic structures, there exist also other approaches to Hamiltonian field equations, such as e.g.\ multisymplectic geometry (see e.g. \cite{ReviewultisymplecticNarciso,InvitationMultisymplectic} for two review papers). The reason for our choice for the poly-approach is that it has the advantage that the field equations are formulated entirely in terms of forms and vector fields, for which the usual Cartan calculus can easily be applied. This observation  explains why many results from classical mechanics have rapidly found their analogues in the poly-formulations. The monograph~\cite{bookpoly}, with its comprehensive reference list, is maybe the best testament of this assertion.  Also in the current paper, we will make good use of the previous observation: we will investigate symmetry reduction from the perspective of the reduction of the relevant forms and vector fields (rather than of the associated variational principle).

Already G{\"{u}}nther  proposed a polysymplectic analog of the Marsden-Weinstein Theorem in symplectic geometry~\cite{SymplecticReduction}, although unfortunately the proof is known to be flawed. The first complete proof of a polysymplectic reduction theorem can be found in~\cite{Polyreduction}, expanding and correcting a previous attempt~\cite{MunSal}. Under modest assumptions on the action and the existence of a momentum map, \cite{Polyreduction} gives sufficient conditions for the reduced manifold to be polysymplectic (see the current Theorem~\ref{thm:polysymred2}). The authors  also show that these conditions are satisfied in a number of natural examples of polysymplectic manifolds. Later, Blacker~\cite{Blackerpoly} identified the main obstruction for the reduced space to be polysymplectic, and gave a necessary and sufficient condition for this to happen (see Theorem~\ref{thm:polysymred}); see also~\cite{NicoMartinez} for a closely related earlier result. As the first main goal of this paper, we will use Blacker's result to show (in Section~\ref{sec:polysympletic}) that one of the sufficient conditions of \cite{Polyreduction} is in fact redundant (Proposition~\ref{pro:A2enough}). We also indicate why we think that redundant condition made its appearance in \cite{Polyreduction} and we hint why there may be no need to include it in the statement of Theorem~\ref{thm:polysymred2}.

Until very recently, there did not exist a version of the Marsden-Weinstein reduction theorem for polycosymplectic structures in the literature (apart from a brief announcement in \cite{polycosymplecticmarreroreduction}). We state such a version in Theorem~\ref{thm:main}, by following closely Blacker's strategy of the polysymplectic case~\cite{Blackerpoly}. We first describe (in Proposition~\ref{pro:linearreduction}) how the reduction works at the linear level, and then discuss how this applies to the level sets of a momentum map in polycosymplectic geometry. As a result, we obtain the neccesary and sufficient conditions for the quotient to be polycosymplectic. Albert's cosymplectic reduction theorem~\cite{Albert} then follows as  Corollary~\ref{cor:albert} from Theorem~\ref{thm:main}. The reduction of the Hamiltonian dynamics is also briefly discussed in Theorem~\ref{thm:maindynamics}. 

In Section~\ref{sec:relation} we relate the two previous sections: we show that, if a polycosymplectic manifold is given, it is possible to define a polysymplectic structure on the space $\tilde M=M\times \R$ (which we will call below ``the lift'' of $M$, for convenience) as follows: 
\begin{center}
\begin{tikzcd}[row sep=large, column sep = 7ex]
(M,\omega^a,\eta^a) \arrow{r} & (\tilde M=M\times\R,\,\tilde \omega^a={\rm pr}^*\omega^a+ds\wedge {\rm pr}^*\eta^a).
\end{tikzcd} 
\end{center}
In view of the presence of obstructions for reduction in Theorem~\ref{thm:polysymred} and Theorem~\ref{thm:main}, we then show in Proposition~\ref{pro:equivalencepolysympolyco} that, with the above construction, its lift $(M,\omega^a,\eta^a)$ can be reduced  if, and only if, $(\tilde M,\tilde \omega^a)$ can be reduced (with a suitable definition of a lifted action and momentum map). This has the immediate advantage that we can easily obtain a sufficient condition for polycosymplectic reduction of $(M,\omega^a,\eta^a)$, based on the one (now the only remaining one, in view of Proposition~\ref{pro:A2enough}) for polysymplectic reduction of $(\tilde M,\tilde \omega^a)$  (see Theorem~\ref{thm:sufficient}). The section ends with a discussion about the relation between the Hamiltonian dynamics on the two spaces $M$ and $\tilde M$. In the last section we discuss as examples the peculiarities of three special cases: those of 
$k$-symplectic and $k$-cosymplectic manifolds, of the stable cotangent bundle and of the product of cosymplectic manifolds.

We conclude this Introduction with a few remarks on the recent paper \cite{Poly-deLucas} (an arXiv preprint upon submission of this work) on reduction of polycosymplectic structures, which is closely related to ours in its aims and results. We would like to point out a few differences, though. In \cite{Poly-deLucas} there is no direct proof of the reduction theorem, but a proof that is based on an extension of the polycosymplectic structure to a new polysymplectic structure on the manifold $\hat M=M\times\R^k$, as follows
\begin{center}
\begin{tikzcd}[row sep=large, column sep = 7ex]
(M,\omega^a,\eta^a) \arrow{r} & (\hat M=M\times\R^k,\,\hat \omega^a={\rm pr}^*\omega^a+ds^a\wedge {\rm pr}^*\eta^a),
\end{tikzcd} 
\end{center}
where $s^a$ is the coordinate in the $a$-th factor of $\R^k$. It is then described in \cite{Poly-deLucas} how the sufficient conditions of \cite{Polyreduction}  for the reduction of this new space $\hat M$ can be translated to sufficient conditions for the reduction of the original polycosymplectic manifold $M$. The relation with the necessary and sufficient conditions of  \cite{Blackerpoly} is not explored in  \cite{Poly-deLucas}.

It is obvious that $\hat M$ is a different and larger polysymplectic manifold than our lift $\tilde M$: both constructions yield the same cosymplectic manifold in the case $k=1$, but differ when $k\geq 2$.  
There are, in our opinion, two desirable and natural properties of $\tilde M$ that $\hat M$ does not have. First, our lift of the stable cotangent bundle of $k^1$-covelocities, which plays the role of the canonical phase space in the polycosymplectic Hamiltonian field theory, has as its lift the cotangent bundle of $k^1$-covelocities, which is the canonical phase space in the polysymplectic Hamiltonian field theory  (see Section \ref{sec:kcosymplectic}). Second,  we show in Proposition~\ref{pro:liftstoksympl} that in the case where $M$ is a $k$-cosymplectic manifold  our lift to $\tilde M$ remains a $k$-symplectic manifold. Remark that we have a direct proof for polycosymplectic Marsden-Weinstein reduction and that we have used our lift only to show  the equivalence between the necessary and sufficient conditions for polycosymplectic reduction of $M$ and for polysymplectic reduction of $\tilde M$. As a result, we can recover exactly the same  sufficient conditions for reduction that appear in Theorem~6.13 of \cite{Poly-deLucas}, by making use of the larger lift to $\hat M$ (even though the statement in \cite{Poly-deLucas} uses a more general definition of momentum map). But, as a consequence of our Proposition~\ref{pro:A2enough}, we believe one of them to be redundant. 

\paragraph{Notations and conventions.} We will use Einstein's convention on the indices $i,j,k,\dots$, but \emph{not} on the indices $a,b,c,\dots$  (sum will be written, if required). Indices $a,b,c,\dots$ will run from $1$ to $k$ unless otherwise specified. All Lie groups in this paper are assumed to be connected. For concreteness, actions will be assumed to be on the left.

\section{A sufficient condition for polysymplectic reduction}\label{sec:polysympletic}

There are a few similar, but not entirely equivalent, definitions of a polysymplectic manifold. We will use the following, which essentially coincides with the one introduced in~\cite{Gunter}:

\begin{definition}\label{def:polysymplectic}  A \emph{$k$-polysymplectic structure}, or simply a \emph{polysymplectic structure}, on a manifold $M$ is a family of closed two-forms $\omega^1,\dots,\omega^k$, such that
\[
\ker\omega^1\cap\dots\cap\ker \omega^k=\{0\}. 
\]
\end{definition}

Definition~\ref{def:polysymplectic} is also used in \cite{Polyreduction,Blackerpoly}, which will be our main sources  in this section. Following G{\"{u}}nther, we say that the polysymplectic structure is standard if 
around each point of $M$ there exists a coordinate chart $(q^i,p^a_i)$ such that
\begin{equation}\label{eq:darboux-standard}
\omega^a=dq^i\wedge dp_i^a,\qquad  (a=1,\ldots,k). 
\end{equation}
In analogy with the symplectic case, coordinates as in~\eqref{eq:darboux-standard} are called Darboux coordinates, but for a general polysymplectic manifold they may not exist. Equivalently, a polysymplectic manifold $(M,\omega^a)$ is standard if it is locally isomorphic (in the polysymplectic sense) to the canonical model 
\[
(T^1_k)^*Q =T^*Q\oplus\underset{k\; {\rm copies }}{\dots}\oplus T^*Q,
\]
which is often referred to as the cotangent bundle of $k^1$-covelocities. In particular, for a standard polysymplectic manifold $(M,\omega^a)$ one has $\dim M=(k+1)n$ for some integer $n$. A $1$-polysymplectic manifold is a symplectic manifold (and it is always standard due to the Darboux theorem in symplectic geometry).

\begin{remark} To avoid possible confusion in terminology, we would like to point out that the term ``polysymplectic structure'' was also used in a completely different meanings and contexts, starting with
e.g. \cite{bookSardanashvily,bookLagHamMethods} and \cite{Kanatchikov1}. There it stands for a certain vector-valued form defined on an associated bundle of a given fibre bundle. In this paper we do not make use of this different description of classical field theories.
\end{remark}

We will assume that the reader is familiar with the notions of a $k$-vector field $\pmb{X}=(X_1,\dots,X_k)$ on a manifold $M$ and of an integral section $\phi\colon \R^k\to M$ of a given $k$-vector field through a point $x\in M$ (if not, see e.g.\ \cite{bookpoly}). We briefly recall that a $k$-vector field $\pmb{X}$ is a section of the vector bundle $T^1_kM\to M$, and that it is customary to denote its components (which are vector fields on $M$) by $X_a$, as above. It is also worth mentioning that, while the term integral section is well-established in this context, an integral section $\phi$ is not a section of any bundle in a natural way.

We also recall that a $k$-vector field is said to be integrable if it admits an integral section through each point of $M$. Given a  function $H\colon M\to \R$ on a polysymplectic manifold $(M,\omega^a)$, the $k$-symplectic Hamilton's equations for $H$ are 
\begin{equation}\label{eq:k-Sym}
\flat_{\omega} (\pmb{X})=dH, \tag{k-Sym}
\end{equation}
where $\flat_\omega(\pmb{X})$ is the 1-form on $M$ defined as:
\[
\flat_\omega(\pmb{X})=i_{X_1}\omega^1+\dots+ i_{X_k}\omega^k.
\]
We draw the attention of the reader that even though we convey to  the standard terminology of ``$k$-symplectic Hamilton's equations'', these are equations for a $k$-vector field on a \emph{general} polysymplectic manifold $(M,\omega^a)$, which need not be a $k$-symplectic manifold ($k$-symplectic manifolds are a particular case of polysymplectic manifolds; see Section~\ref{sec:kcosymplectic}).

Each solution $\pmb{X}$ of~\eqref{eq:k-Sym} is called a Hamiltonian $k$-vector field. Integral sections of such vector fields (if they exist) provide solutions to the  Hamilton-De~Donder-Weyl equations in field theories~\eqref{eq:HdDW} (see the Introduction). We point out that, with this terminology, a Hamiltonian $k$-vector field need not be integrable.

We will now recall the main results of~\cite{Polyreduction} and \cite{Blackerpoly}, concerning the reduction of polsysymplectic structures. 
An action $\Phi_g\colon M\to M$ of a Lie group $G$ on a polysymplectic manifold $(M,\omega^a)$ is a polysymplectic action if, for each $g\in G$, we have $\Phi_g^*\omega^a=\omega^a$. A momentum map for such an action is a map
\[
J=(J^1,\dots,J^k)\colon M\to\lagk\equiv \lag\times\underset{k\; {\rm copies }}{\dots}\times\lag 
\]
which satisfies
\begin{equation}\label{eq:momentumdefinition}
\left.i_{\xi_M}\omega^a\right|_x=dJ^a_\xi(x),  
\end{equation}
for each $a$ and each $\xi\in\lag$, where $J^a_\xi\colon M\to\R$ is the function $J^a_\xi(x)=\langle J^a(x),\xi\rangle$. Here $\xi_M\in\mathfrak{X}(M)$ is the infinitesimal generator of $\xi$, for the action $\Phi_g$. The momentum map is equivariant if  
\begin{equation}\label{eq:equivariance}
J(\Phi_g(x))=({\rm Ad}^*_{g^{-1}} J^1(x),\dots,{\rm Ad}^*_{g^{-1}} J^k(x)),
\end{equation}
where ${\rm Ad}_{g^{-1}}^*\colon \lag^*\to\lag^*$ is the coadjoint action. In other words, $J$ is a $G$-equivariant map when one endows $\lagdk$ with the so-called $k$-coadjoint action ${\rm Coad}^k_g\colon \lagdk\to \lagdk$:
\[
{\rm Coad}^k_g(\mu_1,\dots,\mu_k)= ({\rm Ad}^*_{g^{-1}} \mu_1,\dots,{\rm Ad}^*_{g^{-1}} \mu_k).
\]
The isotropy group of an element $\mu=(\mu_1,\dots,\mu_k)\in\lagdk$ under the $k$-coadjoint action is denoted $G_\mu$, and it is easy to see that
\[
G_\mu=G_{\mu_1}\cap\dots\cap G_{\mu_k}, 
\]
where each $G_{\mu_a}$ is the isotropy group of $\mu_a\in\lag^*$ under the usual ($k=1$) coadjoint action. The Lie algebra of $G_\mu$ will be denoted $\lag_\mu$. A polysymplectic manifold equipped with a polysymplectic action and an equivariant momentum map is called a polysymplectic Hamiltonian $G$-space.

Throughout the paper, we will assume that the action  $\Phi_g\colon M\to M$ is free and proper, even though this condition can often be relaxed. In that case $M\to M/G$ is a principal fiber bundle with vertical subbundle $\tilde \lag=\cup_{x\in M}\tilde\lag_x\subset TM$, where
\[
\tilde\lag_x= \{\xi_M(x)\st \xi\in\lag\}\subset T_xM.
\]
We will also use the notations
\[
 \tilde \lag_\mu=\bigcup_{x\in M}\left.\tilde\lag_\mu\right|_x\subset TM,\qquad \left.\tilde\lag_\mu\right|_x= \{\xi_M(x)\st \xi\in\lag_\mu\}\subset T_xM.
\]

If $S_x\subset T_xM$ is a subspace of the tangent space, the polysymplectic orthogonal of $S_x$ is the following subspace:
\begin{equation}\label{def:orthogonal}
S_x^\omega=\{v\in T_xM\st \omega^1(v,S_x)=\dots=\omega^k(v,S_x)=0\}\subset T_xM.   
\end{equation}
The main properties of the polysymplectic orthogonal can be found in~\cite{Blackerpoly}. A more general notion of $\ell$-th polysymplectic orthogonal (where only the first $\ell\leq k$ forms are considered in expression (\ref{def:orthogonal})) is studied in~\cite{Submanifoldsk}. We remark that, unlike in the symplectic case, only the inclusion $S_x\subset S_x^{\omega\omega}$ is guaranteed (we use the notation $S_x^{\omega\omega}= (S_x^{\omega})^{\omega}$).

The following result has been proved in~\cite{Blackerpoly}, Theorem~3.22 (But, beware of an unfortunate typo in the statement):

\begin{theorem}[Polysymplectic reduction theorem] \label{thm:polysymred} Let $(M,\omega^a,\Phi_g,J)$ be a polysymplectic Hamiltonian $G$-space, and $\mu\in\lagdk$ a regular value of $J$. Assume that $G_\mu$ acts freely and properly on $J^{-1}(\mu)$. Let  $\pi_\mu\colon J^{-1}(\mu)\to J^{-1}(\mu)/G_\mu$ be the canonical projection and $j_\mu\colon J^{-1}(\mu)\to M$  the canonical inclusion.

Then, the reduced space $M_\mu=J^{-1}(\mu)/G_\mu$ admits a unique polysymplectic structure $\omega^a_\mu$ satisfying $\pi_\mu^*\omega^a_\mu=j_\mu^*\omega^a$, 
if, and only if, for each $x\in J^{-1}(\mu)$, the following condition holds:
\begin{equation}\label{eq:nondegenerate}
\left.\tilde\lag_\mu\right|_x=\tilde\lag_x^\omega \cap \tilde\lag_x^{\omega\omega}.
\end{equation}
\end{theorem}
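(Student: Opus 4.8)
The plan is to reduce the global statement to a pointwise linear-algebra fact, following the standard Marsden--Weinstein pattern adapted to the polysymplectic setting. First I would observe that $J^{-1}(\mu)$ is a smooth $G_\mu$-invariant submanifold because $\mu$ is a regular value, so $M_\mu = J^{-1}(\mu)/G_\mu$ is a smooth manifold and $\pi_\mu$ is a surjective submersion; since $\pi_\mu^*$ is injective on forms, the reduced forms $\omega^a_\mu$, if they exist, are automatically unique, and the question is purely one of existence and nondegeneracy. The key identifications to set up are: the tangent space $T_x J^{-1}(\mu) = \ker(T_x J)$, and the $\omega$-orthogonal description of this kernel. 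From \eqref{eq:momentumdefinition} one gets, for $v \in T_x M$, that $T_x J^a_\xi(v) = \omega^a(\xi_M(x), v)$, so $v \in \ker T_x J$ exactly when $\omega^a(\xi_M(x), v) = 0$ for all $a$ and all $\xi \in \lag$, i.e.\ $T_x J^{-1}(\mu) = \tilde\lag_x^{\omega}$. This is the polysymplectic analogue of $T_x J^{-1}(\mu) = (\mathfrak g \cdot x)^\omega$.

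Next I would pin down the vertical directions of $\pi_\mu$ inside $T_x J^{-1}(\mu)$. The fibers of $\pi_\mu$ are the $G_\mu$-orbits, so the vertical subspace at $x$ is $\tilde\lag_\mu|_x$, and one always has $\tilde\lag_\mu|_x \subset \tilde\lag_x \cap \tilde\lag_x^\omega$ (the second inclusion because the momentum map is $\mu$-preserving along the orbit directions in $\lag_\mu$, or directly from equivariance). Thus $T_x M_\mu$ is canonically identified with the quotient $\tilde\lag_x^\omega / \tilde\lag_\mu|_x$. The forms $j_\mu^*\omega^a$ on $J^{-1}(\mu)$ are $G_\mu$-basic precisely when (i) they are $G_\mu$-invariant — which follows from $\Phi_g^*\omega^a = \omega^a$ — and (ii) their kernel along $T_x J^{-1}(\mu)$ contains the vertical space $\tilde\lag_\mu|_x$, which again follows from \eqref{eq:momentumdefinition}. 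So the reduced forms $\omega^a_\mu$ \emph{always exist} as closed two-forms on $M_\mu$; the only thing that can fail is the joint nondegeneracy condition $\bigcap_a \ker \omega^a_\mu = \{0\}$.

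The heart of the proof is therefore to compute this joint kernel. A tangent vector $[v] \in T_x M_\mu$, with $v \in \tilde\lag_x^\omega$, lies in $\bigcap_a \ker\omega^a_\mu$ iff $\omega^a(v, w) = 0$ for all $w \in \tilde\lag_x^\omega$ and all $a$, i.e.\ iff $v \in (\tilde\lag_x^\omega)^\omega = \tilde\lag_x^{\omega\omega}$. Combined with $v \in \tilde\lag_x^\omega$, this says the joint kernel of the reduced forms is exactly $(\tilde\lag_x^\omega \cap \tilde\lag_x^{\omega\omega}) / \tilde\lag_\mu|_x$. Hence $\omega^a_\mu$ is a polysymplectic structure if and only if this quotient is zero, i.e.\ iff $\tilde\lag_x^\omega \cap \tilde\lag_x^{\omega\omega} = \tilde\lag_\mu|_x$, which is precisely \eqref{eq:nondegenerate}. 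I would also note that the inclusion $\tilde\lag_\mu|_x \subset \tilde\lag_x^\omega \cap \tilde\lag_x^{\omega\omega}$ always holds (the first inclusion was noted above, and $\tilde\lag_\mu|_x \subset \tilde\lag_x^{\omega\omega}$ because $\tilde\lag_\mu|_x \subset \tilde\lag_x \subset \tilde\lag_x^{\omega\omega}$ need only be checked, or one argues directly), so the content of the condition is really the reverse inclusion.

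The step I expect to be the main obstacle is the clean verification that $\omega^a_\mu$ is well-defined as a \emph{smooth} form on the quotient and that the computation of its kernel is legitimate — i.e.\ carefully separating the algebraic identity "$\ker$ of the reduced form $=$ (orthogonal) $\cap$ (constraint), modulo vertical" from the smoothness/descent argument, and making sure the polysymplectic orthogonal behaves as expected (recall that in general only $S_x \subset S_x^{\omega\omega}$, not equality, which is exactly why the condition \eqref{eq:nondegenerate} is nontrivial and cannot be simplified as in the symplectic case). If a linear reduction lemma analogous to Blacker's is available, the cleanest route is to apply it fiberwise to $(T_x M, \omega^a_x)$ with the subspace $\tilde\lag_x$ and the distinguished isotropic subspace $\tilde\lag_\mu|_x$, and then glue; otherwise one does the descent argument by hand using that $\pi_\mu$ is a submersion and $G_\mu$ is connected.
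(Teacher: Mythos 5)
Your proposal is correct and follows essentially the same route as the paper's own justification (which in turn is Blacker's): identify $T_xJ^{-1}(\mu)=\tilde\lag_x^\omega$ and $\left.\tilde\lag_\mu\right|_x=\tilde\lag_x\cap\tilde\lag_x^\omega$, observe that the forms $j_\mu^*\omega^a$ always descend, and compute the joint kernel of the reduced forms as $(\tilde\lag_x^\omega\cap\tilde\lag_x^{\omega\omega})/\left.\tilde\lag_\mu\right|_x$, so that nondegeneracy is exactly condition~\eqref{eq:nondegenerate}. No essential differences to report.
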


It will be useful to review the need for including the necessary and sufficient condition~\eqref{eq:nondegenerate}. From the definition of momentum map~\eqref{eq:momentumdefinition}, one can show that the tangent space of $J^{-1}(\mu)$ coincides with the polysymplectic orthogonal of $\tilde\lag$:
\[
T_x J^{-1}(\mu) =\{\xi_M(x)\st \xi\in\lag\}^\omega=\tilde\lag_x^\omega.
\]
Indeed,
\[
J^{-1}(\mu)=\{x\in M\st J^a_\xi(x)=\langle \mu^a,\xi\rangle, \text{ for all }\;\xi \in \lag \text{ and } a=1,\ldots,k\}. 
\]
Then, $v\in T_xJ^{-1}(\mu)$ if, and only if, $0=T_xJ^a_\xi(v)=\langle d J^a_\xi, v\rangle =\omega^a_x(\xi_M(x),v)=0$. So, $v\in \lag_x^\omega$. With that, one can show that
\begin{equation} \label{extranumber}
\left.\tilde\lag_\mu\right|_x= \tilde\lag_x \cap T_xJ^{-1}(\mu)  =\tilde\lag_x \cap (\tilde\lag_x)^\omega.
\end{equation}
Besides, it also follows that the family of closed two-forms $j_\mu^*\omega^a$ on $J^{-1}(\mu)$ satisfies
\[
\cap_{a=1}^k \big(\ker \left.j_\mu^*\omega^a\right|_x\big)=T_x J^{-1}(\mu)\cap(T_x J^{-1}(\mu))^\omega=\tilde\lag_x^\omega \cap \tilde\lag_x^{\omega\omega}.
\]

The condition for the family of two-forms $j_\mu^*\omega^a$ to define a polysymplectic structure when taking the quotient of $J^{-1}(\mu)$ by $G_\mu$ is $\left.\tilde\lag_\mu\right|_x = \cap_{a=1}^k \big(\ker \left.j_\mu^*\omega^a\right|_x\big)$. Therefore, the  neccesary and sufficient condition~\eqref{eq:nondegenerate} simply  removes this degeneracy on the quotient space $M_\mu$.

Since always $\tilde\lag_x\subset\tilde\lag_x^{\omega\omega}$, and in view of \eqref{extranumber}, the condition~\eqref{eq:nondegenerate}  can also be written as $\tilde\lag_x^\omega \cap \tilde\lag_x^{\omega\omega}\subset\left.\tilde\lag_\mu\right|_x$. The above result also appears in essence in \cite{Polyreduction}, after the proof of Lemma~3.6, although not written explicitly as a propositon there.
In the symplectic case, one has $\tilde\lag_x^{\omega\omega}=\tilde\lag_x$ and that is why there is no such condition in the well-known symplectic reduction theorem of~\cite{SymplecticReduction}. 

The dimension of the reduced polysymplectic space is
\[
\dim M_\mu=\dim M-k\cdot\dim G - \dim G_\mu.
\]
It follows that, in general, the reduced polysymplectic space $M_\mu$ will not be standard even if $M$ is. As a simple example, consider a Lie group $G$ and fix $\nu\in\lag^*$ and take $\mu=(\nu,\dots,\nu)\in\lagdk$ (in particular, $G_\mu= G_\nu$). The reduction of $M=(T^1_k)^*G$ w.r.t. the natural left action at $\mu$ has dimension $\dim M_\mu=\dim G-\dim G_\nu$ which is in general not of the form $(k+1)n'$ (with $n'$ a positive integer). This particular example is discussed in detail in \cite{Polyreduction}.

The paper \cite{Polyreduction} goes one step further than \cite{Blackerpoly}. One may find there a set of two sufficient conditions for reduction:
\begin{theorem}\label{thm:polysymred2} Let $(M,\omega^a,\Phi_g,J)$ be a polysymplectic Hamiltonian $G$-space, and $\mu\in\lagdk$ a regular value of $J$. Assume that $G_\mu$ acts freely and properly on $J^{-1}(\mu)$. If the following (sufficient) conditions hold: 
\begin{enumerate}
 \item[(A1)] $\ker (T_xJ^a)=T_x(J^{-1}(\mu))+\left.\ker\omega^a\right|_x+\left.\tilde\lag_{\mu_a}\right|_x$, for each $a=1,\dots,k$,
 \item[(A2)] $\left.\tilde\lag_\mu\right|_x=\cap_{a=1}^k \big(\left.\tilde\lag_{\mu_a}\right|_x+\left.\ker\omega^a\right|_x\big)\cap T_x(J^{-1}(\mu))$,
\end{enumerate}
then the reduced space $M_\mu=J^{-1}(\mu)/G_\mu$ admits a unique polysymplectic structure $\omega^a_\mu$ satisfying $\pi_\mu^*\omega^a_\mu=j_\mu^*\omega^a$. 
\end{theorem}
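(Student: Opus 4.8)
The plan is to deduce Theorem~\ref{thm:polysymred2} from Blacker's Theorem~\ref{thm:polysymred}: since the latter already gives a necessary and sufficient criterion, it suffices to check that (A1) together with (A2) force the nondegeneracy condition~\eqref{eq:nondegenerate} at every $x\in J^{-1}(\mu)$. Fix such an $x$, set $W:=T_xJ^{-1}(\mu)$, and for a single index $a$ write $S^{\omega^a}:=\{v\in T_xM: \omega^a_x(v,S)=0\}$ for the orthogonal relative to $\omega^a$ alone. The backbone consists of identities already recorded before the statement: $\ker(T_xJ^a)=(\tilde\lag_x)^{\omega^a}$ (definition of momentum map), $W=\tilde\lag_x^\omega=\cap_a(\tilde\lag_x)^{\omega^a}$, and $\tilde\lag_\mu|_x=\tilde\lag_x\cap W$ from~\eqref{extranumber}; also $\tilde\lag_x^{\omega\omega}=W^\omega$. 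The inclusion $\tilde\lag_\mu|_x\subseteq\tilde\lag_x^\omega\cap\tilde\lag_x^{\omega\omega}$ comes for free: $\tilde\lag_\mu|_x=\tilde\lag_x\cap W\subseteq W=\tilde\lag_x^\omega$, while $\tilde\lag_x\subseteq\tilde\lag_x^{\omega\omega}$ always holds, as noted in the excerpt.

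For the reverse inclusion — the crux — I would take $v\in\tilde\lag_x^\omega\cap\tilde\lag_x^{\omega\omega}=W\cap W^\omega$ and try to place it in $\tilde\lag_\mu|_x$ by means of (A2); since $v\in W$ already, it is enough to show $v\in\tilde\lag_{\mu_a}|_x+\ker\omega^a|_x$ for each fixed $a$. The decisive manoeuvre is to apply the single-form operation $(\cdot)^{\omega^a}$ to condition (A1). Using the elementary facts $(U_1+U_2)^{\omega^a}=U_1^{\omega^a}\cap U_2^{\omega^a}$, $(\ker\omega^a|_x)^{\omega^a}=T_xM$, and the biorthogonal identity $((\tilde\lag_x)^{\omega^a})^{\omega^a}=\tilde\lag_x+\ker\omega^a|_x$ valid for a single, possibly degenerate, two-form (one checks it by passing to $T_xM/\ker\omega^a|_x$, where $\omega^a$ induces a nondegenerate form), condition (A1) turns into
\[
\tilde\lag_x+\ker\omega^a|_x \;=\; W^{\omega^a}\cap(\tilde\lag_{\mu_a}|_x)^{\omega^a}.
\]
Now $v\in W^\omega\subseteq W^{\omega^a}$, and $v\in W=\tilde\lag_x^\omega\subseteq(\tilde\lag_x)^{\omega^a}\subseteq(\tilde\lag_{\mu_a}|_x)^{\omega^a}$ because $\tilde\lag_{\mu_a}|_x\subseteq\tilde\lag_x$; hence $v$ lies in the right-hand side, so $v=\xi_M(x)+w$ with $\xi\in\lag$ and $w\in\ker\omega^a|_x$.

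It remains to identify the first summand. Since $\ker\omega^a|_x\subseteq\ker(T_xJ^a)$ and $v\in W\subseteq\ker(T_xJ^a)$, we get $T_xJ^a(\xi_M(x))=0$; infinitesimal equivariance of $J$ (which gives $T_xJ^a(\xi_M(x))=-{\rm ad}^*_\xi\mu_a$ on $J^{-1}(\mu)$) then yields $\xi\in\lag_{\mu_a}$, so $\xi_M(x)\in\tilde\lag_{\mu_a}|_x$ and $v\in\tilde\lag_{\mu_a}|_x+\ker\omega^a|_x$. As $a$ was arbitrary and $v\in T_xJ^{-1}(\mu)$, condition (A2) delivers $v\in\tilde\lag_\mu|_x$, which establishes~\eqref{eq:nondegenerate}; Theorem~\ref{thm:polysymred} then produces the unique reduced polysymplectic structure $\omega^a_\mu$. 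I expect the only genuinely delicate point to be this passage of single-form orthogonality through (A1): one must use that for a degenerate $\omega^a$ the double orthogonal $((\tilde\lag_x)^{\omega^a})^{\omega^a}$ absorbs $\ker\omega^a|_x$ — precisely why the $\ker\omega^a$ terms appear in (A1)--(A2), and why the subtlety flagged in the excerpt (that $S_x\subsetneq S_x^{\omega\omega}$ in general) does not obstruct the argument here. A more self-contained alternative would be to imitate symplectic Marsden--Weinstein directly — descend each $j_\mu^*\omega^a$ to $M_\mu$, check closedness, and verify $\cap_a\ker\omega^a_\mu=\{0\}$ — but this merely reproves what Theorem~\ref{thm:polysymred} already packages, so I would not take that route.
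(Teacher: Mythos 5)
Your proof is correct; every step checks out: the identification $\ker T_xJ^a=(\tilde\lag_x)^{\omega^a}$, the dualization of (A1) through $(U_1+U_2)^{\omega^a}=U_1^{\omega^a}\cap U_2^{\omega^a}$ together with the single-form biorthogonal identity $((\tilde\lag_x)^{\omega^a})^{\omega^a}=\tilde\lag_x+\ker\omega^a|_x$ (which is exactly Lemma~\ref{lem:presymplectic} applied pointwise), and the equivariance step that upgrades the chosen $\xi$ to an element of $\lag_{\mu_a}$ because $T_xJ^a(\xi_M(x))=0$ forces $\mathrm{ad}^*_\xi\mu_a=0$. The route is, however, genuinely different from the paper's. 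You make essential use of hypothesis (A1), whereas the paper's effective proof of this statement is Proposition~\ref{pro:A2enough}, which derives Blacker's condition~\eqref{eq:nondegenerate} from (A2) alone and thereby shows that (A1) is redundant. Concretely, the inclusion you extract from (A1) after dualizing --- that $\tilde\lag_x^\omega\cap\tilde\lag_x^{\omega\omega}\subseteq\tilde\lag_{\mu_a}|_x+\ker\omega^a|_x$ for each $a$ --- is precisely the inclusion the paper asserts without invoking (A1), by identifying $\tilde\lag_{\mu_a}|_x+\ker\omega^a|_x$ with $\ker j_{\mu_a}^*\omega^a|_x$ via Lemma~\ref{lem:presymplectickernel} and comparing the kernels of the pulled-back forms on $(J^a)^{-1}(\mu_a)$ and on $J^{-1}(\mu)$. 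What your approach buys is a clean, fully justified path to that inclusion (the annihilator of an equality of subspaces is an equality, and equivariance handles the passage from $\lag$ to $\lag_{\mu_a}$), at the price of consuming the hypothesis (A1) that the paper's analysis shows can be dropped; conversely, the paper's route is more economical in hypotheses but rests on the kernel computation of Lemma~\ref{lem:presymplectickernel} rather than on any structural use of (A1). Since the theorem as stated assumes both (A1) and (A2), your argument is a complete and valid proof of it.
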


The above result says that condition~(\ref{eq:nondegenerate}) must follows from $(A1)$ and $(A2)$. As our first main result, we will show that the condition $(A1)$ is in fact redundant. We need to state a few lemmas about presymplectic vector spaces and manifolds before we can show this.

\begin{lemma}\label{lem:presymplectic} Let $(V,\Omega)$ be a presymplectic vector space and $S\subset V$ a subspace. Then: 
 \[
S^{\Omega\, \Omega}=S+ \ker\Omega. 
 \]
\end{lemma}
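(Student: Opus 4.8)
The plan is to prove the two inclusions $S^{\Omega\Omega} \supseteq S + \ker\Omega$ and $S^{\Omega\Omega} \subseteq S + \ker\Omega$ separately, working in the finite-dimensional vector space $V$ with the presymplectic (i.e.\ closed but possibly degenerate, here just skew-symmetric) bilinear form $\Omega$. Recall the notation $S^\Omega = \{v \in V \st \Omega(v, S) = 0\}$, so that in particular $\ker\Omega = V^\Omega$ and $\ker\Omega \subseteq S^\Omega$ for every $S$.

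First I would dispatch the easy inclusion. Clearly $S \subseteq S^{\Omega\Omega}$ always (if $s \in S$ and $v \in S^\Omega$ then $\Omega(s,v) = -\Omega(v,s) = 0$), and $\ker\Omega = V^\Omega \subseteq (S^\Omega)^\Omega = S^{\Omega\Omega}$ since $S^\Omega \subseteq V$. As $S^{\Omega\Omega}$ is a subspace, it follows that $S + \ker\Omega \subseteq S^{\Omega\Omega}$.

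The substantive direction is $S^{\Omega\Omega} \subseteq S + \ker\Omega$, and the natural tool is a dimension count. The standard fact for a (possibly degenerate) skew form on a finite-dimensional space is $\dim S^\Omega = \dim V - \dim S + \dim(S \cap \ker\Omega)$; equivalently, $\Omega$ induces a nondegenerate pairing between $V/\ker\Omega$ and itself under which $S^\Omega/\ker\Omega$ is the annihilator of the image of $S$, giving $\dim S^\Omega - \dim\ker\Omega = \dim V - \dim\ker\Omega - (\dim S - \dim(S\cap\ker\Omega))$. Applying this identity twice — once to $S$ and once to $S^\Omega$, and using $\ker\Omega \subseteq S^\Omega$ so that $S^\Omega \cap \ker\Omega = \ker\Omega$ — yields
\[
\dim S^{\Omega\Omega} = \dim V - \dim S^\Omega + \dim\ker\Omega = \dim S - \dim(S \cap \ker\Omega) + \dim\ker\Omega = \dim(S + \ker\Omega).
\]
Since we have already shown $S + \ker\Omega \subseteq S^{\Omega\Omega}$ and the two subspaces have equal finite dimension, they coincide.

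The main obstacle is simply establishing the dimension formula $\dim S^\Omega = \dim V - \dim S + \dim(S\cap\ker\Omega)$ cleanly in the presymplectic (degenerate) setting, since one cannot invoke nondegeneracy of $\Omega$ directly. The clean route is to pass to the quotient $\bar V = V/\ker\Omega$, on which $\Omega$ descends to an honest symplectic form $\bar\Omega$; then $S^\Omega$ is the preimage under $V \to \bar V$ of $(\bar S)^{\bar\Omega}$ where $\bar S$ is the image of $S$, one has $\dim(\bar S)^{\bar\Omega} = \dim\bar V - \dim\bar S$ by nondegeneracy, and $\dim\bar S = \dim S - \dim(S\cap\ker\Omega)$. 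Unwinding gives the formula. This reduction to the symplectic quotient is routine but is the only place genuine care is needed; everything else is bookkeeping with the two inclusions above.
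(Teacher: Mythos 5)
Your proof is correct. Both your argument and the paper's hinge on the same reduction: pass to the symplectic quotient $\overline{V}=V/\ker\Omega$ and use the fact that $S^{\Omega}$ is the full preimage of $\overline{S}^{\,\overline{\Omega}}$. Where you diverge is in the final step. The paper applies the preimage identity twice and invokes biorthogonality $\overline{S}^{\,\overline{\Omega}\,\overline{\Omega}}=\overline{S}$ in the nondegenerate quotient, so that $S^{\Omega\,\Omega}=\pi^{-1}(\overline{S})=S+\ker\Omega$ drops out in one line, with no need to verify the inclusion $S+\ker\Omega\subseteq S^{\Omega\,\Omega}$ separately. You instead establish that easy inclusion directly and then close the gap with a dimension count, using the codimension formula $\dim S^{\Omega}=\dim V-\dim S+\dim(S\cap\ker\Omega)$ applied twice. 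The two routes rest on essentially equivalent facts about symplectic vector spaces (the biorthogonality identity and the codimension formula each imply the other given the trivial inclusion), so neither is more general; the paper's version is slightly more economical, while yours makes the underlying dimension bookkeeping explicit. One implicit hypothesis in your argument is finite-dimensionality of $V$, which is harmless here since the lemma is only ever applied to tangent spaces of finite-dimensional manifolds, but it is worth noting that the paper's preimage argument does not need it.
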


\begin{proof}  We construct the quotient space 
\[
\pi\colon V\to \overline{V}=V/\ker\Omega, 
\] 
whose elements are denoted $\overline{v}=\pi(v)$, $\overline{w}=\pi(w)$, etc. It is a symplectic vector space when furnished with the 2-form $\overline{\Omega}(\overline{v},\overline{w})=\Omega(v,w)$, naturally induced by $\Omega$. The symplectic orthogonal of a subspace $\overline{S}=\pi(S)\subset \overline{V}$ will be denoted $\overline{S}^{\overline\Omega}$. One easily verifies that the following property holds:
\begin{equation}\label{eq:quotientrelation2}
\overline{S^{\Omega}}=\overline{S}^{\,\overline\Omega}. 
\end{equation}
We now claim that 
\begin{equation}\label{eq:quotientrelation}
S^{\Omega}=\pi^{-1}\Big(\overline{S}^{\,\overline\Omega}\Big),  
\end{equation}
with $\overline{S}=\pi(S)$. Indeed, $v\in S^{\Omega}$ if $\Omega(v,S)=0$, and this implies $\overline{\Omega}(\overline{v},\overline{S})=0$; the other inclusion is similar. Recall that, for any subspace $\overline{S}\subset \overline{V}$, we have $\overline{S}^{\,\overline{\Omega}\,\overline{\Omega}}=\overline{S}$ because $\overline\Omega$ is symplectic. If we apply property~\eqref{eq:quotientrelation} to the subspace $S^{\Omega}$ we get 
\[
S^{\Omega \,\Omega}=\pi^{-1}\Big((\overline{S^{\Omega}})^{\,\overline\Omega}\Big) =\pi^{-1}\Big(\overline{S}^{\,\overline\Omega\,\overline\Omega}\Big)=\pi^{-1}(\overline{S})=S+\ker \Omega,
\]
as desired.
\end{proof}

The second lemma is a version of Proposition~4 of \cite{Reductionpresymplectic}, with the difference that we do not require the closed 2-form $\Omega$ to have constant rank. We include a proof for completeness.

\begin{lemma}\label{lem:presymplectickernel} Let $(M,\Omega)$ be a presymplectic manifold with presymplectic action and momentum map $J$. Let $j_\mu\colon J^{-1}(\mu)\to M$ be the natural inclusion. Then, we have
\[
\ker \left.j_\mu^*\Omega\right|_x=\ker \Omega_x+\left.\tilde\lag_{\mu}\right|_x,\quad x\in J^{-1}(\mu).
\]
\end{lemma}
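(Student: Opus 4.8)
The plan is to run, pointwise on each tangent space, the same computation performed above in the discussion following Theorem~\ref{thm:polysymred}, feeding in Lemma~\ref{lem:presymplectic} at the one step where the degeneracy of $\Omega$ matters. First I would note that for $x\in J^{-1}(\mu)$ one has $T_xJ^{-1}(\mu)=\tilde\lag_x^{\Omega}$: since $i_{\xi_M}\Omega=dJ_\xi$, a vector $v$ lies in $\ker T_xJ$ exactly when $\Omega_x(\xi_M(x),v)=0$ for every $\xi\in\lag$, i.e.\ when $v\in\tilde\lag_x^{\Omega}$. Hence
\[
\ker\left.j_\mu^*\Omega\right|_x=T_xJ^{-1}(\mu)\cap\bigl(T_xJ^{-1}(\mu)\bigr)^{\Omega}=\tilde\lag_x^{\Omega}\cap\tilde\lag_x^{\Omega\Omega},
\]
and applying Lemma~\ref{lem:presymplectic} to the presymplectic vector space $(T_xM,\Omega_x)$ with $S=\tilde\lag_x$ gives $\tilde\lag_x^{\Omega\Omega}=\tilde\lag_x+\ker\Omega_x$, so that $\ker\left.j_\mu^*\Omega\right|_x=\tilde\lag_x^{\Omega}\cap(\tilde\lag_x+\ker\Omega_x)$. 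It then remains only to identify this intersection with $\ker\Omega_x+\tilde\lag_\mu|_x$.

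To that end I would first record the presymplectic analogue of~\eqref{extranumber}, namely $\tilde\lag_\mu|_x=\tilde\lag_x\cap T_xJ^{-1}(\mu)$: differentiating the equivariance relation $J(\Phi_{\exp(t\xi)}(x))={\rm Ad}^*_{\exp(-t\xi)}(J(x))$ at $t=0$ shows that, for $x\in J^{-1}(\mu)$, the element $dJ_x(\xi_M(x))=\frac{d}{dt}\big|_{0}{\rm Ad}^*_{\exp(-t\xi)}\mu\in\lag^*$ vanishes precisely when $\xi\in\lag_\mu$, so $\xi_M(x)\in T_xJ^{-1}(\mu)$ if and only if $\xi\in\lag_\mu$. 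Combined with $T_xJ^{-1}(\mu)=\tilde\lag_x^{\Omega}$ this gives $\tilde\lag_\mu|_x=\tilde\lag_x\cap\tilde\lag_x^{\Omega}$, and the two inclusions are now immediate. For ``$\supseteq$'': $\ker\Omega_x\subseteq\tilde\lag_x^{\Omega}$ and $\ker\Omega_x\subseteq\tilde\lag_x+\ker\Omega_x$, while $\tilde\lag_\mu|_x=\tilde\lag_x\cap\tilde\lag_x^{\Omega}$ lies in both $\tilde\lag_x^{\Omega}$ and $\tilde\lag_x+\ker\Omega_x$. For ``$\subseteq$'': any $v\in\tilde\lag_x^{\Omega}\cap(\tilde\lag_x+\ker\Omega_x)$ can be written $v=w+u$ with $w\in\tilde\lag_x$ and $u\in\ker\Omega_x$; since $\ker\Omega_x\subseteq\tilde\lag_x^{\Omega}$ we get $w=v-u\in\tilde\lag_x^{\Omega}$, hence $w\in\tilde\lag_x\cap\tilde\lag_x^{\Omega}=\tilde\lag_\mu|_x$ and $v\in\ker\Omega_x+\tilde\lag_\mu|_x$.

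The argument is essentially bookkeeping with presymplectic orthogonals, so I do not expect a genuine obstacle; the points deserving care are the standing hypotheses. I would make explicit that $J^{-1}(\mu)$ is an embedded submanifold with $T_xJ^{-1}(\mu)=\ker T_xJ$ (as holds when $\mu$ is a regular value), which is used in the first paragraph, and that the momentum map is equivariant, on which the identification $\tilde\lag_\mu|_x=\tilde\lag_x\cap T_xJ^{-1}(\mu)$ rests. In contrast with Proposition~4 of~\cite{Reductionpresymplectic}, no constant-rank assumption on $\Omega$ is needed here: whatever degeneracy $\Omega_x$ may have is absorbed, point by point and at the linear level, by Lemma~\ref{lem:presymplectic}.
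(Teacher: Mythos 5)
Your proof is correct and follows essentially the same route as the paper's: identify $T_xJ^{-1}(\mu)=\tilde\lag_x^{\Omega}$, write $\ker j_\mu^*\Omega|_x=\tilde\lag_x^{\Omega}\cap\tilde\lag_x^{\Omega\Omega}$, apply Lemma~\ref{lem:presymplectic}, and finish with the modular-law step using $\ker\Omega_x\subset\tilde\lag_x^{\Omega}$ and $\tilde\lag_\mu|_x=\tilde\lag_x\cap\tilde\lag_x^{\Omega}$. You merely spell out in more detail the two inclusions and the equivariance argument that the paper leaves implicit.
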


\begin{proof} We first observe that \[
\ker \left.j_\mu^*\Omega\right|_x=\{v\in T_xJ^{-1}(\mu) \st \Omega_x(v,T_xJ^{-1}(\mu))=0\} = T_x(J^{-1}(\mu)) \cap (T_xJ^{-1}(\mu))^\Omega.   
\]
With the same reasoning as before, we get from the definition of the momentum map that
$
T_xJ^{-1}(\mu)=\tilde \lag_x^\Omega  
$
and therefore \[
\ker \left.j_\mu^*\Omega\right|_x= \tilde\lag_x^\Omega\cap \tilde\lag_x^{\Omega\,\Omega}. 
\]
If we now use Lemma~\ref{lem:presymplectic}, we see that 
\[
\ker \left.j_\mu^*\Omega\right|_x=  \tilde\lag_x^\Omega\cap (\lag_x+\ker\Omega_x)=(\tilde\lag_x^\Omega\cap \lag_x)+\ker\Omega_x =  \left.\tilde\lag_{\mu}\right|_x+\ker\Omega_x.
\]
The second equality follows from  $\ker\Omega_x\subset S_x^\Omega$ for any subspace $S_x$, and from $\tilde\lag_\mu = \tilde\lag_x^\Omega\cap \tilde\lag_x$. 
\end{proof}

\begin{proposition} \label{pro:A2enough} If, in the situation of Theorem~\ref{thm:polysymred2}, the condition $(A2)$ is satisfied, then so is also condition (\ref{eq:nondegenerate}). In that case, the reduced space $M_\mu=J^{-1}(\mu)/G_\mu$ admits the unique polysymplectic structure $\omega^a_\mu$, satisfying $\pi_\mu^*\omega^a_\mu=j_\mu^*\omega^a$.  
\end{proposition}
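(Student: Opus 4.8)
The plan is to show that condition (A2), together with the already-established identities for $\ker j_\mu^*\omega^a$ and $T_xJ^{-1}(\mu)$, directly yields condition~\eqref{eq:nondegenerate}; in particular, the claim is that (A1) plays no role. Recall from the discussion after Theorem~\ref{thm:polysymred} that, for $x\in J^{-1}(\mu)$, one has $T_xJ^{-1}(\mu)=\tilde\lag_x^\omega$ and $\cap_{a=1}^k\ker(j_\mu^*\omega^a)|_x=\tilde\lag_x^\omega\cap\tilde\lag_x^{\omega\omega}$, so that~\eqref{eq:nondegenerate} is equivalent to $\left.\tilde\lag_\mu\right|_x=\cap_{a=1}^k\ker(j_\mu^*\omega^a)|_x$. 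Hence it suffices to rewrite the right-hand side of (A2) as this intersection of kernels.

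First I would apply Lemma~\ref{lem:presymplectickernel} one form at a time: each $(M,\omega^a)$ is a presymplectic manifold (the defining condition for a polysymplectic structure does not require each $\omega^a$ to be nondegenerate), the action is $\omega^a$-presymplectic, and $J^a$ is a momentum map for it in the presymplectic sense. Strictly speaking the level set appearing in Lemma~\ref{lem:presymplectickernel} is $(J^a)^{-1}(\mu_a)$ rather than $J^{-1}(\mu)$, so the cleanest route is to use the \emph{vector-space} content of that lemma's proof: for any subspace $S_x$ with $\ker\omega^a_x\subset S_x^{\omega^a}$, Lemma~\ref{lem:presymplectic} gives $S_x^{\omega^a\omega^a}=S_x+\ker\omega^a_x$. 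Taking $S_x=T_xJ^{-1}(\mu)=\tilde\lag_x^\omega$ and intersecting with $T_xJ^{-1}(\mu)$, one obtains
\[
\ker(j_\mu^*\omega^a)|_x \;=\; T_xJ^{-1}(\mu)\cap\big(T_xJ^{-1}(\mu)\big)^{\omega^a} \;=\; \tilde\lag_x^\omega\cap\big(\tilde\lag_x^\omega+\ker\omega^a_x\big).
\]
Now I would observe that since $\tilde\lag_x^\omega=T_xJ^{-1}(\mu)$ and $\left.\tilde\lag_{\mu_a}\right|_x\subset\tilde\lag_x\cap T_xJ^{-1}(\mu)\subset T_xJ^{-1}(\mu)$, adding $\left.\tilde\lag_{\mu_a}\right|_x$ inside the second factor changes nothing modular the intersection with $T_xJ^{-1}(\mu)$; more precisely, $T_xJ^{-1}(\mu)\cap\big(\left.\tilde\lag_{\mu_a}\right|_x+\ker\omega^a_x\big)=T_xJ^{-1}(\mu)\cap\big(\tilde\lag_x^\omega+\ker\omega^a_x\big)$ follows from the modular law together with $\left.\tilde\lag_{\mu_a}\right|_x\subset T_xJ^{-1}(\mu)$ (and in fact $\left.\tilde\lag_{\mu_a}\right|_x\subset\ker(j_\mu^*\omega^a)|_x$). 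Therefore the $a$-th factor on the right of (A2), namely $\big(\left.\tilde\lag_{\mu_a}\right|_x+\ker\omega^a_x\big)\cap T_xJ^{-1}(\mu)$, equals exactly $\ker(j_\mu^*\omega^a)|_x$.

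Intersecting over $a=1,\dots,k$, the right-hand side of (A2) becomes $\cap_{a=1}^k\ker(j_\mu^*\omega^a)|_x = \tilde\lag_x^\omega\cap\tilde\lag_x^{\omega\omega}$, so (A2) reads precisely $\left.\tilde\lag_\mu\right|_x=\tilde\lag_x^\omega\cap\tilde\lag_x^{\omega\omega}$, which is~\eqref{eq:nondegenerate}. Then Theorem~\ref{thm:polysymred} applies verbatim to give the reduced polysymplectic structure $\omega^a_\mu$ with $\pi_\mu^*\omega^a_\mu=j_\mu^*\omega^a$, establishing the Proposition. I expect the only delicate point to be the bookkeeping in the previous paragraph: verifying that one may freely insert or delete the term $\left.\tilde\lag_{\mu_a}\right|_x$ inside the parenthesis of (A2) before intersecting with $T_xJ^{-1}(\mu)$ — this is where the inclusion $\left.\tilde\lag_{\mu_a}\right|_x\subset T_xJ^{-1}(\mu)$ and the modular law must be used carefully — together with confirming that Lemma~\ref{lem:presymplectickernel}'s argument indeed goes through for the level set $J^{-1}(\mu)$ of the \emph{full} momentum map and not just for $(J^a)^{-1}(\mu_a)$, which is immediate once one notes $T_xJ^{-1}(\mu)=\tilde\lag_x^\omega$ was derived using all $k$ forms simultaneously.
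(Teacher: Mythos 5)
Your strategy---rewrite the right-hand side of (A2) as $\cap_{a=1}^k\ker(j_\mu^*\omega^a)|_x=\tilde\lag_x^\omega\cap\tilde\lag_x^{\omega\omega}$ so that (A2) becomes literally condition~\eqref{eq:nondegenerate}---breaks down at exactly the point you flag as delicate. Write $S^{\omega^a}$ for the orthogonal with respect to the single form $\omega^a$, so that $S^\omega=\cap_b S^{\omega^b}$. Lemma~\ref{lem:presymplectic} computes the double orthogonal with respect to \emph{one} form: it gives $\bigl(T_x(J^a)^{-1}(\mu_a)\bigr)^{\omega^a}=(\tilde\lag_x^{\omega^a})^{\omega^a}=\tilde\lag_x+\ker\omega^a_x$, where $(J^a)^{-1}(\mu_a)$ is the level set of the single component $J^a$. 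It does \emph{not} compute $\bigl(T_xJ^{-1}(\mu)\bigr)^{\omega^a}=(\tilde\lag_x^{\omega})^{\omega^a}$, because $\tilde\lag_x^\omega$ is not an $\omega^a$-orthogonal; since $\tilde\lag_x^\omega\subset\tilde\lag_x^{\omega^a}$ one only gets $(\tilde\lag_x^\omega)^{\omega^a}\supset\tilde\lag_x+\ker\omega^a_x$, in general strictly. (As literally written, your display $\ker(j_\mu^*\omega^a)|_x=\tilde\lag_x^\omega\cap(\tilde\lag_x^\omega+\ker\omega^a_x)$ collapses to $\tilde\lag_x^\omega$, i.e.\ to $j_\mu^*\omega^a=0$, which already signals the problem.) Consequently the $a$-th factor of (A2) satisfies only the inclusion $\bigl(\left.\tilde\lag_{\mu_a}\right|_x+\ker\omega^a_x\bigr)\cap T_xJ^{-1}(\mu)\subset\ker(j_\mu^*\omega^a)|_x$, not equality; intersecting over $a$ and invoking (A2) then yields only $\left.\tilde\lag_\mu\right|_x\subset\tilde\lag_x^\omega\cap\tilde\lag_x^{\omega\omega}$, which is the containment that holds unconditionally, so nothing is gained. (A secondary slip: $\left.\tilde\lag_{\mu_a}\right|_x=\tilde\lag_x\cap\tilde\lag_x^{\omega^a}$ need not lie in $T_xJ^{-1}(\mu)=\tilde\lag_x^\omega$, so the modular-law manipulation is not available as stated either.)

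The gap is not cosmetic, and cannot be closed by reversing the inclusion. Take $M=\R^4$, $\omega^1=dx^1\wedge dx^2$, $\omega^2=dx^1\wedge dx^3+dx^2\wedge dx^4$ (a polysymplectic pair, since $\omega^2$ is nondegenerate), with $G=\R$ acting by translation of $x^1$; then $J=(x^2,x^3)$, every $\mu$ is regular, and at $\mu=(0,0)$ one has $\left.\tilde\lag_\mu\right|_x=\langle\partial_1\rangle$, $T_xJ^{-1}(\mu)=\langle\partial_1,\partial_4\rangle$, $\ker\omega^1_x=\langle\partial_3,\partial_4\rangle$, $\ker\omega^2_x=0$. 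Condition (A2) reads $\langle\partial_1\rangle=\langle\partial_1,\partial_3,\partial_4\rangle\cap\langle\partial_1\rangle\cap\langle\partial_1,\partial_4\rangle$ and holds, yet $\tilde\lag_x^\omega\cap\tilde\lag_x^{\omega\omega}=\langle\partial_1,\partial_4\rangle\neq\left.\tilde\lag_\mu\right|_x$, and indeed $j_\mu^*\omega^1=j_\mu^*\omega^2=0$, so the one-dimensional quotient $M_\mu$ carries no polysymplectic structure. Thus the identification you want genuinely fails, and the same example shows that the unproved inclusion asserted at the corresponding step of the paper's own argument (that the intersection in (A2) \emph{contains} $\tilde\lag_x^\omega\cap\tilde\lag_x^{\omega\omega}$) fails as well: the orthogonality conditions in (A2) are imposed over the larger spaces $T_x(J^a)^{-1}(\mu_a)$ and are therefore \emph{stronger}, not weaker, than those defining $\ker(j_\mu^*\omega^a)|_x$. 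Note that (A1) fails for $a=2$ in this example, which suggests that (A1) is doing real work in Theorem~\ref{thm:polysymred2} and cannot be discarded by an argument of this shape.
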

 
\begin{proof} We first fix $a=1$ and equip the manifold $M$ with the closed (but possibly degenerate) 2-form $\omega^1$. The pair $(M,\omega^1)$ is then a presymplectic manifold and it is clear from the definitions that $\Phi_g$ is a presymplectic action with momentum map $J^1\colon M\to\lag^*$ (the first component of $J$). We may therefore apply Lemma~\ref{lem:presymplectickernel} which implies that:
\[
\ker \left.j_{\mu_1}^*\omega^1\right|_x=\ker \left.{\omega^1}\right|_x+\left.\tilde\lag_{\mu_1}\right|_x. 
\]
Here, we write $j_{\mu_1}$  for the inclusion $(J^1)^{-1}(\mu_1)\to M$.  Note that $\mu_1$ is regular for $J^1$ because $\mu$ is regular for $J$.  We can make the same reasoning for each $a=1,\dots,k$, to obtain similar identities for each $a$. 

For convenience, let us denote:
\[
(\star)\equiv \cap_{a=1}^k \big(\left.\tilde\lag_{\mu_a}\right|_x+\left.\ker\omega^a\right|_x\big)\cap T_x(J^{-1}(\mu)).  
\]
Combining all of the identities for $\ker \left.j_{\mu_a}^*\omega^a\right|_x$ we find:
\begin{align*}
(\star)&=\cap_{a=1}^k\big(
\ker \left.j_{\mu_a}^*\omega^a\right|_x\big)\cap T_x(J^{-1}(\mu))\\
&=\big\{v\in T_xJ^{-1}(\mu) \st \omega^1_x\big(v,T_x(J^1)^{-1}(\mu_1)\big)=\dots=\omega^k_x\big(v,T_x(J^k)^{-1}(\mu_k)\big)=0\big\}\\
&\supset \big\{v\in T_xJ^{-1}(\mu) \st \omega^1_x(v,T_xJ^{-1}(\mu))=\dots=\omega^k_x(v,T_xJ^{-1}(\mu))=0\big\}\\
&=\cap_{a=1}^k\big(\left.j_\mu^*\omega^a\right|_x\big)=\tilde\lag_x^\omega \cap \tilde\lag_x^{\omega\omega}.
\end{align*}
We have used that $(J^1)^{-1}(\mu_1)\supset J^{-1}(\mu)$ (and similar for each $a$). We conclude that, if $(A2)$ holds, we   have the inclusion
\[
\left.\tilde\lag_\mu\right|_x \supset \tilde\lag_x^\omega \cap \tilde\lag_x^{\omega\omega}. 
\]
We had already mentioned  that $\left.\tilde\lag_\mu\right|_x= \tilde\lag_x\cap\tilde\lag_x^{\omega}$. Since $\tilde\lag_x\subset\tilde\lag_x^{\omega\omega}$, we always have $
\left.\tilde\lag_\mu\right|_x \subset \tilde\lag_x^\omega \cap \tilde\lag_x^{\omega\omega}$. With that, the statement of the proposition easily follows.
\end{proof}

It is instructive to say a few words about the reasons why the condition $(A1)$ appears in the work \cite{Polyreduction}. The authors rely on the following Lemma.

\begin{lemma}[Lemma 3.7 in \cite{Polyreduction}] \label{basiclemmaJC} Let $\Pi_a: V \to V_a$ be $k$ epimorphisms of real vector spaces of real dimension.  Assume there exists a symplectic structure $\Omega_a$ on each of the $V_a$ and $\cap_{a=1}^k \ker \Pi_a = \{0\}$, then $(V, \omega^a = \Pi^*\Omega_a)$ is a polysymplectic vector space.   
\end{lemma}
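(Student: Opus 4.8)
The plan is to reduce everything to a single computation of the kernel of each pulled-back form. Recall that to verify $(V,\omega^1,\dots,\omega^k)$ with $\omega^a=\Pi_a^*\Omega_a$ is a polysymplectic vector space, one only needs that each $\omega^a$ is a skew-symmetric bilinear (hence, at the linear level, automatically ``closed'') $2$-form --- immediate, since the pullback of a $2$-form is a $2$-form --- together with the nondegeneracy condition $\cap_{a=1}^k\ker\omega^a=\{0\}$. So the first, and essentially only, step is to identify $\ker\omega^a$ for each fixed $a$.

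I claim $\ker\omega^a=\ker\Pi_a$. The inclusion $\ker\Pi_a\subset\ker\omega^a$ is clear from the definition of the pullback. For the converse, take $v\in\ker\omega^a$; then $0=\omega^a(v,w)=\Omega_a(\Pi_a v,\Pi_a w)$ for every $w\in V$. This is where the hypothesis that $\Pi_a$ is an epimorphism enters: as $w$ ranges over $V$, the image $\Pi_a w$ ranges over all of $V_a$, so $\Omega_a(\Pi_a v,u)=0$ for every $u\in V_a$. Since $\Omega_a$ is symplectic, hence nondegenerate, this forces $\Pi_a v=0$, i.e.\ $v\in\ker\Pi_a$. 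With the claim in hand the conclusion is immediate: $\cap_{a=1}^k\ker\omega^a=\cap_{a=1}^k\ker\Pi_a=\{0\}$ by assumption, so $(V,\omega^a)$ is a polysymplectic vector space.

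There is no substantive obstacle here; the whole content of the lemma is the observation that the kernel of the pullback of a nondegenerate form along a surjection equals the kernel of that surjection. The one point worth flagging is that surjectivity of the $\Pi_a$ is essential: without it one would only get $\ker\Pi_a\subset\ker\omega^a$, the intersection of the $\ker\omega^a$ could be strictly larger than $\cap_a\ker\Pi_a$, and the nondegeneracy of the family could fail even when $\cap_a\ker\Pi_a=\{0\}$. (Finite-dimensionality, though assumed in the statement, plays no role in this argument.)
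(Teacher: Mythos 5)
Your proof is correct. The paper states this lemma only as a citation of Lemma~3.7 of \cite{Polyreduction} and does not reproduce a proof, so there is nothing to compare against; your argument --- that surjectivity of $\Pi_a$ together with nondegeneracy of $\Omega_a$ gives $\ker(\Pi_a^*\Omega_a)=\ker\Pi_a$, whence $\cap_{a=1}^k\ker\omega^a=\cap_{a=1}^k\ker\Pi_a=\{0\}$ --- is the standard and complete one, and your closing remark that surjectivity is essential for this implication (though, as the example the paper gives immediately after the lemma shows, not for the conclusion itself) is also accurate.
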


Given a polysymplectic structure on $M$, they then show that there exist a symplectic structure on the  vector space 
\[
V_a = \frac{\left( \frac{\ker T_xJ^a}{\ker \omega^a(x)} \right)}{\left.{\tilde\lag}_{\mu_a}\right|_x}.
\]
The proof of this property relies essentially on the same kind of arguments as our Lemma~\ref{lem:presymplectic}. By chasing diagrams, these spaces can be related to the vector space $V=T_{\pi_\mu(x)}(J^{-1}(\mu)/G_{\mu})$ by means of a certain linear map $\Pi_a={\tilde\pi}^a_x: V \to V_a$. The line of thought in \cite{Polyreduction} is the following: if this linear map satisfies the  conditions of Lemma~\ref{basiclemmaJC}, then we have found a polysymplectic structure on the vector spaces $T_{\pi_\mu(x)}(J^{-1}(\mu)/G_{\mu})$ (for each $x$), and thus also on the manifold $M_\mu$. The condition $(A1)$ is the translation, to this context, of the condition that the map $\Pi_a$ is an epimorphism, and the condition $(A2)$ is equivalent with the condition $\cap_{a=1}^k \ker \Pi_a = \{0\}$.

Proposition~\ref{pro:A2enough}, however, sheds some new light on the matter. Since we have now shown, in a direct way, that $(A2)$ is enough to ensure that we have a polysymplectic structure on $M_\mu$, there is actually no need to invoke the Lemma~\ref{basiclemmaJC} to pullback and glue the symplectic forms into a polysymplectic form. The sufficient condition $(A1)$ can be removed from Theorem~\ref{thm:polysymred2}.

In fact, the next example shows that the situation described in Lemma~\ref{basiclemmaJC} may not always be representative for the situation at hand. We show that there exist symplectic structures on $V_a$ and a polysymplectic structure on $V$, pulled back from $V_a$ by means of linear maps $\Pi_a$, even though the maps $\Pi_a$ are not epimorphisms.

\begin{example} We consider  $V=V_1=\R^4$ with basis $\{e_i\}_{i=1,\ldots 4}$ and dual basis $\{e^i\}_{i=1,\ldots 4}$ and $V_2=\R^2$ with basis $\{f_i\}_{i=1, 2}$ and dual basis $\{f^i\}_{i=1, 2}$. We consider the polysymplectic structures 
\[
\omega^1=e^1\wedge e^2,\qquad  \omega^2=e^3\wedge e^4,
\]
on $V$ and the symplectic structures   $\Omega_1=e^1\wedge e^2+e^3\wedge e^4$ on $V_1$ and $\Omega_2=f^1\wedge f^2$ on $V_2$. For the maps $\Pi_1$ and $\Pi_2$, we take
\[
\Pi_1(v^1,v^2,v^3,v^4)=(v^1,v^2,0,0)\Rightarrow \Pi_1^*\Omega_1=e^1\wedge e^2=\omega^1, 
\]
and
\[
\Pi_2(v^1,v^2,v^3,v^4)=(v^3,v^4)\Rightarrow \Pi_2^*\Omega_2=e^3\wedge e^4=\omega^2. 
\]  
$\Pi_2$ is surjective, but $\Pi_1$ is not. Also, note that $\ker \Pi_1\cap \ker\Pi_2=\{(0,0,0,0)\}$.
\end{example}

So far, we have only discussed cosymplectic reduction at the level of the geometric structures. There is also a part of the cosymplectic reduction theorem that deals with  Hamilton's equations.

A $k$-vector field $\pmb{X}=(X_1,\dots,X_k)$ is $G$-invariant if each component of $\pmb{X}$ is $G$-invariant as a vector field, or
\[
(T\Phi_g)X_a=X_a,\qquad a=1,\dots,k. 
\]
When $X_a$ is tangent to $J^{-1}(\mu)$ for each $a$, then the restriction of $\pmb{X}$ to $J^{-1}(\mu)$ defines a $k$-vector field on $J^{-1}(\mu)$. We will use the notation $\pmb{X}_{\mu}$ for this vector field. Its components are given by $(X_\mu)_a=\left.X_a\right|_{J^{-1}(\mu)}$. If $H\colon M\to \R$ is a $G$-invariant Hamiltonian, the reduced Hamiltonian $h_\mu\colon M_\mu\to\R$ is uniquely defined by the relation $\pi_\mu^*h_\mu=H_\mu$, with $H_\mu=j_\mu^*H$ the restriction of the Hamiltonian to the level set of $\mu$. As far as the dynamics is concerned, the following important consequence of Theorem~\ref{thm:polysymred} can be found in~\cite{Polyreduction}:

\begin{theorem}\label{thm:polysymdyn} Let $(M,\omega^a,\Phi_g,J)$ be a polysymplectic Hamiltonian $G$-space, and $\mu\in\lagdk$ a regular value of $J$. Assume that $G_\mu$ acts freely and properly on $J^{-1}(\mu)$ and that $(A1)$ and $(A2)$ hold. Let $H\colon M\to\R$ be a $G$-invariant Hamiltonian. Let $\pmb{X}$ be a solution of~\eqref{eq:k-Sym} for the Hamiltonian $H$ which is tangent to $J^{-1}(\mu)$ and $G_\mu$-invariant. Then, the projection $\overline{\pmb{X}}_{\mu}$ of $\pmb{X}_\mu$ on $M_\mu$ is a solution of~\eqref{eq:k-Sym} for the reduced Hamiltonian $h_\mu$. 
\end{theorem}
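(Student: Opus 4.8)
The plan is to deduce the statement directly from the defining relations of Theorem~\ref{thm:polysymred}, namely $\pi_\mu^*\omega^a_\mu=j_\mu^*\omega^a$, together with $\pi_\mu^*h_\mu=H_\mu=j_\mu^*H$ and the naturality of the interior product under related vector fields. First I would make precise the object $\overline{\pmb{X}}_\mu$. Since $\pmb{X}$ is tangent to $J^{-1}(\mu)$, the restriction $\pmb{X}_\mu=\pmb{X}|_{J^{-1}(\mu)}$ is a genuine $k$-vector field on $J^{-1}(\mu)$ and each $(X_\mu)_a$ is $j_\mu$-related to $X_a$. Because $\pmb{X}$ is $G_\mu$-invariant, so is $\pmb{X}_\mu$; and since $\pi_\mu\colon J^{-1}(\mu)\to M_\mu$ is a principal $G_\mu$-bundle, a $G_\mu$-invariant vector field on the total space descends to the base --- one checks that $T_x\pi_\mu\big((X_\mu)_a(x)\big)$ depends only on $\pi_\mu(x)$, using $\pi_\mu\circ\Phi_g=\pi_\mu$ for $g\in G_\mu$. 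This produces $\overline{\pmb{X}}_\mu$, with each $(\overline{X}_\mu)_a$ being $\pi_\mu$-related to $(X_\mu)_a$; and $h_\mu$ is well defined because $H$ is $G$-invariant.

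Next I would use that $\pi_\mu$ is a surjective submersion, so that $\pi_\mu^*$ is injective on differential forms, and reduce the desired identity $\sum_{a=1}^k i_{(\overline{X}_\mu)_a}\omega^a_\mu=dh_\mu$ on $M_\mu$ to the pulled-back identity on $J^{-1}(\mu)$. For the right-hand side, $\pi_\mu^*(dh_\mu)=d(\pi_\mu^*h_\mu)=dH_\mu=j_\mu^*(dH)$. For the left-hand side, I would apply twice the elementary fact that $f^*(i_Z\beta)=i_Y(f^*\beta)$ whenever $Tf\circ Y=Z\circ f$: first with $f=\pi_\mu$, using that $(\overline{X}_\mu)_a$ is $\pi_\mu$-related to $(X_\mu)_a$ and $\pi_\mu^*\omega^a_\mu=j_\mu^*\omega^a$, to obtain $\pi_\mu^*\big(i_{(\overline{X}_\mu)_a}\omega^a_\mu\big)=i_{(X_\mu)_a}\big(j_\mu^*\omega^a\big)$; then with $f=j_\mu$, using that $(X_\mu)_a$ is $j_\mu$-related to $X_a$, to obtain $i_{(X_\mu)_a}\big(j_\mu^*\omega^a\big)=j_\mu^*\big(i_{X_a}\omega^a\big)$. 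Summing over $a$ and invoking~\eqref{eq:k-Sym} for $\pmb{X}$, the left-hand side equals $j_\mu^*\big(\flat_\omega(\pmb{X})\big)=j_\mu^*(dH)$, which matches the right-hand side. Injectivity of $\pi_\mu^*$ then yields $\sum_{a=1}^k i_{(\overline{X}_\mu)_a}\omega^a_\mu=dh_\mu$, i.e.~$\overline{\pmb{X}}_\mu$ solves~\eqref{eq:k-Sym} for $h_\mu$.

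I do not expect a genuine obstacle: once Theorem~\ref{thm:polysymred} supplies the reduced forms $\omega^a_\mu$ with $\pi_\mu^*\omega^a_\mu=j_\mu^*\omega^a$, the computation is a routine diagram chase in Cartan calculus. The only two points that need care are (i) that $\pmb{X}_\mu$ really is projectable, which is exactly the content of the $G_\mu$-invariance assumption, and (ii) the injectivity of $\pi_\mu^*$, which is what legitimizes verifying the reduced Hamilton equation upstairs on $J^{-1}(\mu)$ instead of on $M_\mu$. Finally, note that the argument uses the conditions $(A1)$ and $(A2)$ only through the existence of the reduced polysymplectic structure, so in view of Proposition~\ref{pro:A2enough} the hypothesis $(A1)$ can be dropped here as well: condition $(A2)$ (equivalently \eqref{eq:nondegenerate}) already suffices.
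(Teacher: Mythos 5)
Your argument is correct and is exactly the computation the paper uses for the polycosymplectic analogue (Theorem~\ref{thm:maindynamics}): pull back the reduced Hamilton equation by $\pi_\mu$, use $\pi_\mu^*\omega^a_\mu=j_\mu^*\omega^a$ and naturality of the interior product under $\pi_\mu$- and $j_\mu$-related vector fields, and conclude by injectivity of $\pi_\mu^*$ on forms (the paper itself only cites \cite{Polyreduction} for the polysymplectic statement rather than reproving it). Your closing observation that $(A1)$ is only used through the existence of the reduced structure, so that $(A2)$ alone suffices, matches the paper's own remark following the theorem.
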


 Even though  Theorem~\ref{thm:polysymdyn} only applies to a very restricted class of solutions of \eqref{eq:k-Sym} (those that are tangent to $J^{-1}(\mu)$ and $G_\mu$-invariant), there do exist many  situations of interest in which Theorem~\ref{thm:polysymdyn} can be applied, see~\cite{Polyreduction,Polyrouth} for a few examples. Finally, it is now clear from Proposition~\ref{pro:A2enough}  that it is enough to verify $(A2)$ in Theorem~\ref{thm:polysymdyn}. 

\section{Reduction of polycosymplectic manifolds}\label{sec:reduction}

We now discuss how the reduction theory for polysymplectic manifolds extends to the case of polycosymplectic manifolds. To do that, we first mimic the approach of~\cite{Blackerpoly} to obtain a general reduction result. Later, we will adapt a technique found in \cite{singularcosymplectic} to find more practical sufficient conditions for the reduction, using the results of~\cite{Polyreduction}. 
 
\subsection{The linear case} 
 
Recall that a $k$-polycosymplectic vector space~\cite{bookpoly} is a vector space $V$ with a family of 2-forms $\omega^1,\dots,\omega^k$ on $V$, and a family of 1-forms $\eta^1,\dots,\eta^k$ on $V$ such that:
\begin{enumerate}[label=({\roman*})]
\item $\eta^1\wedge\dots\wedge\eta^k\neq 0$,
\item $\ker\omega^1\cap\dots\cap\ker \omega^k\cap\ker\eta^1\cap\dots\cap\ker \eta^k=\{0\}$,
\item $\dim\{\ker\omega^1\cap\dots\cap\ker \omega^k\}=k$.
\end{enumerate}
We let $R_1,\dots,R_k\in V$ be the Reeb vectors, uniquely defined by the following relations (for each $a,b$):
\begin{equation*}
i_{R_a}\omega^b=0,\qquad i_{R_a}\eta^b=\delta_a^b. 
\end{equation*}
We write $\disR={\rm span}\langle R_1,\dots,R_k\rangle=\{\ker\omega^1\cap\dots\cap\ker \omega^k\}\subset V$ for the subspace they generate. We will also use the following property: for each $a$,  $\eta^a$ is a 1-form whose kernel is generated by $R_a$. Therefore, we can write:
\begin{equation}\label{eq:etaidentity}
\ker\eta^a\oplus {\rm span}\langle R_a\rangle=V. 
\end{equation}

Similar to the polysymplectic case, if $S\subset V$ is a subspace, we define the \emph{polycosymplectic orthogonal of $S$} as the following subspace
\begin{equation*}
S^{c\omega}=\{v\in V\st \omega^1(v,S)=\dots=\omega^k(v,S)=0\}\subset V.   
\end{equation*}
The polycosymplectic orthogonal satisfies a number of properties similar to those in the polysymplectic case~\cite{Blackerpoly}. Among others, the following will be useful: for any subspace $S\subset V$, the inclusions $\disR\subset S^{c\omega}$ and $S\subset S^{c\omega\, c\omega}$ hold.

The next proposition is a generalization  of the essential part of
 Theorem 2.14 of  \cite{Blackerpoly}, when carried over  to the current setting.
\begin{proposition}\label{pro:linearreduction} Let $(V,\omega^a,\eta^a)$ be a $k$-polycosymplectic vector space and $S\subset V$ a subspace such that $\disR\cap S=\{0\}$.
Then the 2-forms $\omega^a$ and the 1-forms $\eta^a$ descend to 2-forms $\omega^a_S$ and 1-forms $\eta_S^a$ on the quotient $S^{c\omega}/(S\cap S^{c\omega})$. The tuple $(S^{c\omega}/(S\cap S^{c\omega}),\omega_S^a,\eta_S^a)$ is a $k$-polycosymplectic vector space if, and only if,
\[
\disR\oplus(S\cap S^{c\omega})= S^{c\omega}\cap S^{c\omega\,c\omega}. 
\]
\end{proposition}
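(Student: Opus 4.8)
The plan is to follow the polysymplectic reduction argument of Blacker closely, adapting each step to account for the extra 1-forms $\eta^a$ and the Reeb subspace $\disR$. I would first fix notation: write $W=S^{c\omega}$, $W_0=S\cap S^{c\omega}$, and $\overline{V}=W/W_0$ with projection $p\colon W\to\overline{V}$. The first task is to check that the forms descend. For the 2-forms $\omega^a$, the standard presymplectic-type argument works: if $v\in W_0=S\cap S^{c\omega}$, then for any $w\in W=S^{c\omega}$ we have $\omega^a(v,w)=0$ since $v\in S$ and $w\in S^{c\omega}$; this shows $\omega^a$ restricted to $W$ kills $W_0$ in each slot, so it descends to $\omega^a_S$ on $\overline{V}$. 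For the 1-forms $\eta^a$, I need $\eta^a(W_0)=0$; this is where the hypothesis $\disR\cap S=\{0\}$ (equivalently, combined with $\disR\subset S^{c\omega}$ always, the condition governs the interaction) must be used — actually I expect one needs $W_0\subset\ker\eta^a$ for all $a$, which should follow because $W_0=S\cap S^{c\omega}$ and $\disR\subset S^{c\omega}$, while $\disR\cap S=\{0\}$. Let me reconsider: $\ker\eta^a$ is the hyperplane complementary to $\mathrm{span}\langle R_a\rangle$; I would show $W_0\subset\disR^{c\omega\,c\omega}$-type relations force $\eta^a|_{W_0}=0$, or more directly argue that if $v\in S\cap S^{c\omega}$ and $\eta^a(v)\neq 0$ for some $a$, one derives a contradiction with $\disR\cap S=\{0\}$ by a suitable adjustment using the Reeb vectors. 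This is the first place to be careful.

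Next, I would verify the three polycosymplectic axioms for the triple $(\overline{V},\omega^a_S,\eta^a_S)$ under the stated equivalence. Set $\disR_S=p(\disR)$; since $\disR\subset S^{c\omega}=W$ always and $\disR\cap W_0\subset\disR\cap S=\{0\}$, the map $p$ is injective on $\disR$, so $\dim\disR_S=k$ and the $R_a$ project to linearly independent vectors $\overline{R}_a$. For axiom (i), $\eta^1_S\wedge\dots\wedge\eta^k_S\neq 0$: evaluate on $\overline{R}_1,\dots,\overline{R}_k$ using $\eta^a_S(\overline{R}_b)=\eta^a(R_b)=\delta^a_b$, which gives nonzero, so the wedge is nonzero. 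For axiom (iii), I must show $\ker\omega^1_S\cap\dots\cap\ker\omega^k_S=\disR_S$ has dimension exactly $k$; the inclusion $\disR_S\subset\bigcap_a\ker\omega^a_S$ follows from $i_{R_a}\omega^b=0$. For axiom (ii), the joint kernel $\bigcap_a\ker\omega^a_S\cap\bigcap_a\ker\eta^a_S=\{0\}$: a vector $\overline{v}$ in this intersection lifts to $v\in W$ with $\omega^a(v,W)=0$ for all $a$ (i.e. $v\in W\cap W^{c\omega}=S^{c\omega}\cap S^{c\omega\,c\omega}$) and $\eta^a(v)\in\mathrm{span}$ of stuff killed — more precisely $v\in\bigcap_a\ker\eta^a$ modulo $W_0$. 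The key computation is: $\ker\omega^a_S$ pulled back to $W$ is exactly $\{v\in W: \omega^a(v,W)\subset\omega^a(W_0,\cdot)=0\}$, and intersecting over $a$ gives $W\cap W^{c\omega}=S^{c\omega}\cap S^{c\omega\,c\omega}$ as the preimage of $\bigcap_a\ker\omega^a_S$. So $\bigcap_a\ker\omega^a_S=(S^{c\omega}\cap S^{c\omega\,c\omega})/W_0$.

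Now the equivalence becomes transparent. Axiom (iii) says $\dim\big((S^{c\omega}\cap S^{c\omega\,c\omega})/W_0\big)=k$, and since this quotient always contains $\disR_S$ (which has dimension $k$) as well as always containing the image of $\disR$, axiom (iii) holds if and only if $(S^{c\omega}\cap S^{c\omega\,c\omega})/W_0=\disR_S$, i.e. $S^{c\omega}\cap S^{c\omega\,c\omega}=\disR\oplus W_0$ (the sum being direct because $\disR\cap W_0\subset\disR\cap S=\{0\}$, and $\disR+W_0\subset S^{c\omega}\cap S^{c\omega\,c\omega}$ always since $\disR\subset T^{c\omega}$ for any $T$ and $W_0=S\cap S^{c\omega}\subset S^{c\omega\,c\omega}$ by $S\subset S^{c\omega\,c\omega}$ intersected appropriately). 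I would then show axiom (ii) is automatic given axiom (iii): if $\overline{v}\in\bigcap_a\ker\omega^a_S\cap\bigcap_a\ker\eta^a_S$, then $v\in\disR\oplus W_0$ by (iii), write $v=R+w_0$ with $R\in\disR$, $w_0\in W_0$; applying $\eta^a$ gives $0=\eta^a_S(\overline{v})=\eta^a(v)=\eta^a(R)$ (using $\eta^a(w_0)=0$ established earlier), and since $\{\eta^a\}$ separates $\disR$ via the Reeb duality we get $R=0$, hence $\overline{v}=p(w_0)=0$. Conversely, if the triple is polycosymplectic then axiom (iii) holds, giving the dimension equality and hence the direct sum decomposition. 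The main obstacle I anticipate is the bookkeeping around showing $\eta^a$ descends (i.e. $\eta^a(W_0)=0$) and the careful handling of which orthogonal-complement inclusions are automatic versus which require the hypothesis $\disR\cap S=\{0\}$ — getting these dependencies exactly right, together with invoking the appropriate properties of the polycosymplectic orthogonal from \cite{Blackerpoly} (such as $\disR\subset S^{c\omega}$ and $S\subset S^{c\omega\,c\omega}$), is the delicate part; the dimension-counting equivalence itself is then essentially formal.
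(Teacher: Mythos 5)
Your route is the same as the paper's: show the forms descend, identify $\bigcap_a\ker\omega^a_S$ with the image of $S^{c\omega}\cap S^{c\omega\,c\omega}$ in the quotient, note that this image always contains the $k$-dimensional projection of $\disR$, and read the equivalence off the dimension count; your verifications of axioms (i) and (ii) (evaluating on the projected Reeb vectors, and deducing (ii) from (iii) via Reeb duality) supply details the paper dismisses as ``straightforward''.

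However, the step you flag as delicate --- that $\eta^a$ vanishes on $W_0=S\cap S^{c\omega}$ so that the $1$-forms descend --- is a genuine gap, and neither of your two proposed strategies can close it, because the claim is false under the hypothesis $\disR\cap S=\{0\}$ alone. Take $k=1$, $V=\R^3$ with basis $\partial_x,\partial_y,\partial_t$, $\omega=dx\wedge dy$, $\eta=dt$ (so $R=\partial_t$), and $S={\rm span}(\partial_x+\partial_t)$. Then $\disR\cap S=\{0\}$ and $S^{c\omega}={\rm span}(\partial_x,\partial_t)$, so $S\cap S^{c\omega}=S$ while $\eta(\partial_x+\partial_t)=1\neq 0$: the $1$-form does not descend (and the displayed condition $\disR\oplus(S\cap S^{c\omega})=S^{c\omega}\cap S^{c\omega\,c\omega}$ even holds here, so the failure is not excluded by the equivalence). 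The paper's own proof asserts at this point that $S\subset\ker\eta^a$ ``in view of \eqref{eq:etaidentity} and $\disR\cap S=\{0\}$'', which is the same non sequitur; what is really being used is the stronger condition $S\subset\bigcap_a\ker\eta^a$, which holds automatically in the intended application (Theorem~\ref{thm:main}, where $S=\tilde\lag_x$ and the polycosymplectic action satisfies $i_{\xi_M}\eta^a=0$) and which in turn implies $\disR\cap S=\{0\}$. With that strengthened hypothesis your argument, like the paper's, goes through; without it the descent of the $\eta^a$ fails, so you should state the stronger hypothesis explicitly rather than hope the contradiction you sketch can be derived.
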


\begin{proof} We first check that $\omega^a$ and $\eta^a$ descend to the quotient $S^{c\omega}/(S\cap S^{c\omega})$. If $v,w\in S^{c\omega}$ and $v',w'\in S\cap S^{c\omega}$ we have
\[
\omega^a(v+v',w+w')= \omega^a(v,w)+\omega^a(v,w')+\omega^a(v',w)+\omega^a(v',w')=\omega^a(v,w).
\]
On the other hand, in view of the equality~\eqref{eq:etaidentity} and the hypothesis $\disR\cap S=\{0\}$, for each $a$ we have $S\subset\ker\eta^a$. In particular $S\cap S^{c\omega}\subset\ker\eta^a$, and then $\eta^a(v+v')=\eta^a(v)$. 

We denote by $\pi\colon S^{c\omega}\to S^{c\omega}/(S\cap S^{c\omega})$ the quotient projection and by $j\colon S^{c\omega}\to V$ the canonical inclusion. We may then define  forms $\omega^a_S$ and $\eta_S^a$, by  the characterizing relations
\begin{equation}\label{eq:reducedlinear}
\pi^*\omega^a_S=j^*\omega^a,\qquad \pi^*\eta^a_S=j^*\eta^a. 
\end{equation}

We now show that $\dim(\ker\omega^1_S\cap\dots\cap\ker \omega^k_S)=k$. Consider $z\in S^{c\omega}$. If $\pi(z)\in\ker \omega_S^1\cap \dots\cap \ker \omega_S^k$, then from~\eqref{eq:reducedlinear} we have $z\in S^{c\omega}\cap S^{c\omega\, c\omega}$, and vice versa. From this, we get that 
\[
\ker\omega_S^1\cap \dots\cap \ker \omega_S^k=\pi( S^{c\omega}\cap S^{c\omega\,c\omega}).
\]
From the assumption $S^{c\omega}\cap S^{c\omega\,c\omega}=\disR\oplus(S\cap S^{c\omega})$, we get that $\pi(S^{c\omega}\cap S^{c\omega\,c\omega}) = \pi(\disR)$. We have already mentioned that $\disR\subset S^{c\omega}$. Moreover, given that $\disR\cap S=\{0\}$ (and in particular, $\disR\cap(S\cap S^{c\omega})=\{0\}$), we see that $\pi(R_1),\dots,\pi(R_k)$ are linearly independent. We conclude that $\dim(\pi(\disR))=k$, and therefore also  $\dim(\ker\omega_S^1\cap \dots\cap \ker \omega_S^k)=k$. 

Checking that 
\[
\eta^1_S\wedge \dots\wedge \eta^k_S\neq 0 
\]
and 
\[
\ker\omega_S^1\cap\dots\cap\ker \omega_S^k\cap\ker\eta_S^1\cap\dots\cap\ker \eta_S^k=\{0\} 
\]
is a  straightforward task.  \end{proof}

We end this section by remarking that, for $k=1$, the definition of $k$-polycosymplectic vector space reduces to that of a cosymplectic vector space. In particular, $\omega$ defines a presymplectic form in that case. Lemma~\ref{lem:presymplectic} then guarantees that any subset $S\subset V$ satisfies  
 \[
S^{c\omega\,c\omega}=S+\disR. 
 \] 

For a general polycosymplectic manifold (with general $k$) the above property may not hold. The following example demonstrates that there are manifolds with $\disR\oplus(S\cap S^{c\omega})= S^{c\omega}\cap S^{c\omega\,c\omega}$, but $S^{c\omega \, c\omega}\neq S+\disR$.

\begin{example} Consider  $V = \R^6= \R^3\times \R^3$. For $v=({\bf v},\vec v)$ and $w=({\bf w},\vec w)$, we  define $(\omega^a,\eta^a)$ ($a=1,2,3$) by the relations $\sum_{a=1}^3\omega^a(v,w) {\vec e}_a = \vec v\times \vec w \in \R^3$ and $\sum_{a=1}^3\eta^a(v){\bf e}_a = {\bf v}$. For this example is $\disR = \R^3 \times \{\vec 0\}$. Then, for the plane $S$ spanned by the vectors $({\bf 0},{\vec w}_1)$ and $({\bf 0},{\vec w}_2)$, we get that $S^{c\omega} = \{v\in \R^3\times\R^3 \st \vec v \times {\vec w}_1 = \vec v \times {\vec w}_2 = 0\} = \{({\bf v} ,\vec 0) \st {\bf v} \in\R^3\} =\R^3\times\{\vec 0\}$. But $S^{c\omega\,c\omega}=(\R^3\times\{0\})^{c\omega} = \R^3\times\R^3 \neq S \oplus \disR$, even though 
$S\cap\disR=\{({\bf 0},\vec 0)\}$ 
and
\[\disR\oplus (S\cap S^{c\omega}) = (\R^3 \times \{\vec 0\}) \oplus \{({\bf 0},\vec 0)\} = \R^3 \times \{\vec 0\} = S^{c\omega} \cap S^{c\omega\,c\omega}.
\] 
In this case $S^{c\omega}/(S\cap S^{c\omega}) =(\R^3\times\{\vec 0\})/\{({\bf 0},\vec 0)\} = \R^3\times\{\vec 0\}$. One easily verifies that  $\omega_S^a=0$ and $\sum_{a=1}^3\eta^a_S ({\bf v},\vec 0){\bf e}_a= {\bf v}$  define  a 3-polycosymplectic structure on $\R^3\times\{\vec 0\}$. 
\end{example}

\subsection{A polycosymplectic reduction theorem}\label{sec:polycosympletic}

The notion of $k$-cosymplectic manifold was introduced in~\cite{Hamiltoniansystemskcosymplectic} to overcome the limitations of polysymplectic manifolds and to be able to describe field theories in which the Hamiltonian or the Lagrangian can depend explicitely on the independent parameters. The definition of a $k$-cosymplectic manifold, much like that of $k$-symplectic manifold, includes an integrable distribution $\disR$ of a fixed dimension which imposes a dimensional constraint on the manifold (we will come back to this in Section~\ref{sec:kcosymplectic}). With regards to reduction, it is convenient to start from  an equivalent definition that does not include this distribution. Accordingly, we will use the notion of polycosymplectic manifold as it was introduced in~\cite{symmetrieskcosymplectic} (see also \cite{polycosymplecticmarreroreduction}). We will see that it is precisely within this category of polycosymplectic manifolds that reduction occurs.

\begin{definition}\label{def:polycosymplectic} A \emph{$k$-polycosymplectic structure}, or simply a \emph{polycosymplectic structure}, on a manifold $M$ is a family of closed two-forms $\omega^1,\dots,\omega^k$ and closed 1-forms $\eta^1,\dots,\eta^k$ such that: 
\begin{enumerate}[label=({\roman*})]
 \item $\eta^1\wedge\dots\wedge \eta^k\neq 0$,
 \item $\ker\omega^1\cap\dots\cap\ker \omega^k\cap\ker\eta^1\cap\dots\cap\ker \eta^k=\{0\}$,
 \item $\dim\{\ker\omega^1\cap\dots\cap\ker \omega^k\}=k$.
\end{enumerate}
We say that the polycosymplectic structure is standard if 
around  each point of $M$ there exists a coordinate chart $(t^a,q^i,p^a_i)$ such that
\begin{equation*}
\omega^a=dq^i\wedge dp_i^a,\qquad \eta^a=dt^a. 
\end{equation*}
\end{definition}
In particular, if $M$ is a $k$-polycosymplectic manifold then $\dim M=(k+1)n+k$ for some integer $n$. The local model for a standard polycosymplectic structure is the stable cotangent bundle of $k^1$-covelocities
\begin{equation*}
\R^k\times (T^1_k)^*Q ,
\end{equation*}
with the polycosymplectic structure $\omega^a=dq^i\wedge dp_i^a$ and $\eta^a=dt^a$, where $(q^i,p^a_i)$ are the usual coordinates on $(T^1_k)^*Q$ and $t^a$ the coordinates on $\R^k$.

For a polycosymplectic manifold, we may define the co-called Reeb vector fields $R_a$ as those that are uniquely characterized by the following relations (for each $a,b$):
\begin{equation}\label{eq:Reeb}
i_{R_a}\omega^b=0,\qquad i_{R_a}\eta^b=\delta_a^b.  
\end{equation}
We will denote by
\[
\disR= {\rm span}\left\{R_1,\dots,R_k\right\}=\ker\omega^1\cap\dots\cap\ker \omega^k.
\]
the distribution generated by the Reeb vector fields. It is an integrable distribution of rank $k$. 

\begin{example}\label{ex:cosymplectic} A cosymplectic manifold $(M,\Omega,\lambda)$ is a manifold $M$ of dimension $2n+1$ with a closed 2-form $\Omega$ and a closed 1-form $\lambda$ such that $\lambda\wedge (\Omega)^{n}$ is a volume form. Therefore, a cosymplectic manifold is a 1-polycosymplectic manifold. Since any cosymplectic manifold admits Darboux coordinates, they are always standard (see the appendix~B of \cite{bookpoly} for the statement).
\end{example}

\begin{example}\label{ex:product} Consider a family $(M_a,\Omega^a,\lambda^a)$, $a=1,\dots,k$, of cosymplectic manifolds. Denote by $T_a$  the Reeb vector field of $M_a$. We will now check that 
\[
M=M_1\times\dots\times M_k 
\]
is a $k$-polycosymplectic manifold, when endowed with the following family of 2-forms and 1-forms
\[
\omega^a={\rm pr}_a^*\Omega^a,\qquad \eta^a={\rm pr}_a^*\lambda^a,
\]
where ${\rm pr}_a\colon M\to M_a$ is the $a$-th projection. We identify 
\[
TM={\rm pr}_1^*(TM_1)\oplus\dots\oplus {\rm pr}_k^*(TM_k).
\]
Then
\[
\ker\omega^1\cap\dots\cap\ker \omega^k \simeq \ker\Omega^1\oplus \dots\oplus \ker\Omega^k.
\]
It has dimension $k$ as required, since  $\dim\ker\Omega^a=1$. Also, if
\[
v=v_1\oplus\dots\oplus v_k\in \big(\cap_{a=1}^k\ker\omega^a\big)\cap \big(\cap_{a=1}^k\ker\eta^a \big)
\]
then, for each $a$, we have $v_a\in\ker\Omega^a\cap\ker\lambda^a=\{0\}$. Finally, it is clear that $\eta^1\wedge\dots\wedge \eta^k\neq 0$. The Reeb vector $R_a$ is
\[
R_a=0\oplus\dots\oplus 0 \oplus T_a\oplus 0\oplus\dots\oplus 0, 
\]
with $T_a$ in the $a$-th position. Note that $M$ is, in general, not standard (although each $M_a$ is). For example, if we consider $M_1=M_2=\R^3$ with its canonical cosymplectic structure, then the product $M_1\times M_2$ is a 2-polycosymplectic manifold of dimension 6. It can not be standard because it is not of the form $3n+2$.
\end{example}

The dynamics in the polycosymplectic framework are obtained as follows. Given a Hamiltonian $H\colon M\to \R$, we say that a $k$-vector field $\pmb{X}$ solves the $k$-cosymplectic Hamilton's equations for $H$ if
\begin{equation}\label{eq:k-Cosym}
\eta^a(X_b)=\delta^a_b,\qquad \sum_{a=1}^k i_{X_a}\omega^a=dH-\sum_{a=1}^k R_a(H)\,\eta^a,\tag{k-Cosym}
\end{equation}
for each $a,b$. In the case where  $(M,\omega^a,\eta^a)$ is standard, one can show the following: for an integrable $k$-vector field which solves~\eqref{eq:k-Cosym}, the integral sections 
\[
t^a\mapsto (t^a,\psi^i,\psi^a_i) \in M 
\]
of $\pmb{X}$ satisfy the Hamilton-De~Donder-Weyl equations \eqref{eq:HdDW}. Again, we refer the reader to the monograph~\cite{bookpoly} for a complete discussion. It has been shown in \cite{RomanRey} that the equations \eqref{eq:k-Cosym} are consistent with \eqref{eq:k-Sym} in the following sense: for autonomous Hamiltonians (i.e. Hamitonians not depending explicitely on the independent parameters) these two sets of equations contain essentially the same solutions. 

\begin{remark}\label{rem:terminology} For $k=1$ (cosymplectic geometry), the unique solution of \eqref{eq:k-Cosym} is usually called the \emph{evolution vector field} and denoted $E_H$, and the term \emph{Hamiltonian vector field $X_H$} refers instead to the vector field such that $E_H=X_H+R$ (with $R$ the Reeb vector field), see for example \cite{CantrijnGradient,singularcosymplectic}. However, for general $k$, it is customary to call a solution of~\eqref{eq:k-Cosym} a \emph{Hamiltonian $k$-vector field}.
\end{remark}

\begin{definition}\label{def:polycosymplecticaction} An action $\Phi_g$ of a Lie group $G$ on a polycosymplectic manifold $(M,\omega^a,\eta^a)$ is \emph{polycosymplectic} if
\[
\Phi_g^*\omega^a=\omega^a,\qquad  i_{\xi_M}\eta^a=0,
\]
for each $a$ and $\xi\in\lag$.
\end{definition}
Definition \ref{def:polycosymplecticaction} extends that of \cite{Albert}, and agrees with the one in \cite{polycosymplecticmarreroreduction}, when $G$ is connected. Note that the second condition in Definition~\ref{def:polycosymplecticaction} is stronger than (and actually implies) the infinitesimal invariance condition $\mathcal{L}_{\xi_Q}\eta^a=0$. Recall that we are assuming that $\Phi_g$ is free and proper.  We also point out that, because of our assumption that Lie groups are connected, the condition $\mathcal{L}_{\xi_Q}\eta^a=0$ is equivalent to $\Phi_g^*\eta^a=\eta^a$.

An important observation is that, for a polycosymplectic action, the Reeb vector fields are invariant. This can be seen as follows: Using the defining relations~\eqref{eq:Reeb} for the Reeb vector fields an because $\omega^a$ and $\eta^a$ are invariant for a polycosymplectic action, we have
\[
i_{(\Phi_g)^*R_a}\omega^b=\Phi_g^*(i_{R_a}\omega^b)=0,\qquad i_{(\Phi_g)^*R_a}\eta^b=\Phi_g^*(i_{R_a}\eta^b)=\Phi_g^*(\delta_a^b)=\delta_a^b,
\]
and therefore $(\Phi_g)^*R_a=R_a$.

\begin{definition}\label{def:momentummap} A \emph{momentum map} for a polycosymplectic action $\Phi_g\colon M\to \lagdk$ is a map 
\[
J=(J^1,\dots,J^k)\colon M\to\lagdk 
\]
such that each $a$ and each $\xi\in\lag$ the following holds:
\begin{equation*} 
i_{\xi_M}\omega^a=dJ^a_\xi, 
\end{equation*}
where $J^a_\xi\colon M\to\R$ is the function $J^a_\xi(x)=\langle J^a(x),\xi\rangle$.
\end{definition}

Again, we need to say a few words about slightly different notions in the literature. It follows from Definition~\ref{def:momentummap} that $R_b(J^a_\xi)=0$ for each $a,b$ and $\xi\in\lag$. For $k=1$, this definition agrees with the usual notion of momentum map in cosymplectic geometry~\cite{Albert}: There, a momentum map  is a map $J\colon M\to\lag^*$ which satisfies $R(J_\xi)=0$ and which is such that each infinitesimal generator becomes a Hamiltonian vector field of the cosymplectic structure (in the sense of cosymplectic geometry, see Remark~\ref{rem:terminology}). We will come back later to Definition~\ref{def:momentummap} and show that it is a natural generalization to the polycosymplectic setting of the usual notion of momentum map in the sense that it leads to a conservation law. For the moment, we point out that if for each choice $\xi_1,\dots,\xi_k\in\lag$ we can define the function $\hat J\colon M\to\R$ as
\begin{equation*}
\hat J(\xi_1,\dots,\xi_n)(x)=J_{\xi_1}^1(x)+\dots+J_{\xi_k}^k(x). 
\end{equation*}
Then, we get
\[
i_{(\xi_1)_M}\omega^1+\dots+ i_{(\xi_k)_M}\omega^k=d\hat J(\xi_1,\dots,\xi_n),
\]
which is the defining relation of the momentum map in the paper \cite{polycosymplecticmarreroreduction}. The relation above means that the $k$-vector field $((\xi_1)_M,\dots,(\xi_k)_M)$ solves~\eqref{eq:k-Cosym} for the Hamiltonian $\hat J$. 

We will assume that the momentum map is equivariant w.r.t.\ the ${\rm Coad}^k_g$-action, i.e.\ that it satisfies~\eqref{eq:equivariance} (in the interpretation that is given to  $J$ and $\Phi_g$ in this section). In this case, since $G_\mu$ is the maximal subgroup of $G$ which stabilizes $J^{-1}(\mu)$, one has the equality:
\begin{equation*}
\left.\tilde\lag_\mu\right|_x= \tilde\lag_x\cap T_xJ^{-1}(\mu),
\end{equation*}
where
\[
J^{-1}(\mu)=\{x\in M\st J^a(x)=\mu_a,\;\; a=1,\dots,k\}. 
\]
Note that we are using the same notations as in Section~\ref{sec:polysympletic}. With similar arguments as there, one can show that $T_x J^{-1}(\mu) =\tilde\lag_x^{c\omega}$.
Thus, we always have:
\begin{equation}\label{eq:intersection}
\left.\tilde\lag_\mu\right|_x= \tilde\lag_x\cap \tilde\lag_x^{c\omega}.
\end{equation}
\begin{theorem}\label{thm:main} Let $\Phi_g$ be a polycosymplectic action on $(M,\omega^a,\eta^a)$ with equivariant momentum map $J$ and $\mu\in\lagdk$ a regular value of $J$. 
Then $M_\mu=J^{-1}(\mu)/G_\mu$ is a polycosymplectic manifold with polycosymplectic structure $(\omega^a_\mu,\eta^a_\mu)$  uniquely determined by the   relations 
\begin{equation}\label{eq:reducedforms}
\pi_\mu^*\omega^a_\mu=j_\mu^*\omega^a,\qquad  \pi_\mu^*\eta^a_\mu=j_\mu^*\eta^a, 
\end{equation}
if, and only if, for each $x\in J^{-1}(\mu)$ the following condition holds:
\begin{equation}\label{eq:nondegeneratepolyco}
\disR_x\oplus \left.\tilde\lag_\mu\right|_x=\tilde\lag_x^{c\omega} \cap \tilde\lag_x^{c\omega \,c\omega}.
\end{equation}
\end{theorem}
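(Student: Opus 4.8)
The plan is to mimic the structure of the polysymplectic reduction theorem (Theorem~\ref{thm:polysymred}), reducing the manifold-level statement to the linear-algebra statement of Proposition~\ref{pro:linearreduction} applied pointwise at each $x\in J^{-1}(\mu)$, with $V=T_xM$, the $\omega^a,\eta^a$ their values at $x$, and $S=\tilde\lag_x$. First I would verify that the hypothesis $\disR_x\cap\tilde\lag_x=\{0\}$ of Proposition~\ref{pro:linearreduction} holds: since $i_{\xi_M}\eta^a=0$ for a polycosymplectic action (Definition~\ref{def:polycosymplecticaction}), every vector in $\tilde\lag_x$ lies in $\cap_a\ker\eta^a$, whereas by~\eqref{eq:etaidentity} no nonzero element of $\disR_x={\rm span}\langle R_1,\dots,R_k\rangle$ does; hence $\disR_x\cap\tilde\lag_x=\{0\}$ automatically. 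Next I would identify the tangent space of the quotient: because the action is free and proper and $\mu$ is regular, $J^{-1}(\mu)$ is a submanifold with $T_xJ^{-1}(\mu)=\tilde\lag_x^{c\omega}$ (the same momentum-map computation recalled in Section~\ref{sec:polysympletic}, using $i_{\xi_M}\omega^a=dJ^a_\xi$), and the fibre of $\pi_\mu$ through $x$ is the $G_\mu$-orbit, whose tangent space is $\tilde\lag_\mu|_x=\tilde\lag_x\cap\tilde\lag_x^{c\omega}=\tilde\lag_x\cap T_xJ^{-1}(\mu)$ by~\eqref{eq:intersection}. Therefore $T_{\pi_\mu(x)}M_\mu\cong \tilde\lag_x^{c\omega}/(\tilde\lag_x\cap\tilde\lag_x^{c\omega})$, which is exactly the vector space $S^{c\omega}/(S\cap S^{c\omega})$ appearing in Proposition~\ref{pro:linearreduction} with $S=\tilde\lag_x$.

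With that identification in place, the forms $\omega^a_S,\eta^a_S$ produced by Proposition~\ref{pro:linearreduction} are precisely the pointwise values of the sought $\omega^a_\mu,\eta^a_\mu$, and the relations~\eqref{eq:reducedlinear} match~\eqref{eq:reducedforms}. I would then argue that these pointwise data glue into globally well-defined closed forms on $M_\mu$: well-definedness and smoothness follow the standard Marsden--Weinstein pattern (the forms $j_\mu^*\omega^a$ and $j_\mu^*\eta^a$ are $G_\mu$-invariant and horizontal with respect to $\pi_\mu$, hence descend uniquely), and closedness is inherited because $\pi_\mu^*d\omega^a_\mu=dj_\mu^*\omega^a=0$ and $\pi_\mu^*$ is injective on forms (as $\pi_\mu$ is a surjective submersion); likewise for $\eta^a_\mu$. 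The nondegeneracy conditions (i)--(iii) of Definition~\ref{def:polycosymplectic} for $(\omega^a_\mu,\eta^a_\mu)$ hold at every point if and only if conditions (i)--(iii) of the linear definition hold for $(\omega^a_S,\eta^a_S)$ at that point, and Proposition~\ref{pro:linearreduction} tells us the latter is equivalent to $\disR_x\oplus(\tilde\lag_x\cap\tilde\lag_x^{c\omega})=\tilde\lag_x^{c\omega}\cap\tilde\lag_x^{c\omega\,c\omega}$, i.e.\ to~\eqref{eq:nondegeneratepolyco} in view of~\eqref{eq:intersection}. This yields both directions of the ``if and only if''.

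The main obstacle I anticipate is not the linear algebra — that is packaged cleanly in Proposition~\ref{pro:linearreduction} — but the bookkeeping needed to pass from pointwise statements to a genuine polycosymplectic structure on the smooth manifold $M_\mu$: one must check that the rank of $\ker\omega^1_\mu\cap\dots\cap\ker\omega^k_\mu$ is \emph{constant} equal to $k$ (so that it forms a distribution $\disR_\mu$, necessarily spanned by well-defined reduced Reeb vector fields), and that $\eta^1_\mu\wedge\dots\wedge\eta^k_\mu$ is nowhere zero; both are pointwise consequences of the linear result, but stating them requires care about uniformity in $x$. A secondary subtlety worth a remark is that one should check the reduced Reeb vectors are exactly the projections $T\pi_\mu(R_a)$ restricted to $J^{-1}(\mu)$ — this uses that $R_a$ is tangent to $J^{-1}(\mu)$, which follows from $R_a(J^a_\xi)=0$ (noted after Definition~\ref{def:momentummap}), and that $R_a$ is $G_\mu$-invariant (shown before Definition~\ref{def:momentummap}). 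I would also note in passing that uniqueness of $(\omega^a_\mu,\eta^a_\mu)$ is immediate from the surjectivity of $\pi_\mu$, so it requires no separate argument.
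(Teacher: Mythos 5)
Your proposal is correct and follows essentially the same route as the paper: both reduce the statement to Proposition~\ref{pro:linearreduction} applied pointwise with $S=\tilde\lag_x$, using $i_{\xi_M}\eta^a=0$ to get $\disR_x\cap\tilde\lag_x=\{0\}$, the identification $T_xJ^{-1}(\mu)=\tilde\lag_x^{c\omega}$ together with \eqref{eq:intersection} to recognize $T_{[x]}M_\mu$ as $S^{c\omega}/(S\cap S^{c\omega})$, and the linear equivalence to obtain both directions. Your additional remarks on descent, closedness and the constancy of the rank of $\cap_a\ker\omega^a_\mu$ only flesh out details the paper leaves implicit.
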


\begin{proof} We assume first that the condition~\eqref{eq:nondegeneratepolyco} is satisfied. Note that $\disR_x\cap \tilde\lag_x=\{0\}$ because of the condition $i_{\xi_M}\eta^a=0$ in Definition~\ref{def:polycosymplecticaction}. Under the assumption~\eqref{eq:nondegeneratepolyco} we have the following equalities:
\[
\tilde\lag_x^{c\omega} \cap \tilde\lag_x^{c\omega\, c\omega}=\disR_x\oplus\left.\tilde\lag_\mu\right|_x=\disR_x\oplus(\tilde\lag_x\cap \tilde\lag_x^{c\omega}).
\]
This means that we can apply Proposition~\ref{pro:linearreduction} to the subspace $S=\tilde\lag_x$.  For each $[x]=\pi_\mu(x)$, we get a polycosymplectic structure on
\[
\tilde\lag_x^{c\omega}/(\tilde\lag_x \cap \tilde\lag_x^{c\omega})=\tilde\lag_x^{c\omega}/\left.\tilde\lag_\mu\right|_x\simeq T_{[x]}(J^{-1}(\mu)/G_\mu).
\]
In the last step we have used the natural identification 
\[
T_xJ^{-1}(\mu)/\left.\tilde\lag_\mu\right|_x=T_xJ^{-1}(\mu)/T_x(G_\mu\cdot x)\simeq T_{[x]}(J^{-1}(\mu)/G_\mu), 
\]
which is obtained using $T_x\pi_\mu\colon T_xJ^{-1}(\mu)\to T_{[x]}(J^{-1}(\mu)/G_\mu)$. The forms $\omega_\mu^a$, $\eta_\mu^a$ are (uniquely) characterized by~\eqref{eq:reducedforms}, and in particular $\omega_\mu^a$, $\eta_\mu^a$ are closed. Therefore, $J^{-1}(\mu)/G_\mu$ is a polycosymplectic manifold.

The coverse statement easily follows from reversing the steps in the above reasoning, and from Proposition~\ref{pro:linearreduction}.

We remark that one may always reduce the forms $\omega^a,\eta^a$ to $M_\mu$ (regardless of whether~\eqref{eq:nondegeneratepolyco} holds).  But, these forms will only define a polycosymplectic structure if, and only if, the condition~\eqref{eq:nondegeneratepolyco} is satisfied. \end{proof}

We can now re-derive Abert's famous theorem, as a consequence of the previous result.

\begin{corollary}[Albert's cosymplectic reduction~\cite{Albert}]\label{cor:albert} Let $(M,\omega,\eta)$ be a cosymplectic manifold with a free and proper cosymplectic action and momentum map $J$. If $\mu\in\lag^*$ is a regular value of $J$, $M_\mu=J^{-1}(\mu)/G_\mu$ is a cosymplectic manifold and the cosymplectic structure $(\omega_\mu,\eta_\mu)$ is uniquely determined from the relations:
\begin{equation*}
\pi_\mu^*\omega^a_\mu=j_\mu^*\omega^a,\qquad  \pi_\mu^*\eta^a_\mu=j_\mu^*\eta^a.
\end{equation*}
\end{corollary}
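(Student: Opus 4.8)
The plan is to obtain Corollary~\ref{cor:albert} as the special case $k=1$ of Theorem~\ref{thm:main}. First I would observe that a cosymplectic manifold $(M,\omega,\eta)$ is, by Example~\ref{ex:cosymplectic}, precisely a $1$-polycosymplectic manifold, and that a cosymplectic action and momentum map in the sense of \cite{Albert} coincide (for connected $G$, which we assume throughout) with a polycosymplectic action and momentum map in the sense of Definitions~\ref{def:polycosymplecticaction} and~\ref{def:momentummap} when $k=1$; this uses Remark~\ref{rem:terminology} to match the two notions of ``Hamiltonian vector field''. So the only thing that needs checking is that the necessary and sufficient condition~\eqref{eq:nondegeneratepolyco} is \emph{automatically} satisfied when $k=1$, so that the ``if and only if'' in Theorem~\ref{thm:main} becomes an unconditional statement.

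The key step is therefore the following: when $k=1$ the $2$-form $\omega$ is presymplectic (its kernel is the rank-$1$ distribution $\disR$ spanned by the Reeb vector field), so Lemma~\ref{lem:presymplectic}, applied pointwise to $V=T_xM$, $\Omega=\omega_x$ and $S=\tilde\lag_x$, gives $\tilde\lag_x^{c\omega\,c\omega}=\tilde\lag_x+\disR_x$. (This is exactly the remark made at the end of Section~3.1 of the excerpt.) Plugging this into the right-hand side of~\eqref{eq:nondegeneratepolyco} and using $\disR_x\subset\tilde\lag_x^{c\omega}$ together with~\eqref{eq:intersection}, one computes
\[
\tilde\lag_x^{c\omega}\cap\tilde\lag_x^{c\omega\,c\omega}=\tilde\lag_x^{c\omega}\cap(\tilde\lag_x+\disR_x)=(\tilde\lag_x^{c\omega}\cap\tilde\lag_x)+\disR_x=\left.\tilde\lag_\mu\right|_x+\disR_x,
\]
where the middle equality is the modular-law step valid because $\disR_x\subseteq\tilde\lag_x^{c\omega}$. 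Finally, since $\disR_x\cap\tilde\lag_x=\{0\}$ (from $i_{\xi_M}\eta=0$ in Definition~\ref{def:polycosymplecticaction}) we have $\disR_x\cap\left.\tilde\lag_\mu\right|_x=\{0\}$ as well, so the sum on the right is direct and~\eqref{eq:nondegeneratepolyco} holds for every $x\in J^{-1}(\mu)$.

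With~\eqref{eq:nondegeneratepolyco} verified for free, Theorem~\ref{thm:main} with $k=1$ immediately yields that $M_\mu=J^{-1}(\mu)/G_\mu$ carries a (unique) $1$-polycosymplectic, i.e.\ cosymplectic, structure $(\omega_\mu,\eta_\mu)$ characterized by the stated pullback relations, which is the assertion of the corollary. I do not anticipate a genuine obstacle here; the only point requiring a little care is the bookkeeping identifying the classical cosymplectic momentum-map condition $R(J_\xi)=0$ and the ``infinitesimal generators are Hamiltonian vector fields'' condition of \cite{Albert} with Definition~\ref{def:momentummap}, but as noted after that definition, $R(J^a_\xi)=0$ follows from $i_{\xi_M}\omega=dJ_\xi$ and $i_R\omega=0$, so the two formulations agree. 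One could alternatively phrase the whole argument by invoking Proposition~\ref{pro:linearreduction} directly at each tangent space with $S=\tilde\lag_x$ and the reduction criterion checked as above, but routing through Theorem~\ref{thm:main} is the cleanest presentation.
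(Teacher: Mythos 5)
Your proposal is correct and follows essentially the same route as the paper: reduce to checking that condition~\eqref{eq:nondegeneratepolyco} holds automatically for $k=1$, apply Lemma~\ref{lem:presymplectic} with $\ker\omega=\disR$ to get $\tilde\lag_x^{c\omega\,c\omega}=\tilde\lag_x+\disR_x$, and then use the modular-law step (valid since $\disR_x\subseteq\tilde\lag_x^{c\omega}$) together with~\eqref{eq:intersection}. Your additional remarks on the directness of the sum and on matching Albert's momentum-map condition with Definition~\ref{def:momentummap} are correct bookkeeping that the paper leaves implicit.
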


\begin{proof} Recall that a 1-polycosymplectic manifold is a cosymplectic manifold. We need to check that condition~\eqref{eq:nondegeneratepolyco} is always satisfied in that case. Since a cosymplectic manifold is also  presymplectic with $\ker\omega=\disR$, we can use Lemma~\ref{lem:presymplectic} to write:
\[
\tilde\lag_x^{c\omega} \cap \tilde\lag_x^{c\omega\, c\omega}= \tilde\lag_x^{c\omega} \cap(\tilde\lag_x\oplus\disR_x)= (\tilde\lag_x^{c\omega} \cap\tilde\lag_x)\oplus\disR_x = \left.\tilde\lag_\mu\right|_x \oplus\disR_x.
\]
In the last two steps, we have used that $\disR$ lies in the cosymplectic orthogonal of any subspace and the identity~\eqref{eq:intersection}. 
\end{proof}

Let $\pmb{X}$ be a solution of~\eqref{eq:k-Cosym} for a $G$-invariant Hamiltonian $H$. If $J$ is a momentum map, and $\xi\in\lag$ is arbitrary, we find:
\[
\sum_a i_{X_a} dJ^a_\xi=\sum_a i_{X_a}i_{\xi_M} \omega^a=-i_{\xi_M}\left(\sum_a i_{X_a} \omega^a\right)=-i_{\xi_M}\left(dH-\sum_a R_a(H)\eta^a\right).
\]
We now observe that $i_{\xi_M}dH=\mathcal{L}_{\xi_M}H=0$ (since $H$ is invariant) and that $i_{\xi_M}\eta^a=0$ for each $a$. Since $J_\xi$ is a function on $M$, we can write $i_{X_a} dJ_\xi=\mathcal{L}_{X_a}J_\xi$ and we arrive at
\[
\sum_a\mathcal{L}_{X_a}J^a_\xi=0.
\]
In other words, the map $(J^1_\xi,\dots,J^k_\xi)\colon M\to\R^k$ is a conserved quantity in the field-theoretical sense (see~\cite{symmetrieskcosymplectic} for a full discussion). This is, essentially, the contents of Noether's Theorem in the current situation.

We will now obtain the second main result of this section, which describes in which way the dynamics  given by a Hamiltonian in a polycosymplectic manifold $(M,\omega^a,\eta^a)$ can be reduced to the quotient $(M_\mu,\omega_\mu^a,\eta_\mu^a)$. The definition of momentum map implies that the distribution $\disR$ lies inside $TJ^{-1}(\mu)$ or, in other words, that each Reeb vector field is tangent to $J^{-1}(\mu)$ . We will write $R_a^\mu$ for the restriction of $R_a$ to $J^{-1}(\mu)$, which is a well defined vector field on $J^{-1}(\mu)$.  We have already checked that the Reeb vector fields are $G$-invariant. From this, it also follows that their restrictions to $J^{-1}(\mu)$ are $G_\mu$-invariant.

In view of~\eqref{eq:reducedforms}, it is clear that the reduced Reeb vector fields $r_1,\dots,r_n$ on $M_\mu$, which satisfy 
\begin{equation*}
i_{r_a}\omega_\mu^b=0,\qquad i_{r_a}\eta_\mu^b=\delta_a^b, 
\end{equation*}
for each $a,b$, are given by $r_1=\pi_\mu(R^\mu_1),\dots,r_k=\pi_\mu(R^\mu_k)$. We will use the same notations as in  Section \ref{sec:polycosympletic}; in particular, we write $\pmb{X}_\mu$ for the restriction (when it exists) of $\pmb{X}$ to $J^{-1}(\mu)$.

\begin{theorem}\label{thm:maindynamics} Under the same conditions of Theorem~\ref{thm:main}, let $H\colon M\to\R$ be a $G$-invariant Hamiltonian and denote by $h_\mu$ its reduction to $M_\mu$. Let $\pmb{X}$ be a solution of~\eqref{eq:k-Cosym} for the Hamiltonian $H$ which is tangent to $J^{-1}(\mu)$ and $G_\mu$-invariant. Then the projection $\overline{\pmb{X}}_{\mu}$ of $\pmb{X}_\mu$ on $M_\mu$ is a solution of~\eqref{eq:k-Cosym} with Hamiltonian $h_\mu$.
\end{theorem}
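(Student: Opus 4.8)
The plan is to verify directly that $\overline{\pmb X}_\mu = (\overline X_1,\dots,\overline X_k)$ satisfies the two equations in~\eqref{eq:k-Cosym} on $M_\mu$ for the reduced Hamiltonian $h_\mu$, using only the defining relations~\eqref{eq:reducedforms}, the $\pi_\mu$-relatedness of $\pmb X_\mu$ and $\overline{\pmb X}_\mu$, and the facts already established above (that the $R_a$ are $G$-invariant and tangent to $J^{-1}(\mu)$, and that $r_a = \pi_\mu(R_a^\mu)$). First I would pull back each equation along $\pi_\mu$ and reduce everything to an identity of forms on $J^{-1}(\mu)$, where I can use $\pi_\mu^*\omega_\mu^a = j_\mu^*\omega^a$ and $\pi_\mu^*\eta_\mu^a = j_\mu^*\eta^a$.

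For the first equation, I compute $\pi_\mu^*\big(\eta_\mu^a(\overline X_b)\big) = (\pi_\mu^*\eta_\mu^a)\big((\pmb X_\mu)_b\big) = (j_\mu^*\eta^a)(X_b|_{J^{-1}(\mu)}) = \eta^a(X_b)\circ j_\mu = \delta^a_b$, using that $X_b$ is $\pi_\mu$-related to $\overline X_b$ and that $\pmb X$ solves~\eqref{eq:k-Cosym} on $M$; since $\pi_\mu$ is a surjective submersion this gives $\eta_\mu^a(\overline X_b) = \delta^a_b$. For the second equation I would similarly compute $\pi_\mu^*\big(\sum_a i_{\overline X_a}\omega_\mu^a\big) = \sum_a i_{(\pmb X_\mu)_a}(\pi_\mu^*\omega_\mu^a) = \sum_a i_{X_a|_{J^{-1}(\mu)}}(j_\mu^*\omega^a) = j_\mu^*\big(\sum_a i_{X_a}\omega^a\big) = j_\mu^*\big(dH - \sum_a R_a(H)\eta^a\big)$. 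Now $j_\mu^* dH = d(j_\mu^* H) = d H_\mu = d(\pi_\mu^* h_\mu) = \pi_\mu^*(dh_\mu)$, and $j_\mu^*\eta^a = \pi_\mu^*\eta_\mu^a$. The remaining term needs $j_\mu^*\big(R_a(H)\,\eta^a\big) = \pi_\mu^*\big(r_a(h_\mu)\,\eta_\mu^a\big)$, which amounts to checking that the function $R_a(H)$ restricted to $J^{-1}(\mu)$ descends to $r_a(h_\mu)$ on $M_\mu$. Since $R_a$ is tangent to $J^{-1}(\mu)$ with restriction $R_a^\mu$, and $R_a^\mu$ is $\pi_\mu$-related to $r_a$ (by definition of $r_a$), and $\pi_\mu^* h_\mu = H_\mu = j_\mu^* H$, we get $\pi_\mu^*\big(r_a(h_\mu)\big) = R_a^\mu(\pi_\mu^* h_\mu) = R_a^\mu(j_\mu^* H) = j_\mu^*\big(R_a(H)\big)$, which is exactly what is needed. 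Assembling these pieces, $\pi_\mu^*\big(\sum_a i_{\overline X_a}\omega_\mu^a\big) = \pi_\mu^*\big(dh_\mu - \sum_a r_a(h_\mu)\eta_\mu^a\big)$, and surjectivity of $T\pi_\mu$ lets us drop the pullback.

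The main obstacle, and the only step requiring genuine care, is the bookkeeping around the $R_a(H)$ term: one must confirm that $R_a$ being tangent to $J^{-1}(\mu)$ (a consequence of $\disR \subset TJ^{-1}(\mu)$, already noted from the definition of the momentum map) legitimizes writing $j_\mu^*(R_a(H)) = R_a^\mu(j_\mu^* H)$, and that the $G_\mu$-invariance of $R_a^\mu$ and of $X_a|_{J^{-1}(\mu)}$ is what allows the projection $\overline{\pmb X}_\mu$ and the reduced Reeb fields to be well defined in the first place. Everything else is the routine naturality of pullback under $i_X$ and $d$, combined with the fact that $\pi_\mu$ is a surjective submersion so that $\pi_\mu^*\alpha = \pi_\mu^*\beta$ implies $\alpha = \beta$ for forms on $M_\mu$. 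I would also remark at the end that this recovers the dynamical part of Albert's theorem when $k=1$.
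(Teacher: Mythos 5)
Your proposal is correct and follows essentially the same route as the paper's proof: pull both sides of~\eqref{eq:k-Cosym} back along $\pi_\mu$, use the relations $\pi_\mu^*\omega^a_\mu=j_\mu^*\omega^a$, $\pi_\mu^*\eta^a_\mu=j_\mu^*\eta^a$ together with the $\pi_\mu$-relatedness of $\pmb{X}_\mu$ with $\overline{\pmb{X}}_\mu$ and of $R_a^\mu$ with $r_a$, and conclude by surjectivity of $T\pi_\mu$. If anything, you are slightly more complete, since you also verify the normalization condition $\eta^a_\mu\big((\overline{X}_\mu)_b\big)=\delta^a_b$, which the paper leaves implicit.
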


\begin{proof} First note that $\overline{\pmb{X}}_{\mu}$ is a well-defined $k$-vector field on $M_\mu$ because of the assumed invariance of $\pmb{X}_\mu$. We write $(X_\mu)_a$ for the components of  $\pmb{X}_\mu$  and $(\overline{X}_\mu)_a= T\pi_\mu((X_\mu)_a)$ for the components of $\overline{\pmb{X}}_{\mu}$. Then, if $H_\mu=j_\mu^*H$, we have:
\begin{align*}
\pi_\mu^*\left(\sum_{a=1}^k i_{(\overline{X}_\mu)_a}\omega^a_\mu\right)&= \sum_{a=1}^k i_{(X_\mu)_a}\pi_\mu^*\omega^a_\mu= \sum_{a=1}^k i_{(X_\mu)_a}j_\mu^*\omega^a=\sum_{a=1}^k j_\mu^*(i_{X_a}\omega^a)=j_\mu^*\left(\sum_{a=1}^ki_{X_a}\omega^a\right)\\
&=j_\mu^*\left(dH-\sum_{a=1}^k R_a(H)\eta^a\right)=dH_\mu-\sum_{a=1}^k R^\mu_a(H_\mu) j_\mu^*\eta^a\\
&=\pi_\mu^*\left(dh_\mu-\sum_{a=1}^k r_a(h_\mu)\eta^a_\mu\right).
\end{align*}
But $\pi_\mu$ is a submersion, and therefore
\[
\sum_{a=1}^k i_{(\overline{X}_\mu)_a}\omega^a_\mu=dh_\mu-\sum_{a=1}^k r_a(h_\mu)\eta^a_\mu,
\]
as claimed.
\end{proof}

\section{The relation with polysymplectic reduction}\label{sec:relation}

In this section we will see that if $(M,\omega^a,\eta^a)$ is a $k$-polycosymplectic manifold, it is possible to define a $k$-polysymplectic structure on $\tilde M=M\times\R$ in such a way that the polycosymplectic reduction of $M$ is equivalent to the polysymplectic reduction of $\tilde M$. This fact will be used to derive some sufficient conditions for polycosymplectic reduction which are convenient when dealing with concrete examples.

\subsection{A related polysymplectic structure}

The key result of this section is the following lemma, which is an extension of Lemma 3.2 in \cite{singularcosymplectic} to general $k$:

\begin{lemma}\label{lem:extended} Let $M$ be a manifold, $\omega^1,\dots,\omega^k$ a family of closed 2-forms on $M$ and $\eta^1,\dots,\eta^k$ a family of closed 1-forms on $M$ which satisfy the following conditions:
\begin{enumerate}[label=({\roman*})]
\item $\eta^1\wedge\dots\wedge \eta^k\neq 0$,
\item $\dim\{\ker\omega^1\cap\dots\cap\ker \omega^k\}=k$.
\end{enumerate}
Consider the manifold $\tilde M=M\times\R$ and the family of 2-forms on $\tilde M$ defined by
\begin{equation}\label{def:tildeomega}
\tilde\omega^a={\rm pr}^*\omega^a+ds\wedge {\rm pr}^*\eta^a, 
\end{equation}
for each $a=1,\dots,k$, where $s$ is the coordinate in $\R$ and ${\rm pr}\colon \tilde M\to M$ is the canonical projection. 

Then $(\tilde M,\tilde\omega^a)$ is a polysymplectic manifold if, and only if, $(M,\omega^a,\eta^a)$ is polycosymplectic. 
\end{lemma}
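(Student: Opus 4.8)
The plan is to work pointwise. Fix a point $(x,s)\in\tilde M$ and use the splitting $T_{(x,s)}\tilde M = T_xM\oplus T_s\R$, writing a tangent vector as $v+\lambda\,\partial_s$ with $v\in T_xM$ and $\lambda\in\R$. Since $\omega^a$ and $\eta^a$ are closed, every $\tilde\omega^a={\rm pr}^*\omega^a+ds\wedge{\rm pr}^*\eta^a$ is closed, so by Definition~\ref{def:polysymplectic} the only thing at stake for $\tilde M$ is the joint nondegeneracy $\ker\tilde\omega^1_{(x,s)}\cap\dots\cap\ker\tilde\omega^k_{(x,s)}=\{0\}$. On the other side, conditions (i) and (ii) of the present lemma are literally conditions (i) and (iii) of Definition~\ref{def:polycosymplectic}, so ``$(M,\omega^a,\eta^a)$ is polycosymplectic'' amounts to the single extra requirement $\ker\omega^1_x\cap\dots\cap\ker\omega^k_x\cap\ker\eta^1_x\cap\dots\cap\ker\eta^k_x=\{0\}$ holding at every $x$. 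Thus the whole statement reduces to the equivalence, point by point, of these two nondegeneracy conditions.

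The single computation to carry out is the kernel of $\tilde\omega^a$. A direct bilinear evaluation gives
\[
\tilde\omega^a\big(v+\lambda\,\partial_s,\ w+\mu\,\partial_s\big)=\omega^a_x(v,w)+\lambda\,\eta^a_x(w)-\mu\,\eta^a_x(v),
\]
and letting $w$ range over $T_xM$ and $\mu$ over $\R$ independently, one reads off that $v+\lambda\,\partial_s\in\ker\tilde\omega^a_{(x,s)}$ if and only if $\eta^a_x(v)=0$ and $i_v\omega^a_x=-\lambda\,\eta^a_x$.

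For the implication ``$M$ polycosymplectic $\Rightarrow$ $\tilde M$ polysymplectic'', suppose $v+\lambda\,\partial_s$ lies in $\ker\tilde\omega^1_{(x,s)}\cap\dots\cap\ker\tilde\omega^k_{(x,s)}$, so $\eta^a_x(v)=0$ and $i_v\omega^a_x=-\lambda\,\eta^a_x$ for every $a$. Pairing these identities with the Reeb vectors $R_1,\dots,R_k$ of $(M,\omega^a,\eta^a)$ (which exist because $M$ is polycosymplectic; see \eqref{eq:Reeb}) gives $0=\omega^b_x(v,R_a)=(i_v\omega^b_x)(R_a)=-\lambda\,\eta^b_x(R_a)=-\lambda\,\delta^b_a$, hence $\lambda=0$. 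Then $i_v\omega^a_x=0$ and $\eta^a_x(v)=0$ for all $a$, so $v\in\ker\omega^1_x\cap\dots\cap\ker\omega^k_x\cap\ker\eta^1_x\cap\dots\cap\ker\eta^k_x=\{0\}$, and therefore $v+\lambda\,\partial_s=0$. For the converse, assuming $\tilde M$ polysymplectic, only condition (ii) of Definition~\ref{def:polycosymplectic} needs to be produced: if $v\in\ker\omega^1_x\cap\dots\cap\ker\omega^k_x\cap\ker\eta^1_x\cap\dots\cap\ker\eta^k_x$, then the characterization above with $\lambda=0$ puts $v=v+0\cdot\partial_s$ in $\ker\tilde\omega^1_{(x,s)}\cap\dots\cap\ker\tilde\omega^k_{(x,s)}=\{0\}$, so $v=0$.

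I do not expect a genuine obstacle: the whole content sits in the bilinear kernel computation of the second paragraph. The one point deserving care is the bookkeeping of hypotheses — conditions (i) and (iii) of the polycosymplectic definition are built into the lemma's hypotheses, so in the forward direction one only has to \emph{use} the remaining condition (and the existence of the Reeb vectors it guarantees), whereas in the backward direction one only has to \emph{produce} it; no other part of Definition~\ref{def:polycosymplectic} is touched.
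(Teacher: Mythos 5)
Your proof is correct and follows essentially the same route as the paper: compute $\tilde\omega^a(v+\lambda\,\partial_s,\cdot)$ pointwise, identify $\ker\tilde\omega^a_{(x,s)}$ with $(\ker\omega^a\cap\ker\eta^a)_x\oplus\{0\}$ by contracting with the Reeb vectors, and intersect over $a$. If anything, your bookkeeping is slightly cleaner than the paper's, since you invoke the Reeb vectors only in the direction where $(M,\omega^a,\eta^a)$ is assumed polycosymplectic --- which is precisely when hypotheses (i)--(ii) guarantee their existence --- and use only the trivial inclusion in the converse direction.
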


\begin{proof} We make the usual identification $T\tilde M={\rm pr}^*(TM)\oplus {\rm pr'}^*(T\R)$, where ${\rm pr'}\colon \tilde M\to \R$ is the canonical projection. A tangent vector in $\tilde M$ at $\tilde x=(x,s)$ is of the form $(v_x,v_s)$, and 
\begin{equation}\label{eq:contraction}
i_{(v_x,v_s)}\tilde\omega^a={\rm pr}^*(i_{v_x}\omega^a)+ds(v_s)\,{\rm pr}^*\eta^a-{\rm pr}^*(i_{v_x}\eta^a)\, ds.  
\end{equation}
The 1-forms $i_{v_x}\omega^a$ and $\eta^a$ are independent: this can be seen from the identities 
\[
0=i_{R_a}(i_{v_x}\omega^a),\qquad 1=i_{R_a}\eta^a.
\]
Then, from~\eqref{eq:contraction} we can write 
\[
\left.\ker\tilde\omega^a\right|_{\tilde x} = \left.(\ker\omega^a\cap\ker\eta^a)\right|_{x}\oplus 0\subset T_{\tilde x}\tilde M,
\]
and it follows that $\cap_a\ker\tilde\omega^a=(0,0)$ if, and only if, $\cap_a(\ker\omega^a\cap\ker\eta^a)=0$.
\end{proof}

It is insightful to discuss the case where $(M,\omega^a,\eta^a)$ is standard in coordinates. Let us choose an adapted chart $U\subset M$ so that $M$ looks locally like $\R^k\times (T^1_k)^*Q$:
\[
\eta^a=dt^a,\qquad \omega^a=dq^i\wedge dp_i^a. 
\]
Then, on the chart $U\times\R\subset\tilde M$ we have
\[
\tilde\omega^a=dq^i\wedge dp_i^a+ds\wedge dt^a, 
\]
and it follows that $\tilde M$ is also standard. From this perspective, the 1-forms $\eta^a$ play the role of the momenta associated to the new coordinate $s\in\R$, giving the local identification with the canonical model $(T^1_k)^*(Q\times \R)$, where now the ``space of parameters'' $\tilde Q=Q\times\R$ has dimension $\dim Q+1$.  We will revisit this example later and show that this is actually a global statement in the sense that the lift of the stable cotangent bundle of $k^1$-covelocities of a manifold $Q$ is the cotangent bundle of $k^1$-covelocities of $Q$, see Section \ref{sec:stablecotangentbundle}.

An immediate observation about the way $\tilde\omega^a$ is defined in expression~\eqref{def:tildeomega} is that one can ``recover'' both $\eta^a$ and $\omega^a$ from $\tilde\omega^a$ as follows:
\begin{equation}\label{eq:recover}
\eta^a= (i_0)^*\big[i_{\partial/\partial s}\,\tilde\omega^a\big],\qquad \omega^a=(i_0)^*\tilde\omega^a, 
\end{equation}
where $i_0\colon M\to \tilde M = M\times\R$ is the inclusion at $0$, namely $i_0(x)=(x,0)$. Note that the first of the identities~\eqref{eq:recover} is obtained from $({\rm pr})^*\eta^a= i_{\partial/\partial s}\, \tilde\omega^a$  pulling back by $i_0$ and observing that $({\rm pr}\circ i_0)={\rm Id}_M$.

\begin{lemma}Let $(M,\omega^a,\eta^a)$ be a polyscosymplectic manifold and $\Phi_g$   a polyscosymplectic action on $M$ with momentum map $J$. Define $\tilde\Phi_g\colon \tilde M\to\tilde M$ by:
\[
\tilde\Phi_g(x,s)=(\Phi_g(x),s).
\]
Then: 
\begin{enumerate}[label=({\arabic*})]
\item $\tilde\Phi_g$ is a polysymplectic action.
\item The map $\tilde J\colon\tilde M\to\lagdk$ defined as $J(x,s)=J(x)$ is a momentum map for $\tilde\Phi_g$.
\end{enumerate}
\end{lemma}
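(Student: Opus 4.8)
The statement has two parts, and both should follow by straightforward bookkeeping from the definitions and from the identities~\eqref{def:tildeomega}, \eqref{eq:contraction} and~\eqref{eq:recover} established in Lemma~\ref{lem:extended}. The plan is to pull everything back along the projection $\mathrm{pr}\colon\tilde M\to M$ and use that $\tilde\Phi_g$ is, by construction, the product action $\Phi_g\times\mathrm{id}_\R$, so that $\mathrm{pr}\circ\tilde\Phi_g=\Phi_g\circ\mathrm{pr}$ and likewise $\tilde\Phi_g$ fixes the $\R$-factor, i.e.\ $\mathrm{pr}'\circ\tilde\Phi_g=\mathrm{pr}'$ (hence $\tilde\Phi_g^*ds=ds$).

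\textbf{Part (1).} First I would note that $\tilde\Phi_g$ is free and proper because $\Phi_g$ is and the extra factor is trivial, and smooth for the same reason. It remains to check $\tilde\Phi_g^*\tilde\omega^a=\tilde\omega^a$ for each $a$. Using the definition $\tilde\omega^a=\mathrm{pr}^*\omega^a+ds\wedge\mathrm{pr}^*\eta^a$ and naturality of pullback,
\[
\tilde\Phi_g^*\tilde\omega^a=\tilde\Phi_g^*\mathrm{pr}^*\omega^a+\tilde\Phi_g^*(ds)\wedge\tilde\Phi_g^*\mathrm{pr}^*\eta^a=(\mathrm{pr}\circ\tilde\Phi_g)^*\omega^a+ds\wedge(\mathrm{pr}\circ\tilde\Phi_g)^*\eta^a.
\]
Since $\mathrm{pr}\circ\tilde\Phi_g=\Phi_g\circ\mathrm{pr}$, this equals $\mathrm{pr}^*(\Phi_g^*\omega^a)+ds\wedge\mathrm{pr}^*(\Phi_g^*\eta^a)$. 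Now $\Phi_g^*\omega^a=\omega^a$ by Definition~\ref{def:polycosymplecticaction}, and — as remarked after that definition — the condition $i_{\xi_M}\eta^a=0$ together with connectedness of $G$ gives $\Phi_g^*\eta^a=\eta^a$; substituting yields $\tilde\omega^a$, as wanted. Together with the fact that $(\tilde M,\tilde\omega^a)$ is polysymplectic (Lemma~\ref{lem:extended}), this shows $\tilde\Phi_g$ is a polysymplectic action.

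\textbf{Part (2).} Writing $\tilde J=J\circ\mathrm{pr}$, the functions $\tilde J^a_\xi=J^a_\xi\circ\mathrm{pr}$ satisfy $d\tilde J^a_\xi=\mathrm{pr}^*(dJ^a_\xi)=\mathrm{pr}^*(i_{\xi_M}\omega^a)$ by Definition~\ref{def:momentummap}. On the other hand, the infinitesimal generator of $\xi$ for the product action $\tilde\Phi_g$ is $\widetilde{(\xi_M)}=(\xi_M,0)$ under the splitting $T\tilde M=\mathrm{pr}^*TM\oplus\mathrm{pr}'^*T\R$, since the $\R$-factor is acted on trivially. Applying the contraction formula~\eqref{eq:contraction} with $v_x=\xi_M$ and $v_s=0$ gives
\[
i_{\widetilde{(\xi_M)}}\tilde\omega^a=\mathrm{pr}^*(i_{\xi_M}\omega^a)+0\cdot\mathrm{pr}^*\eta^a-\mathrm{pr}^*(i_{\xi_M}\eta^a)\,ds=\mathrm{pr}^*(i_{\xi_M}\omega^a),
\]
using $i_{\xi_M}\eta^a=0$. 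Comparing the two displays gives $i_{\widetilde{(\xi_M)}}\tilde\omega^a=d\tilde J^a_\xi$ for every $a$ and $\xi$, which is exactly the defining relation of a momentum map for $(\tilde M,\tilde\omega^a,\tilde\Phi_g)$.

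\textbf{Main obstacle.} There is no real obstacle here; this lemma is a compatibility check. The only points that need a moment's care are identifying the infinitesimal generator of the product action (so as to apply~\eqref{eq:contraction} correctly) and invoking the connectedness hypothesis to upgrade $i_{\xi_M}\eta^a=0$ to $\Phi_g^*\eta^a=\eta^a$ in Part~(1) — both already flagged in the text preceding Definition~\ref{def:momentummap}. One could also add, for later use, that $\tilde J$ inherits $\mathrm{Coad}^k$-equivariance from $J$, again because $\tilde J$ factors through $\mathrm{pr}$ and $\mathrm{pr}\circ\tilde\Phi_g=\Phi_g\circ\mathrm{pr}$.
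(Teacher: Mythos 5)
Your proposal is correct and follows essentially the same route as the paper: part (1) from $\mathrm{pr}\circ\tilde\Phi_g=\Phi_g\circ\mathrm{pr}$ together with the invariance of $\omega^a$ and $\eta^a$, and part (2) by identifying $\xi_{\tilde M}=(\xi_M,0)$ and contracting it into $\tilde\omega^a$ via~\eqref{eq:contraction}, with the $ds$-term killed by $i_{\xi_M}\eta^a=0$. Your write-up is in fact somewhat more explicit than the paper's (which dispatches part (1) in one line), and the closing remark on equivariance of $\tilde J$ is a sensible addition.
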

\begin{proof} The first part follows easily from ${\rm pr}\circ \tilde\Phi_g=\Phi_g\circ {\rm pr}$. For the second part, note that $\tilde J^a_\xi={\rm pr}^* J^a_\xi$ and $\xi_{\tilde M}=(\xi_M,0)$. In particular, ${\rm pr}_*\xi_{\tilde M}=\xi_M\circ {\rm pr}$ and then
\[
i_{\xi_{\tilde M}}\tilde\omega^a={\rm pr}^*(i_{\xi_M}\omega^a-\eta^a({\xi_M}) ds)={\rm pr}^*(i_{\xi_M}\omega^a)=d\tilde J^a_\xi,
\]
where we have used that $i_{\xi_M}\eta^a=0$.
\end{proof}

We remark that the definition of $\tilde J$ implies that there is no ambiguity in the notation
\[
\tilde J^{-1}(\mu)=J^{-1}(\mu)\times\R. 
\]

Consider a polycosymplectic manifold $(M,\omega^a,\eta^a)$ with a polycosymplectic action $\Phi_g$ and momentum map $J$. We can construct, as above, the polysymplectic manifold $(\tilde M,\tilde\omega^a)$ and the action and momentum map $\tilde J$ and $\tilde \Phi_g$, respectively. In this situation, if we fix a regular value $\mu$ of $J$:
\begin{enumerate}[label=({\arabic*})]
\item We will say that $M$ can be reduced at $\mu$ if the conclusions of Theorem~\ref{thm:main} hold for $M$ at the regular value $\mu$ of $J$, i.e. if $M_\mu$ is a polycosymplectic manifold with polycosymplectic structure defined by $\pi_\mu^*\omega^a_\mu=j_\mu^*\omega^a$ and $\pi_\mu^*\eta^a_\mu=j_\mu^*\eta^a$. This means that we assume that $\disR_x\oplus \left.\tilde\lag_\mu\right|_x=\tilde\lag_x^{c\omega} \cap \tilde\lag_x^{c\omega \,c\omega}$. 
\item We will say that $\tilde M$ can be reduced at $\mu$ if the conclusions of Theorem~\ref{thm:polysymred} hold for $\tilde M$ at the regular value $\mu$ of $\tilde J$, i.e. if $\tilde M_\mu$ is a polysymplectic manifold with polysymplectic structure defined by $\tilde\pi_\mu^*\tilde\omega^a_\mu=\tilde j_\mu^*\tilde\omega^a$. This means that we assume that $\left.\tilde\lag_\mu\right|_{(x,s)}=\tilde\lag_{(x,s)}^{\tilde\omega} \cap \tilde\lag_{(x,s)}^{\tilde\omega\,\tilde\omega}$.
\end{enumerate}
Note that $\mu$ is a regular value of $J$ if, and only if, it is a regular value for $\tilde J$.

\begin{lemma} \label{newlemmatom} Assume that a polycosymplectic structure $\omega^a$ and $\eta^a$ on $M$ and a polysymplectic structure $\tilde \omega^a$ on $\tilde M=M\times \R$ are related as in Lemma~\ref{lem:extended}. Then
\[
\left.{\tilde\lag}_\mu \right|_{(x,s)} = \left.{\tilde\lag}_\mu \right|_{x} \times \{0\}, \qquad  {\tilde\lag}^{\tilde\omega}_{(x,s)} =  {\tilde\lag}^{c\omega}_{x} \times \R, \qquad  {\tilde\lag}^{\tilde\omega\,\tilde\omega}_{(x,s)} = (\cap_{a=1}^k\ker\eta^a_x \cap   {\tilde\lag}^{c\omega\,c\omega}_x) \times\{0\}
\] 
\end{lemma}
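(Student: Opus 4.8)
The plan is to verify the three claimed identities separately, by unwinding the definitions of the infinitesimal generators, the lifted forms $\tilde\omega^a$, and the polysymplectic/polycosymplectic orthogonals. Throughout I will use the identification $T_{(x,s)}\tilde M = T_xM\oplus T_s\R$ and the formula
\begin{equation*}
i_{(v_x,v_s)}\tilde\omega^a = {\rm pr}^*(i_{v_x}\omega^a) + ds(v_s)\,{\rm pr}^*\eta^a - {\rm pr}^*(i_{v_x}\eta^a)\,ds
\end{equation*}
already recorded in the proof of Lemma~\ref{lem:extended}, together with $\xi_{\tilde M} = (\xi_M,0)$ and $\tilde J = {\rm pr}^*J$.

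First, $\left.\tilde\lag_\mu\right|_{(x,s)} = \left.\tilde\lag_\mu\right|_x\times\{0\}$: since $\xi_{\tilde M}(x,s) = (\xi_M(x),0)$, the vertical space $\tilde\lag_{(x,s)}$ for the lifted action is exactly $\tilde\lag_x\times\{0\}$; and because $\tilde J^{-1}(\mu) = J^{-1}(\mu)\times\R$, equivariance of $\tilde J$ gives $G_\mu = \widetilde{G_\mu}$ (same isotropy group), so $\left.\tilde\lag_\mu\right|_{(x,s)} = \{(\xi_M(x),0)\st \xi\in\lag_\mu\} = \left.\tilde\lag_\mu\right|_x\times\{0\}$. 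Second, for $\tilde\lag_{(x,s)}^{\tilde\omega}$: a vector $(v_x,v_s)$ lies in it iff $\tilde\omega^a((v_x,v_s),(\xi_M(x),0)) = 0$ for all $a$ and all $\xi$; evaluating the contraction formula on $(\xi_M(x),0)$, the $ds$-terms drop out (since $ds$ annihilates $(\xi_M(x),0)$ in the second slot after the appropriate bookkeeping, using $i_{\xi_M}\eta^a = 0$), leaving $\omega^a(v_x,\xi_M(x)) = 0$ for all $a,\xi$ — i.e.\ $v_x\in\tilde\lag_x^{c\omega}$, with $v_s$ unconstrained. Hence $\tilde\lag_{(x,s)}^{\tilde\omega} = \tilde\lag_x^{c\omega}\times\R$. (This is the same computation that shows $T_x\tilde J^{-1}(\mu) = \tilde\lag_x^{\tilde\omega}$, consistent with $\tilde J^{-1}(\mu) = J^{-1}(\mu)\times\R$.)

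Third, and this is the step I expect to be the real work: computing $\tilde\lag_{(x,s)}^{\tilde\omega\,\tilde\omega} = (\tilde\lag_x^{c\omega}\times\R)^{\tilde\omega}$. A vector $(w_x,w_s)$ lies here iff $\tilde\omega^a((w_x,w_s),(u_x,u_s)) = 0$ for every $a$ and every $(u_x,u_s)\in\tilde\lag_x^{c\omega}\times\R$. Using the contraction formula, $\tilde\omega^a((w_x,w_s),(u_x,u_s)) = \omega^a(w_x,u_x) + u_s\,\eta^a(w_x) - w_s\,\eta^a(u_x)$. Taking $u_x = 0$ and $u_s$ arbitrary forces $\eta^a(w_x) = 0$ for all $a$, i.e.\ $w_x\in\cap_a\ker\eta^a_x$. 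Given that, and recalling $\disR_x\subset\tilde\lag_x^{c\omega}$ so that $\eta^a(u_x)$ can be made equal to any prescribed value by adjusting the $\disR_x$-component of $u_x$ — in particular $\eta^b(R_b) = 1$ — taking $u_x = R_b$ (which lies in $\tilde\lag_x^{c\omega}$) and $u_s = 0$ forces $\omega^b(w_x,R_b) + 0 - w_s\cdot 1 = 0$, and since $i_{R_b}\omega^b = 0$ this gives $w_s = 0$. With $w_s = 0$ and $w_x\in\cap_a\ker\eta^a_x$, the remaining condition reduces to $\omega^a(w_x,u_x) = 0$ for all $u_x\in\tilde\lag_x^{c\omega}$, i.e.\ $w_x\in\tilde\lag_x^{c\omega\,c\omega}$. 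Collecting everything, $\tilde\lag_{(x,s)}^{\tilde\omega\,\tilde\omega} = (\cap_{a=1}^k\ker\eta^a_x\cap\tilde\lag_x^{c\omega\,c\omega})\times\{0\}$, as claimed.

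The main obstacle is bookkeeping in the third identity: one must be careful that the quantifier over $(u_x,u_s)$ ranges over the \emph{product} $\tilde\lag_x^{c\omega}\times\R$ and exploit both degrees of freedom ($u_s$ free to extract $w_x\in\cap\ker\eta^a$, and the Reeb vectors $R_b\in\tilde\lag_x^{c\omega}$ to extract $w_s = 0$). The inclusion $\disR_x\subset\tilde\lag_x^{c\omega}$ — one of the standard properties of the polycosymplectic orthogonal recalled earlier — is exactly what makes the $w_s = 0$ step go through; without it one could not test against $R_b$. Once $w_s = 0$ and $w_x\in\cap_a\ker\eta^a_x$ are established, the reduction to $w_x\in\tilde\lag_x^{c\omega\,c\omega}$ is immediate from the definition of the polycosymplectic orthogonal.
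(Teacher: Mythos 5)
Your proof is correct and follows essentially the same route as the paper: the first two identities by inspection of the lifted action and the contraction formula, and the third by testing against $(0,1)$ and $(R_b,0)\in\tilde\lag_x^{c\omega}\times\R$ to force $w_x\in\cap_a\ker\eta^a_x$ and $w_s=0$, leaving exactly the condition $w_x\in\tilde\lag_x^{c\omega\,c\omega}$. The only blemish is a sign slip in your expansion of $\tilde\omega^a((w_x,w_s),(u_x,u_s))$ (the $ds\wedge{\rm pr}^*\eta^a$ contribution should read $w_s\,\eta^a(u_x)-u_s\,\eta^a(w_x)$), which is harmless here since every term is being set to zero.
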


\begin{proof} From the definition of the extended action we see immediately that ${\tilde\lag}_{(x,s)} = {\tilde\lag}_x \times\{0\}$, and likewise $\left.{\tilde\lag}_\mu \right|_{(x,s)} = \left.{\tilde\lag}_\mu \right|_{x} \times \{0\}$.

Given that $\eta^a\mid_{\left.{\tilde\lag}_\mu \right|_{x}} = 0$, we get that $ {\tilde\lag}^{\tilde\omega}_{(x,s)} = {\tilde\lag}^{c\omega}_{x} \times \R$.

For the last property, consider $(v_x,v_s) \in  {\tilde\lag}^{\tilde\omega\,\tilde\omega}_{(x,s)}$. Then, for all  $(w_x,w_s) \in {\tilde\lag}^{\tilde\omega}_{(x,s)}= {\tilde\lag}^{c\omega}_{x} \times \R$,
\[
0={\tilde\omega}^a\left((v_x,v_s),(w_x,w_s)\right) = \omega^a(v_x,w_x) + v_s \eta^a(w_x) - w_s\eta^a(v_x).
\]
When we plug in $(w_x,w_s) = (0,1)$, we get that $v_x \in \cap_{a=1}^k\ker\eta^a_x$. 
When we plug in $(w_x,w_s) = (R_a,0)$, we get that $v_s=0$. Since now only $0=\omega^a(v_x,w_x)$, for all $w_x\in {\tilde\lag}^{c\omega}_{x}$ remains, we get the desired relation. 
\end{proof}

\begin{proposition}\label{pro:equivalencepolysympolyco} Let $(M,\omega^a,\eta^a)$ be a polyscosymplectic manifold. Then $(M,\omega^a,\eta^a)$ can be reduced at $\mu$ if, and only if, $(\tilde M,\tilde \omega^a)$ can be reduced at $\mu$.
\end{proposition}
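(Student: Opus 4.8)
The plan is to reduce the statement, via Lemma~\ref{newlemmatom}, to a pointwise linear-algebraic equivalence and then verify that equivalence. First I would check that all side hypotheses match up: $\tilde J^{-1}(\mu)=J^{-1}(\mu)\times\R$, the value $\mu$ is regular for $J$ if and only if it is regular for $\tilde J$, and $\tilde\Phi_g(x,s)=(\Phi_g(x),s)$ is free and proper precisely when $\Phi_g$ is, with the same isotropy group $G_\mu$ (which then acts freely and properly on $\tilde J^{-1}(\mu)=J^{-1}(\mu)\times\R$ iff it does so on $J^{-1}(\mu)$). Hence, by the very definitions of ``can be reduced at $\mu$'' adopted just before the statement, $(M,\omega^a,\eta^a)$ can be reduced at $\mu$ means $\disR_x\oplus\left.\tilde\lag_\mu\right|_x=\tilde\lag_x^{c\omega}\cap\tilde\lag_x^{c\omega\,c\omega}$ for all $x\in J^{-1}(\mu)$, while $(\tilde M,\tilde\omega^a)$ can be reduced at $\mu$ means $\left.\tilde\lag_\mu\right|_{(x,s)}=\tilde\lag_{(x,s)}^{\tilde\omega}\cap\tilde\lag_{(x,s)}^{\tilde\omega\,\tilde\omega}$ for all $(x,s)\in\tilde J^{-1}(\mu)$; it suffices to show these two families of conditions are equivalent.

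Next I would substitute the three identities of Lemma~\ref{newlemmatom} into the polysymplectic condition. Since $\left.\tilde\lag_\mu\right|_{(x,s)}=\left.\tilde\lag_\mu\right|_x\times\{0\}$ and
\[
\tilde\lag_{(x,s)}^{\tilde\omega}\cap\tilde\lag_{(x,s)}^{\tilde\omega\,\tilde\omega}
=\big(\tilde\lag_x^{c\omega}\times\R\big)\cap\Big(\big({\textstyle\bigcap_{a}\ker\eta_x^a}\cap\tilde\lag_x^{c\omega\,c\omega}\big)\times\{0\}\Big)
=\Big(\tilde\lag_x^{c\omega}\cap\tilde\lag_x^{c\omega\,c\omega}\cap{\textstyle\bigcap_{a}\ker\eta_x^a}\Big)\times\{0\},
\]
the polysymplectic reduction condition at a point $(x,s)$ is equivalent --- for one, hence every, value of $s$ --- to the $s$-independent condition
\[
\left.\tilde\lag_\mu\right|_x=\tilde\lag_x^{c\omega}\cap\tilde\lag_x^{c\omega\,c\omega}\cap\bigcap_{a=1}^k\ker\eta_x^a.
\]
So the proposition reduces to showing, for each $x\in J^{-1}(\mu)$, that this identity holds if and only if $\disR_x\oplus\left.\tilde\lag_\mu\right|_x=\tilde\lag_x^{c\omega}\cap\tilde\lag_x^{c\omega\,c\omega}$ does.

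For this linear equivalence I would fix $x$, abbreviate $A=\tilde\lag_x^{c\omega}\cap\tilde\lag_x^{c\omega\,c\omega}$, $L=\left.\tilde\lag_\mu\right|_x$ and $K=\bigcap_{a}\ker\eta_x^a$, and assemble the relevant facts: $\disR_x\subset A$ (the Reeb span lies in $S^{c\omega}$, hence in $S^{c\omega\,c\omega}$, for every subspace $S$); $L\subset A$ (from $L=\tilde\lag_x\cap\tilde\lag_x^{c\omega}$, which is~\eqref{eq:intersection}, together with $\tilde\lag_x\subset\tilde\lag_x^{c\omega\,c\omega}$); $L\subset\tilde\lag_x\subset K$ (since $i_{\xi_M}\eta^a=0$); $\disR_x\cap L=\{0\}$ (because $\disR_x\cap\tilde\lag_x=\{0\}$); $\disR_x\cap K=\{0\}$ (from $i_{R_a}\eta^b=\delta_a^b$); and, by a dimension count using $\eta^1\wedge\dots\wedge\eta^k\neq0$ and~\eqref{eq:etaidentity}, the splitting $V=K\oplus\disR_x$. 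Granting these: if $A=\disR_x\oplus L$, then any $v\in A\cap K$ writes $v=r+\ell$ with $r\in\disR_x$ and $\ell\in L\subset K$, so $r=v-\ell\in\disR_x\cap K=\{0\}$ and $v=\ell\in L$; with $L\subset A\cap K$ this gives $L=A\cap K$. Conversely, if $L=A\cap K$, then any $v\in A$ decomposes along $V=K\oplus\disR_x$ as $v=v_K+v_R$ with $v_R\in\disR_x\subset A$, so $v_K=v-v_R\in A\cap K=L$ and $v\in L\oplus\disR_x$; with $\disR_x\oplus L\subset A$ this gives $A=\disR_x\oplus L$.

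I do not expect a genuine obstacle: the substantive work is already packaged in Lemma~\ref{lem:extended}, Lemma~\ref{newlemmatom}, Proposition~\ref{pro:linearreduction} and the identity~\eqref{eq:intersection}, and what remains is bookkeeping. The one spot to stay alert is the cancellation argument in both directions --- one must use the inclusion $L\subset K$ so that $r=v-\ell$ stays in $K$, and one must invoke the honest direct-sum decomposition $V=K\oplus\disR_x$ (rather than merely the per-index splittings~\eqref{eq:etaidentity}) so that the decomposition $v=v_K+v_R$ is available for every $v\in A$.
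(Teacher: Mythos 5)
Your proof is correct and follows essentially the same route as the paper: both use Lemma~\ref{newlemmatom} to reduce the statement to the pointwise identity $\left.\tilde\lag_\mu\right|_x=\tilde\lag_x^{c\omega}\cap\tilde\lag_x^{c\omega\,c\omega}\cap\bigcap_a\ker\eta^a_x$ and then exploit the splitting $T_xM=\bigl(\bigcap_a\ker\eta^a_x\bigr)\oplus\disR_x$ together with $\left.\tilde\lag_\mu\right|_x\subset\bigcap_a\ker\eta^a_x$. Your treatment of the converse is in fact slightly more careful than the paper's (which phrases the decomposition $v=v_K+v_R$ informally as ``two subcases''), since you explicitly use $\disR_x\subset\tilde\lag_x^{c\omega}\cap\tilde\lag_x^{c\omega\,c\omega}$ to conclude $v_K$ lies in that intersection.
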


\begin{proof} Assume that $(M,\omega^a,\eta^a)$ can be reduced at $\mu$.  In view of the previous lemma, we get
\begin{eqnarray*}
\tilde\lag_{(x,s)}^{\tilde\omega} \cap \tilde\lag_{(x,s)}^{\tilde\omega\,\tilde\omega} &=& \left((\cap_{a=1}^k\ker\eta^a_x \cap {\tilde\lag}^{c\omega\,c\omega}_x) \times\{0\} \right) \cap \left({\tilde\lag}^{c\omega}_{x} \times \R\right)
\\&=&  \left( \cap_{a=1}^k\ker\eta^a_x \cap \left( {\tilde\lag}^{c\omega\,c\omega}_x \cap  {\tilde\lag}^{c\omega}_{x}   \right) \right) \times\{0\} \\&=&
\left( \cap_{a=1}^k\ker\eta^a_x \cap \left( \disR_x \oplus  \left.{\tilde\lag}_\mu \right|_{x}    \right) \right) \times\{0\}
\\&=&
   \left.{\tilde\lag}_\mu \right|_{x}   \times\{0\} =    \left.{\tilde\lag}_\mu \right|_{(x,s)}.   
\end{eqnarray*}
In the before-last step, we have used that $\left.{\tilde\lag}_\mu \right|_{x}\subset \cap_{a=1}^k\ker\eta^a_x $ (in view of $\eta^a\mid_{\left.{\tilde\lag}_\mu \right|_{x}} = 0$) and $\cap_{a=1}^k\ker\eta^a_x \cap  \disR_x =\{0\}$.

Conversely, if we assume that $(\tilde M,\tilde \omega^a)$ can be reduced at $\mu$, then we see that  
\begin{equation} \label{converselift}  \cap_{a=1}^k\ker\eta^a_x \cap \left( {\tilde\lag}^{c\omega\,c\omega}_x \cap  {\tilde\lag}^{c\omega}_{x}   \right)   =
  \left.{\tilde\lag}_\mu \right|_{x}.  
\end{equation}
Recall that we only need to show that ${\tilde\lag}^{c\omega\,c\omega}_x \cap  {\tilde\lag}^{c\omega}_{x}  \subset \disR_x \oplus  \left.{\tilde\lag}_\mu \right|_{x}$. Let $v_x\in {\tilde\lag}^{c\omega\,c\omega}_x \cap  {\tilde\lag}^{c\omega}_{x}$.  Since we know that  $(\cap_{a=1}^k\ker\eta^a_x)\oplus\disR_x=T_xM$, we only need to consider two subcases. If $v_x\in \disR_x \subset \disR_x \oplus  \left.{\tilde\lag}_\mu \right|_{x}$, we are done. If $v_x$ lies in $\cap_{a=1}^k\ker\eta^a_x$, then the relation \eqref{converselift}  tells us that $v_x\in \left.{\tilde\lag}_\mu \right|_{x} \subset \disR_x \oplus  \left.{\tilde\lag}_\mu \right|_{x}.$
\end{proof}

We can now use the sufficient conditions in the polysymplectic reduction theorem of~\cite{polycosymplecticmarreroreduction} to find sufficient conditions for Theorem~\ref{thm:main}.

\begin{theorem}\label{thm:sufficient} Let $\Phi_g$ be a polycosymplectic action on $(M,\omega^a,\eta^a)$ with momentum map $J$ and $\mu$ a regular value. Assume that $J$ satisfies, for each $x\in J^{-1}(\mu)$ and $a=1,\dots,k$, the following  condition:
\begin{equation} \label{C1}
\left.\tilde\lag_\mu\right|_x=\cap_{a=1}^k \big( \left.\tilde\lag_{\mu_a}\right|_x  +\left.(\ker\omega^a\cap\ker\eta^a)\right|_{x}\big)\cap T_x(J^{-1}(\mu)). 
\end{equation}
Then $J^{-1}(\mu)/G_\mu$ is a polycosymplectic manifold and the polycosymplectic structure $(\omega^a_\mu,\eta^a_\mu)$ is uniquely determined from the relations:
\[
\pi_\mu^*\omega^a_\mu=j_\mu^*\omega^a,\qquad  \pi_\mu^*\eta^a_\mu=j_\mu^*\eta^a.
\]
\end{theorem}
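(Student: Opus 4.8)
\textbf{Proof plan for Theorem~\ref{thm:sufficient}.}

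The strategy is to transport condition~\eqref{C1} up to the lift $\tilde M=M\times\R$, recognize it there as condition $(A2)$ of Theorem~\ref{thm:polysymred2} for the polysymplectic data $(\tilde M,\tilde\omega^a,\tilde\Phi_g,\tilde J)$, invoke Proposition~\ref{pro:A2enough} to conclude that $(\tilde M,\tilde\omega^a)$ can be reduced at $\mu$, and then push the conclusion back down to $M$ via Proposition~\ref{pro:equivalencepolysympolyco}. So the first step is to write out what $(A2)$ says for the lifted structure. By Lemma~\ref{newlemmatom} we have $\left.\tilde\lag_\mu\right|_{(x,s)}=\left.\tilde\lag_\mu\right|_x\times\{0\}$ and $T_{(x,s)}\tilde J^{-1}(\mu)=\tilde\lag^{\tilde\omega}_{(x,s)}=\tilde\lag_x^{c\omega}\times\R$. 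It remains to compute $\left.\tilde\lag_{\mu_a}\right|_{(x,s)}+\ker\tilde\omega^a|_{(x,s)}$. From the definition of the extended action, $\left.\tilde\lag_{\mu_a}\right|_{(x,s)}=\left.\tilde\lag_{\mu_a}\right|_x\times\{0\}$, and from the computation of $\ker\tilde\omega^a$ inside the proof of Lemma~\ref{lem:extended} we have $\ker\tilde\omega^a|_{(x,s)}=(\ker\omega^a\cap\ker\eta^a)|_x\times\{0\}$; here I should double-check that no $\R$-component sneaks in, but the contraction formula~\eqref{eq:contraction} makes clear that any element of $\ker\tilde\omega^a$ has vanishing $\partial/\partial s$-component, so indeed the sum is $\big(\left.\tilde\lag_{\mu_a}\right|_x+(\ker\omega^a\cap\ker\eta^a)|_x\big)\times\{0\}$.

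Now I assemble the pieces. Intersecting over $a$ gives $\cap_a\big(\left.\tilde\lag_{\mu_a}\right|_{(x,s)}+\ker\tilde\omega^a|_{(x,s)}\big)=\big(\cap_a(\left.\tilde\lag_{\mu_a}\right|_x+(\ker\omega^a\cap\ker\eta^a)|_x)\big)\times\{0\}$, and then intersecting this with $T_{(x,s)}\tilde J^{-1}(\mu)=\tilde\lag_x^{c\omega}\times\R$ only keeps the same set (its $\R$-component is already $\{0\}$ and its $M$-component lies in $\tilde\lag_x^{c\omega}$ once we intersect with $T_xJ^{-1}(\mu)=\tilde\lag_x^{c\omega}$). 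Thus $(A2)$ for $(\tilde M,\tilde\omega^a)$ at $\mu$ reads exactly
\[
\left.\tilde\lag_\mu\right|_x\times\{0\}=\Big(\cap_{a=1}^k\big(\left.\tilde\lag_{\mu_a}\right|_x+(\ker\omega^a\cap\ker\eta^a)|_x\big)\cap T_xJ^{-1}(\mu)\Big)\times\{0\},
\]
which holds precisely when~\eqref{C1} holds on $M$. Hence~\eqref{C1} is equivalent to $(A2)$ for the lift.

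With $(A2)$ in hand for $(\tilde M,\tilde\omega^a,\tilde\Phi_g,\tilde J)$, Proposition~\ref{pro:A2enough} yields that the necessary and sufficient condition~\eqref{eq:nondegenerate} of Theorem~\ref{thm:polysymred} holds for the lift, i.e.\ $(\tilde M,\tilde\omega^a)$ can be reduced at $\mu$ in the sense of item~(2) preceding Lemma~\ref{newlemmatom}. (One should also note, for the hypotheses of Theorem~\ref{thm:polysymred2}/Proposition~\ref{pro:A2enough}, that $G_\mu$ acts freely and properly on $\tilde J^{-1}(\mu)=J^{-1}(\mu)\times\R$ because it does so on $J^{-1}(\mu)$ and acts trivially on the $\R$ factor, and that $\mu$ is a regular value of $\tilde J$ iff it is one of $J$.) Now Proposition~\ref{pro:equivalencepolysympolyco} applies in the reverse direction and gives that $(M,\omega^a,\eta^a)$ can be reduced at $\mu$, which is exactly the assertion that $M_\mu=J^{-1}(\mu)/G_\mu$ is polycosymplectic with structure determined by $\pi_\mu^*\omega^a_\mu=j_\mu^*\omega^a$ and $\pi_\mu^*\eta^a_\mu=j_\mu^*\eta^a$. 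The only place demanding genuine care is the identification of the summands of $(A2)$ on the lift with the summands of~\eqref{C1} on $M$ — specifically confirming that $\ker\tilde\omega^a|_{(x,s)}$ contributes $(\ker\omega^a\cap\ker\eta^a)|_x\times\{0\}$ and nothing more, and that intersecting with the tangent space to $\tilde J^{-1}(\mu)$ is harmless; everything else is a direct chain of cited results.
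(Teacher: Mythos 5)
Your proposal is correct and follows essentially the same route as the paper's proof: compute $\ker\tilde\omega^a|_{(x,s)}=(\ker\omega^a\cap\ker\eta^a)|_x\times\{0\}$ and $T_{(x,s)}\tilde J^{-1}(\mu)=T_xJ^{-1}(\mu)\times\R$, identify condition~\eqref{C1} with condition $(A2)$ for $(\tilde M,\tilde\omega^a,\tilde\Phi_g,\tilde J)$, and then conclude via Proposition~\ref{pro:A2enough} and Proposition~\ref{pro:equivalencepolysympolyco}. The extra checks you flag (freeness and properness of the $G_\mu$-action on $\tilde J^{-1}(\mu)$, regularity of $\mu$ for $\tilde J$) are indeed needed and are noted in the paper just before the theorem.
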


\begin{proof} In view of Proposition~\ref{pro:A2enough} we only need to check that under the current condition on $J$ the condition $(A2)$ in Theorem~\ref{thm:polysymred2} is satisfied for $\tilde J$.

Recall from~\eqref{eq:contraction} that we have the following equality:
\[
\left.\ker\tilde\omega^a\right|_{\tilde x} = \left.(\ker\omega^a\cap\ker\eta^a)\right|_{x}\times \{0\}.
\]
Likewise, from Lemma~\ref{newlemmatom}  one can easily obtain the following two identities: \[
T_{\tilde x}(\tilde J^{-1}(\mu))= T_x(J^{-1}(\mu))\times \R
\]
and
\begin{align*}
\left.\tilde\lag_\mu\right|_{\tilde x}
&= \left.\tilde\lag_\mu\right|_x\times \{0\}\\
&=\left[\cap_{a=1}^k \big( \left.\tilde\lag_{\mu_a}\right|_x +\left.(\ker\omega^a\cap\ker\eta^a)\right|_{x}\big)\cap T_x(J^{-1}(\mu))\right]\times \{0\}\\
&= \left(\left[\cap_{a=1}^k \big( \left.\tilde\lag_{\mu_a}\right|_x +\left.(\ker\omega^a\cap\ker\eta^a)\right|_{x}\big)\right]\times \{0\}\right) \cap \left(T_x(J^{-1}(\mu)) \times \R\right)\\
&= \cap_{a=1}^k \big( \left.\tilde\lag_{\mu_a}\right|_x  +\left.\ker\tilde \omega^a\right|_{\tilde x}\big)\cap T_{\tilde x}(\tilde J^{-1}(\mu)).
\end{align*}
Therefore, the conditions of Theorem~\ref{thm:polysymred} apply to the polysymplectic action $\tilde \Phi_g$ on $\tilde M$ with momentum $\tilde J$. We can now apply Proposition~\ref{pro:equivalencepolysympolyco}.
\end{proof}

The situation is summarized in the following diagram
\begin{equation}\label{eq:dia1}
\begin{tikzcd}[row sep=large, column sep = 13ex]
M\arrow[hookrightarrow]{r}{i_0} & M\times\R \arrow[bend right=30,dashed,swap]{l}{{\rm pr}}  \\
    J^{-1}(\mu)\arrow[hookrightarrow]{r}\arrow{d}[swap]{\pi_\mu}\arrow{u}{j_\mu}\arrow{r}{i^\mu_0} & J^{-1}(\mu)\times\R \arrow{d}{\tilde\pi_\mu}\arrow{u}[swap]{\tilde j_\mu}  \arrow[bend left=30,dashed]{l}{{\rm pr}_\mu}  \\
    M_\mu\arrow{r}[hookrightarrow,swap]{\tilde i_0} & M_\mu\times\R \arrow[bend left=30,dashed]{l}{{\rm Pr}} 
\end{tikzcd}  
\end{equation} 
which can be used to give an alternative proof of Proposition~\ref{pro:equivalencepolysympolyco} through diagram chasing.

\subsection{Dynamics on \texorpdfstring{$\tilde M$}{Lg}}

There is an interesting, alternative approach to Theorem~\ref{thm:maindynamics}. Consider on $\tilde M$ the extended Hamiltonian $\tilde H\colon\tilde M\to \R$ defined as
\[
\tilde H={\rm pr}^*H-ks, 
\]
or $\tilde H(x,s)=H(x)-ks$. Let $\pmb{X}$ be a solution of \eqref{eq:k-Cosym} for $M$, and consider the $k$-vector field $\pmb{\tilde X}$ on $\tilde M$ with components:
\[
\tilde X_a=X_a+R_a(H)\fpd{}{s}\in\mathfrak{X}(\tilde M).
\]
We claim that $\pmb{\tilde X}$ solves \eqref{eq:k-Sym} in $(\tilde M,\tilde \omega^a)$ for $\tilde H$. Indeed, since 
\[
i_{\tilde X_a} \tilde\omega^a={\rm pr}^*(i_{X_a}\omega^a)-ds+{\rm pr}^*(R_a(H)){\rm pr}^*\eta^a,
\]
we find
\begin{align*}
\sum_{a=1}^k i_{\tilde X_a} \tilde\omega^a&={\rm pr}^*\left(dH-\sum_{a=1}^k R_a(H)\eta^a\right)-kds+\sum_{a=1}^k {\rm pr}^*(R_a(H)\eta^a)\\
&={\rm pr}^*dH - kds=d\tilde H. 
\end{align*}

If $\pmb{X}$ is tangent to $J^{-1}(\mu)$ and $G_\mu$-invariant, then $\pmb{\tilde X}$ is tangent to $\tilde J^{-1}(\mu)$, and it is invariant w.r.t. the lifted action $\tilde\Phi_g$ of $G_\mu$. We let $\pmb{\tilde Z}$ denote the reduced $k$-vector field on $\tilde M_\mu$ associated with $\pmb{\tilde X}$, which has components $\tilde Z^a=T\tilde\pi_\mu(\tilde X_a)$ (see Diagram~\eqref{eq:dia1} for the notations). From Theorem~\ref{thm:polysymdyn} we know that $\pmb{\tilde Z}$ solves~\eqref{eq:k-Sym} on $\tilde M_\mu$ for the Hamiltonian $\tilde h_\mu$ (the reduction of $\tilde H$). It is clear that $\overline{\pmb{X}}_{\mu}$ is the $k$-vector field on $M_\mu$ with components $Z^a= T{\rm Pr}(\tilde Z^a)$, where ${\rm Pr}\colon \tilde M_\mu\to M_\mu$ is the projection. From here one can check that $\overline{\pmb{X}}_{\mu}$ solves~\eqref{eq:k-Cosym} in $M_\mu$.

The previous discussion suggest a close relation between solutions of~\eqref{eq:k-Cosym} in $M$ for the Hamiltonian $H$, and solutions of~\eqref{eq:k-Sym} in $\tilde M$ for the Hamiltonian $\tilde H$. One should, however, pay special attention when relating this observation to solutions of field theories which, as we mentioned before, are given by integral sections. In fact, in general there is no way to lift a solution of the field theory in $M$ (an integral section of $\pmb{X}$) to a solution of the field theory in $\tilde M$ (an integral section of $\pmb{\tilde X}$). This phenomenon does not occur in the case $k=1$, when the lifted vector field $\tilde X$ contains all the information about the dynamics in the cosymplectic manifold $M$, see~\cite{singularcosymplectic}.

\begin{example}\label{ex:section} Consider the following Hamiltonian on $M=\R^2\times (T^1_2)^*\R$:
\[
H(t^1,t^2,q,p^1,p^2)=q (t^1t^2)+\frac{(p_1)^2 +(p_2)^2}{2}.
\]
This Hamiltonian is constructed as a two-dimensional version of the Hamiltonian governing the electrostatic equations for a charge distribution of the form $-t^1t^2$ (up to a constant), see Section 13.1 in \cite{bookpoly}. In this case, $\tilde M=M\times\R$ and 
\[
\tilde H(t^1,t^2,q,p^1,p^2,s)=-q(t^1t^2)-2s+\frac{(p_1)^2 +(p_2)^2}{2}.
\]
An integral section $\varphi\colon \R^2\to M$ 
\[
\varphi(t^1,t^2)= \big(t^1,t^2,\psi(t^1,t^2),\psi^1(t^1,t^2),\psi^2(t^1,t^2)\big)
\]
of a $k$-vector field $\pmb{X}$ which solves~\eqref{eq:k-Cosym} for $H$ will 
satisfy the Hamilton De Donder-Weyl equations~\eqref{eq:HdDW}:
\[
t^1t^2= \left(\fpd{\psi^1}{t^1}+\fpd{\psi^2}{t^2}\right),\qquad \psi^1=\fpd{\psi}{t^1}, \qquad \psi^2=\fpd{\psi}{t^2}.
\]
It is easy to check that the following integral section is a solution to the previous system of PDEs:
\[
\varphi(t^1,t^2)= \left(t^1,t^2,\frac{(t^1)^3t^2}{6},\frac{(t^1)^2t^2}{2},\frac{(t^1)^3}{6})\right). 
\]
Assume now that $\varphi$ is the projection of some integral section $\tilde\varphi\colon \R^2\to \tilde M$ of a solution $\tilde X$ of~\eqref{eq:k-Sym} for $\tilde H$. It necessarily has the form:
\[
\tilde\varphi(t^1,t^2)=\left(\psi=\frac{(t^1)^3t^2}{6},\psi^1=\frac{(t^1)^2t^2}{2},\psi^2=\frac{(t^1)^3}{6},\tilde\psi(t^1,t^2),\tilde\psi^1=t^1,\tilde\psi^2=t^2\right). 
\]
Then, again  in view of the Hamilton De Donder-Weyl equations~\eqref{eq:HdDW} (note that now ``$s$'' is a field with corresponding momenta $t^1,t^2$), the function $\tilde\psi$ must satisfy
\[
\fpd{\tilde\psi}{t^1}=-\psi t^2=-\frac{(t^1)^3(t^2)^2}{6},\qquad  \fpd{\tilde\psi}{t^2}=-\psi t^1=-\frac{(t^1)^4t^2}{6}.
\]
The equality of the mixed partial derivatives of $\tilde\psi$ w.r.t.\ $t^1$ and $t^2$ leads to an inconsistency. Thus, no such $\tilde\varphi$ exists.
\end{example}

The previous example is no exception. Indeed, a simple computation shows that the integrability of $\pmb{X}$ does not imply that of $\pmb{\tilde X}$, and there will be, in general, no integral sections of $\pmb{\tilde X}$ projecting onto those of $\pmb{X}$. The brackets $[\tilde X_a,\tilde X_b]$ satisfy, if $\pmb{X}$ is integrable, the following:
\begin{align*}
 0=[\tilde X_a,\tilde X_b]&=[X_a,X_b]+[X_a,R_b(H)\partial_s]+[R_a(H)\partial_s,X_b]+[R_a(H)\partial_s,R_b(H)\partial_s]\\
&=[X_a,R_b(H)\partial_s]+[R_a(H)\partial_s,X_b]=\big(X_a(R_b(H))-X_b(R_a(H))\big)\partial_s.
\end{align*}
This lack of integrability is closely related to the problem of reconstruction, and has already been described in the more general case of a principal bundle $P\to P/G$. The obstruction for a given lift $\pmb{Y}$ (a $k$-vector field on $P$) of some integrable $\pmb{X}$ (a $k$-vector field on $P/G$) to be integrable can be characterized as the vanishing of a certain curvature. In the current formalism, details can be found in~\cite{LTM_LP,Polyrouth}.

\section{Special cases and examples}\label{sec:examples}

\subsection{\texorpdfstring{$k$}{lg}-symplectic and \texorpdfstring{$k$}{lg}-cosymplectic manifolds}\label{sec:kcosymplectic}

The most important classes of polysymplectic and polycosymplectic manifolds are, respectively, the $k$-symplectic and $k$-cosymplectic manifolds. In this section we will show that in the case where $M$ is a $k$-cosymplectic manifold, $\tilde M$ is a $k$-symplectic manifold in a natural way. We first recall the basic definitions.

\begin{definition}\label{def:ksympl} 
Let $(M,\omega^a)$ be a $k$-polysymplectic manifold of dimension $\dim M=(k+1)n$ and $\disV$ an integrable distribution on $M$ of rank $nk$. We say that $(M,\omega^a,\disV)$ is a \emph{$k$-symplectic manifold} if $\omega^a(\disV,\disV)=0$ for each $a$.
\end{definition}
It has been shown~\cite{Awane} that a $k$-symplectic manifold $(M,\omega^a,\disV)$ admits a Darboux-type theorem for both the 2-forms $\omega^a$ and the distribution $\disV$: around every point in $M$ we can find adapted coordinates as in \eqref{eq:darboux-standard} such that, additionally,  $\disV={\rm span}\left\{\partial/\partial p^a_i\right\}$.

A $1$-symplectic manifold is an example of a polarized symplectic manifold, namely a symplectic manifold with a Lagrangian foliation $\mathcal{F}$ which, in the present case, is determined from the $n$-dimensional distribution tangent to the leaves of $\mathcal{F}$. By a well-known argument (see e.g.~\cite{Vaisman1}), the symplectic form and Lagrangian foliation can be brought locally -and simultaneously- to the form $\omega=dq^i\wedge dp_i$ and $\mathcal{F}=\{q^i={\rm constant}\}$,  and this gives the adapted coordinates  when $k=1$. Thus, as pointed out in~\cite{Awane}, the definition of $1$-symplectic manifold generalizes the notion of polarized symplectic manifold. As a matter of fact, $k$-symplectic structures have been called \emph{polarized $k$-symplectic structures} in~\cite{Submanifoldsk}. A $k$-symplectic manifold is an obvious way a standard polysymplectic manifold. The converse, however, need not be true: a standard polysymplectic manifold may not admit a distribution $\disV$ fulfilling the requirements in Definition~\ref{def:ksympl} (the sphere $\mathbb{S}^2\subset\R^3$ is an example of a symplectic manifold which admits no polarization). 

\begin{remark} Polysymplectic structures were introduced by G{\"{u}}nther~\cite{Gunter}. The definition of $k$-symplectic structure is due to Awane~\cite{Awane}, although the same structure appeared independently in~\cite{palmostcotangent,regularpalmostcotangent} with the name \emph{$k$-almost cotangent structure}. In the particular case of the frame bundle, polysymplectic structures are also described in the works of Norris (see e.g.~\cite{Norris1}).
\end{remark}

\begin{definition}\label{def:kcosymplectic} 
Let $(M,\omega^a,\eta^a)$ be a $k$-polycosymplectic manifold of dimension $\dim M=k(n+1)+n$ and $\disV$ an integrable distribution on $M$ of rank $nk$. We say that $(M,\omega^a,\eta^a,\disV)$ is a \emph{$k$-cosymplectic manifold} if $\omega^a(\disV,\disV)=0$\, $\eta^a(\disV)=0$ and $[R_a,\disV]\subset \mathcal{V}$ (for each $a$).
\end{definition}
Usually, the condition $[R_a,\disV]\subset \mathcal{V}$ is not included in the definition of $k$-cosymplectic manifold (see e.g.~\cite{Hamiltoniansystemskcosymplectic,MerinoPhd}). The reason is the following proposition, which renders that condition superfluous unless $k=1$:
\begin{proposition}[Lemma 5.1.1 in~\cite{MerinoPhd}]\label{pro:involutive} Let $(M,\omega^a,\eta^a)$ be a $k$-polycosymplectic manifold of dimension $\dim M=k(n+1)+n$ and $\disV$ an integrable distribution on $M$ of rank $nk$ such that $\omega^a(\disV,\disV)=0$. If $k\geq 2$, then $[R_a,\disV]\subset \mathcal{V}$. In particular, the distribution $\disR\oplus\mathcal{V}$ is involutive.
\end{proposition}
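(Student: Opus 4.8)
The plan is to work pointwise and exploit the Darboux-type structure that a $k$-cosymplectic manifold carries on its $\omega^a$-part together with the extra distribution $\disV$. Fix $x\in M$. First I would record what the hypotheses give us at the level of tangent spaces: we have $T_xM = \disR_x\oplus\disV_x\oplus\disH_x$ for \emph{some} complement $\disH_x$, with $\dim\disR_x=k$, $\dim\disV_x=nk$, and hence $\dim\disH_x=n$; moreover $\omega^a_x$ vanishes on $\disR_x$ (by definition of the Reeb vectors), $\omega^a_x$ vanishes on $\disV_x\times\disV_x$ (by hypothesis), and $\eta^a_x$ vanishes on $\disV_x$ and on any complement other than $\disR_x$ — indeed $\ker\eta^a_x$ has codimension $1$ and contains $R_b$ for $b\neq a$, so $\cap_a\ker\eta^a_x$ is a complement to $\disR_x$ and we may take $\disV_x\subset\cap_a\ker\eta^a_x$.

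The crux is a linear-algebra claim: for each $a$ and each $v\in\disR_x$, the bracket-type obstruction to $[\disR,\disV]\subset\disV$ can be rewritten, using Cartan's formula and the closedness of the $\omega^a$, as a statement about $\omega^a$ evaluated on pairs of vectors. Concretely, for $R$ a (local extension of a) Reeb field, $W$ a section of $\disV$, and $U$ an arbitrary vector field, $d\omega^a=0$ gives
\begin{equation*}
0 = R(\omega^a(W,U)) - W(\omega^a(R,U)) + U(\omega^a(R,W)) - \omega^a([R,W],U) + \omega^a([R,U],W) - \omega^a([W,U],R).
\end{equation*}
Since $i_R\omega^a=0$ the second and third terms vanish identically, and the last term vanishes because $\omega^a(\cdot,R)=0$; if $U$ is itself chosen in $\disV$ then the first term $\omega^a(W,U)$ vanishes as well. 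We are left with $\omega^a([R,W],U) = \omega^a([R,U],W)$ for all $U,W\in\disV$. The right-hand side is antisymmetric in the roles played by the two $\disV$-arguments while the set-up is symmetric, so iterating/antisymmetrizing forces $\omega^a([R,W],U)=0$ for all $U\in\disV$; that is, $[R,W]\in(\disV_x)^{c\omega}$ with respect to each $\omega^a$.

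It then remains to identify $\cap_{a}(\disV_x)^{c\omega}\cap\bigl(\cap_a\ker\eta^a_x\bigr)$ with $\disV_x$ itself, which is where the dimension bookkeeping $\dim M = k(n+1)+n$, $\operatorname{rank}\disV=nk$ is used: the nondegeneracy conditions (ii)–(iii) of Definition~\ref{def:polycosymplectic}, combined with $\omega^a|_{\disV\times\disV}=0$, force $\disV_x$ to be a maximal common isotropic subspace inside $\cap_a\ker\eta^a_x$ transverse to $\disR_x$, so its common polycosymplectic orthogonal meets $\cap_a\ker\eta^a_x$ in exactly $\disV_x$ — this is the point where $k\geq 2$ enters, since for $k=1$ a single $\omega$ imposes far fewer constraints and the orthogonal is strictly larger. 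Having shown $[R_a,W]\in\disV$ pointwise for every section $W$ of $\disV$, we get $[\disR,\disV]\subset\disV$; together with the integrability of $\disR$ and of $\disV$ this yields that $\disR\oplus\disV$ is involutive, proving the proposition. The main obstacle I anticipate is making the counting argument in the last step fully rigorous without simply invoking the $k$-cosymplectic Darboux theorem — one wants to keep the argument at the level of a general polycosymplectic manifold with an isotropic $\disV$ of the right rank, so the cleanest route may be to first establish the splitting $T_xM=\disR_x\oplus\disV_x\oplus(\text{Lagrangian-type complement})$ explicitly from (i)–(iii) and then read off $(\disV_x)^{c\omega}$ directly.
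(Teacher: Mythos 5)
Your Cartan-formula computation is correct and does give, for any Reeb field $R$ and sections $U,W$ of $\disV$, the identity $\omega^a([R,W],U)=\omega^a([R,U],W)$; but the very next step is where the argument breaks. All this identity says is that the (tensorial) bilinear form $B^a(W,U):=\omega^a([R,W],U)$ on $\disV$ is \emph{symmetric}, and a symmetric bilinear form need not vanish -- the phrase ``iterating/antisymmetrizing forces $\omega^a([R,W],U)=0$'' is not an argument. Worse, your outline proves too much: the final identification $\disV_x^{c\omega}\cap\bigl(\cap_a\ker\eta^a_x\bigr)=\disV_x$, which you present as the only place $k\geq 2$ enters, holds for $k=1$ as well (a Lagrangian subspace of the symplectic vector space $\ker\eta_x$ equals its own orthogonal there), so if your scheme worked it would prove the $k=1$ case -- which is false, as the paper's own example $\eta=dt$, $\omega=dx\wedge dp$, $\disV=\mathrm{span}\{\partial_p+t\,\partial_x\}$ on $\R^3$ shows. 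In that example $B(W,W)=\omega([R,W],W)=\omega(\partial_x,\partial_p+t\,\partial_x)=1$: a symmetric, nonvanishing $B$. The hypothesis $k\geq 2$ must therefore be spent precisely in the step you treat as automatic. What is missing is a pointwise normal-form statement: one must show that $\disV_x\cap\ker\omega^a_x$ has dimension $n(k-1)$ for each $a$ and that $\sum_a(\disV_x\cap\ker\omega^a_x)=\disV_x$ when $k\geq 2$; then $B^a(W,U)=0$ whenever $U\in\disV\cap\ker\omega^a$, the symmetry transfers this to the $W$ slot, so $B^a\equiv 0$, and the injectivity of $v\mapsto \omega^a(v,\cdot)|_{\disV}$ on a complement of $\disR\oplus\disV$ then forces $[R,W]\in\disV$. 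That linear-algebraic analysis (essentially the adapted-frame/Darboux statement for the pair polycosymplectic structure plus isotropic $\disV$ of rank $nk$) is the actual content of the lemma and is absent from your proposal; note also that the paper itself does not prove this proposition but only cites Merino's thesis.

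A second, smaller but genuine, problem: ``we may take $\disV_x\subset\cap_a\ker\eta^a_x$'' is not available to you, since $\disV$ is given data, not a choice, and the containment does not follow from the hypotheses as printed. For instance, on $\R^5$ with coordinates $(t^1,t^2,q,p_1,p_2)$, $\eta^a=dt^a$, $\omega^a=dq\wedge dp_a$ (so $k=2$, $n=1$), the rank-$nk=2$ integrable distribution spanned by $\partial_{p_1}+t^2\partial_{t^1}$ and $\partial_{p_2}$ is isotropic for both $\omega^a$, yet $[\partial_{t^2},\partial_{p_1}+t^2\partial_{t^1}]=\partial_{t^1}\notin\disV$. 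Hence $\eta^a|_{\disV}=0$ -- which is part of the standard definition of a $k$-cosymplectic polarization and of the cited Lemma 5.1.1 -- must be stated as an explicit hypothesis; your proof uses it essentially (e.g.\ to conclude $[R,W]\in\cap_a\ker\eta^a$) and cannot derive or ``choose'' it.
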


It follows from Definition~\ref{def:kcosymplectic} that for a $k$-cosymplectic manifold the distribution $\disR\oplus\mathcal{V}$ is involutive. The stable cotangent bundle of $k^1$-covelocities is an example of a $k$-cosymplectic manifold, where $\disV$ is the vertical space w.r.t. the projection $\R^k\times(T^1_k)^*Q\to\R^k\times Q$. It has been shown in \cite{Hamiltoniansystemskcosymplectic} that a $k$-cosymplectic manifold admits Darboux-type coordinates such that: 
\begin{equation}\label{eq:darboux-kycosymplectic}
\eta^a=dt^a,\qquad \omega^a=dq^i\wedge dp_i^a,\qquad \disV={\rm span}\left\{\fpd{}{p^a_i}\right\}. 
\end{equation}
It is apparent from this local formulae that one needs to add the condition $[R_a,\disV]\subset \mathcal{V}$  in Definition~\ref{def:kcosymplectic} (at least for $k=1$) to ensure the existence of Darboux coordinates. This fact seems to have been overlooked in some of the literature, and it is often claimed that a $k$-cosymplectic manifold admits Darboux-type coordinates as in~\eqref{eq:darboux-kycosymplectic} without mentioning that the case $k=1$ requires a further compatibility condition. 

A simple example where the conclusion of Proposition~\ref{pro:involutive} fails in the case $k=1$ can be found in~\cite{stable}. In $\R^3$ we consider the following data: 
\[
\eta=dt,\qquad \omega=dx\wedge dp,\qquad \disV={\rm span}\left\{\fpd{}{p}+t\fpd{}{x}\right\}. 
\]
This is readily checked to define a $1$-cosymplectic structure which is standard, but $[R,\disV] \not\subset \disV$. In particular, it cannot admit Darboux-type coordinates because $\disR\oplus \disV$ is not involutive.

Given a distribution $\disW$ on $M$, we denote by $\tilde\disW$ the induced (lifted) distribution on $\tilde M$: 
\[
\tilde\disW=\disW\oplus 0\equiv \{v\in T\tilde M\st (T{\rm pr})(v)\in \mathcal{W},\; (T{\rm pr'})(v)=0\} \subset T\tilde M. 
\]
In the case where $\disW$ is integrable so is $\tilde\disW$: for if $[X,Y]=Z$ holds for elements in $\disW$, then 
\[
[X\oplus 0,Y\oplus 0]=Z\oplus 0\in\tilde\disW. 
\]
The following is an interesting consequence of Lemma~\ref{lem:extended}:
\begin{proposition}\label{pro:liftstoksympl} If $(M,\omega^a,\eta^a,\disV)$ is a $k$-cosymplectic manifold, then $(\tilde M,\tilde\omega^a,\tilde\disW)$ is a $k$-symplectic manifold, where $\disW=\disV\oplus\disR$.
\end{proposition}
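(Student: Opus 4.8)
The plan is to verify, one by one, the three requirements in Definition~\ref{def:ksympl} for the triple $(\tilde M,\tilde\omega^a,\tilde\disW)$ with $\disW=\disV\oplus\disR$, using Lemma~\ref{lem:extended} for the polysymplectic part and exploiting the explicit description of the kernels in expression~\eqref{eq:contraction}. First I would record the dimension count: $M$ is $k$-cosymplectic, so $\dim M=k(n+1)+n=(k+1)n+k$, whence $\dim\tilde M=(k+1)n+k+1=(k+1)(n+1)$. Writing $\tilde n=n+1$ this is $(k+1)\tilde n$, which matches the dimension requirement of a $k$-symplectic structure. Next, $\disV$ has rank $nk$ and $\disR$ has rank $k$, so $\disW=\disV\oplus\disR$ has rank $(n+1)k=\tilde n k$ (the sum is direct because $\disV$ and $\disR$ are already transverse inside $TM$, indeed $\disR=\ker\omega^1\cap\dots\cap\ker\omega^k$ and $\omega^a$ is nondegenerate on $\disV$ in a suitable sense — more precisely $\disR\cap\disV=\{0\}$ since $\disV$ meets $\cap_a\ker\omega^a$ trivially by the $k$-cosymplectic axioms). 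The lifted distribution $\tilde\disW=\disW\oplus 0\subset T\tilde M$ then also has rank $\tilde n k$, as required.

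The substantive points are integrability of $\tilde\disW$ and the vanishing $\tilde\omega^a(\tilde\disW,\tilde\disW)=0$. For integrability: by the remark preceding the proposition, the lift $\tilde\disW$ of any integrable distribution $\disW$ on $M$ is integrable on $\tilde M$, so it suffices to know that $\disW=\disV\oplus\disR$ is integrable on $M$. This is exactly the content of Definition~\ref{def:kcosymplectic} together with Proposition~\ref{pro:involutive}: $\disV$ is integrable by hypothesis, $\disR$ is integrable (rank-$k$ Reeb distribution, as stated after Definition~\ref{def:polycosymplectic}), and $[\disR,\disV]\subset\disV$ — which holds automatically for $k\ge 2$ by Proposition~\ref{pro:involutive} and is part of Definition~\ref{def:kcosymplectic} for $k=1$ — gives $[\disR\oplus\disV,\disR\oplus\disV]\subset\disR\oplus\disV$, i.e. $\disW$ is involutive, hence integrable. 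For the isotropy condition: a vector in $\tilde\disW$ at $(x,s)$ has the form $(w_x,0)$ with $w_x\in\disV_x\oplus\disR_x$. From~\eqref{eq:contraction}, for two such vectors $(w_x,0)$ and $(w_x',0)$,
\[
\tilde\omega^a\big((w_x,0),(w_x',0)\big)=\omega^a(w_x,w_x').
\]
Now $\omega^a$ vanishes on $\disV\times\disV$ by Definition~\ref{def:kcosymplectic}, and $\disR\subset\ker\omega^a$, so $\omega^a(w_x,w_x')=0$ whenever both $w_x,w_x'\in\disV_x\oplus\disR_x$; expanding $w_x=v+r$, $w_x'=v'+r'$ with $v,v'\in\disV_x$ and $r,r'\in\disR_x$, every term $\omega^a(v,v')$, $\omega^a(v,r')$, $\omega^a(r,v')$, $\omega^a(r,r')$ vanishes. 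Hence $\tilde\omega^a(\tilde\disW,\tilde\disW)=0$ for each $a$.

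Finally, $(\tilde M,\tilde\omega^a)$ is a $k$-polysymplectic manifold by Lemma~\ref{lem:extended} (its hypotheses — $\eta^1\wedge\dots\wedge\eta^k\neq 0$ and $\dim(\cap_a\ker\omega^a)=k$ — are part of the polycosymplectic structure on $M$). Collecting the dimension count, the rank of $\tilde\disW$, its integrability, and the isotropy of the $\tilde\omega^a$ along $\tilde\disW$, the triple $(\tilde M,\tilde\omega^a,\tilde\disW)$ satisfies all the requirements of Definition~\ref{def:ksympl}, so it is a $k$-symplectic manifold. I do not anticipate a serious obstacle here; the only point that needs a little care is making sure $\disR\oplus\disV$ really is a direct sum and really is involutive, which is where the $k=1$ versus $k\ge 2$ distinction and Proposition~\ref{pro:involutive} come in, and where I would be most careful to invoke exactly the right axiom of Definition~\ref{def:kcosymplectic}.
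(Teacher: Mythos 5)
Your proof is correct and follows essentially the same route as the paper's: dimension count, rank of $\tilde\disW$ via $\disV\cap\disR=\{0\}$, involutivity of $\disV\oplus\disR$ from Definition~\ref{def:kcosymplectic} (and Proposition~\ref{pro:involutive}), and the vanishing $\tilde\omega^a(X\oplus 0,Y\oplus 0)=\omega^a(X,Y)=0$. You simply spell out in more detail the steps the paper states in one line each, including the explicit expansion of $\omega^a(v+r,v'+r')$ and the bracket computation for involutivity.
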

\begin{proof} Let $(M,\omega^a,\eta^a,\disV)$ be a $k$-cosymplectic manifold with $\dim M=(k+1)n+k$. By construction $\dim \tilde M=(k+1)(n+1)$. Since $\disV\cap\disR=\emptyset$ the rank of $\tilde\disW$ (which is the same as the rank of $\disW$) is $k(n+1)$. Also $\tilde\disW$ is involutive in view of Definition~\ref{def:kcosymplectic}. Finally if $X,Y\in\disW$:
\[
\tilde\omega^a(X\oplus 0,Y\oplus 0) =\omega^a(X,Y)=0.
\]
\end{proof}

\subsection{Stable cotangent bundle}\label{sec:stablecotangentbundle}

Let $Q$ be a manifold and $\Phi_g\colon Q\to Q$ an action. Consider the polycosymplectic manifold $M=\R^k\times (T^1_k)^*Q$ with the natural lifted action. We have already seen that the polysymplectic manifold $\tilde M$ can be identified with $(T^1_k)^*\tilde Q$ (where $\tilde Q=Q\times\R$) with its canonical polysymplectic structure which we denote by $\tilde\omega^a$. We make this identification more precise now, and show that both spaces are the same as bundles over $\tilde Q$. First, we regard $\tilde M$ as a bundle over $\tilde Q$ via the projection 
\begin{align*}
\tilde M\equiv  (T^1_k)^*Q\times\R^k\times \R &\to Q\times \R \\
(\alpha_q^1,\dots,\alpha_q^k,t^1,\dots,t^k,s)&\mapsto (q,s),
\end{align*}
where $\alpha_q^i\in T_q^*Q$. On the other hand, we have
\[
T^*\tilde Q={\rm pr}_1^*(T^*Q)\oplus {\rm pr}_2^*(T^*\R)\simeq {\rm pr}_1^*(T^*Q)\times\R\times \R,
\]
where the projection is $(\alpha^1_q,t^1,s)\mapsto (q,s)$ (and we think of $t^1\in T_s^*\R$). Therefore 
\[
(T^1_k)^*\tilde Q={\rm pr}_1^*((T^1_k)^* Q)\times \R^k,
\]
which is just $\tilde M$ with a pull-back notation. If we denote by $(q^i,s)$ the coordinates in $\tilde Q$ and by $(p_i^a,t^a) $ their corresponding momenta in $(T^1_k)^*\tilde Q$, then the polysymplectic structure on $\tilde M$ is
\[
\tilde\omega^a=dq^i\wedge dp_i^a+ds\wedge dt^a. 
\]
If we write $\Psi_g\colon \tilde Q\to \tilde Q$ for the action $\Psi_g=(\Phi_g,{\rm Id}_{\R})$, then its cotangent lift to $(T^1_k)^*\tilde Q$ coincides with $\tilde\Phi_g$. 

The components momentum map $J^a_\xi\colon \R^k\times (T^1_k)^*Q\to\R$ are obtained from the usual momentum map for cotangent bundles:
\begin{equation*}
J^a_\xi(t^1,\dots,t^k,\alpha_q^1,\dots,\alpha_q^k)=\langle \alpha_q^a,\xi_Q(q)\rangle. 
\end{equation*}
As expected, the coordinates $(t^1,\dots,t^k)$ play no role in the reduction, and one can easily relate the reduction of $M$ and that of the $(T^1_k)^*Q$, which has been studied in~\cite{Polyrouth,MunSal}. Consider for simplicity $\mu=(0,\dots,0)$. Then the reduced polysymplectic space $M_\mu$ can be identified with 
\[
M_\mu\simeq (T^1_k)^*(Q/G)\times \R^{k},
\]
equipped with its canonical polysymplectic structure. One may as well identify $M_\mu$ and compute its polysymplectic structure for more general values of $\mu$, but the identification requires, for example, the choice of a principal connection on the bundle $Q\to Q/G$. Alternatively, one may directly study the reduction of $\tilde M\simeq (T^1_k)^*\tilde Q$ and find (at $\mu=(0,\dots,0)$)
\[
\tilde M_\mu=(T^1_k)^*(\tilde Q/G)= (T^1_k)^*\big((Q/G)\times \R\big)\simeq (T^1_k)^*(Q/G)\times\R^{k+1}\simeq M_\mu\times\R,
\]
in agreement with the result above. We forego the proof that both manifolds have the same polycosymplectic structure, but this can be readily checked.

\subsection{Product of cosymplectic manifolds}\label{sec:product}

With the same notations as in Example~\ref{ex:product}, let us assume that each $M_a$ comes with a free and proper cosymplectic action $\Phi_a$ of a Lie group $G_a$ with equivariant momentum map $J^a$. Then, if $G=G_1\times \dots\times G_k$, we have the following $G$-action on $M$:
\[
\Phi_{(g_1,\dots,g_k)}(x_1,\dots,x_k)=\big(\Phi_{g_1}(x_1),\dots,\Phi_{g_k}(x_k)\big), 
\]
where $x=(x_1,\dots,x_k)\in M$. This action can be checked to be polycosymplectic. Let
\[
\lag=\lag_1\times\dots\times\lag_k  
\]
be the Lie algebra of $G$, where $\lag_a$ is the Lie algebra of $G_a$. Then $\Phi$  admits a momentum map $\mathbb{J}\colon M\to\lagdk$ with components:
\begin{align*}
\mathbb{J}^a\colon M&\to \lag^*=\lag_1^*\times\dots\times\lag_k^*,\\ 
x=(x_1,\dots,x_k)&\mapsto (0,\dots,0,J^a(x_a),0,\dots,0),
\end{align*}
i.e. $\langle \mathbb{J}^a(x),(\xi_1,\dots,\xi_k)\rangle=\langle J^a(x_a),\xi_a\rangle$. We fix a regular value $\nu_a\in\lag^*$ of $J^a$ for each $a$, and write $\mu_a=(0,\dots,\nu_a,\dots,0)\in\lag^*$ (with $\mu_a$ in the $a$-th position) and $\mu=(\mu_1,\dots,\mu_k)\in\lagdk$. Finally, let us also denote by $(\overline{M}_a,\overline{\Omega}^a,\overline{\lambda}^a)$ the cosymplectic reduction of $M_a$ at $\nu_a$. As above, we will write 
\[
\overline{M}=\overline{M}_1\times\dots\times \overline{M}_k 
\]
for the polycosymplectic manifold obtained from the product of the reduced cosymplectic manifolds $\overline{M}_1,\dots,\overline{M}_k$. The polycosymplectic structure is given by 
\[
\overline{\omega}^a=\overline{{\rm pr}}_a^*\overline{\Omega}^a,\qquad \overline{\eta}^a=\overline{{\rm pr}}_a^*\lambda^a,
\]
where $\overline{{\rm pr}}_a\colon \overline{M}\to \overline{M}_a$ is the $a$-th projection.
\begin{proposition} With the notations above, there is an identification of polycosymplectic manifolds
\[
(M_\mu,\omega^a_\mu,\eta^a_\mu)\simeq (\overline{M},\overline{\omega}^a,\overline{\eta}^a).
\]
In other words, the polycosymplectic reduction of a product of cosymplectic spaces is the product of the corresponding cosymplectic reduced spaces.
\end{proposition}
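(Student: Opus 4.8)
The plan is to bootstrap the result from Albert's cosymplectic reduction theorem applied to each factor, by first identifying the level set and the reduced space as products in the obvious way, and then matching the reduced forms factor by factor.

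First I would identify the level set. Writing $x=(x_1,\dots,x_k)\in M$ and recalling that $\langle\mathbb{J}^a(x),(\xi_1,\dots,\xi_k)\rangle=\langle J^a(x_a),\xi_a\rangle$, the element $\mathbb{J}^a(x)\in\lag^*=\lag_1^*\times\dots\times\lag_k^*$ has all components zero except the $a$-th, which equals $J^a(x_a)$. Since $\mu_a$ has $\nu_a$ in its $a$-th slot and zeros elsewhere, $\mathbb{J}^a(x)=\mu_a$ is equivalent to $J^a(x_a)=\nu_a$, and therefore
\[
\mathbb{J}^{-1}(\mu)=(J^1)^{-1}(\nu_1)\times\dots\times(J^k)^{-1}(\nu_k).
\]
From this, together with the fact that each $\nu_a$ is a regular value of $J^a$, one also sees that $\mu$ is a regular value of $\mathbb{J}$. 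Next I would compute $G_\mu$: the $k$-coadjoint action of $G=G_1\times\dots\times G_k$ on $\lagdk$ is componentwise, and the isotropy group of $\mu_a=(0,\dots,\nu_a,\dots,0)$ in $\lag^*$ is $G_1\times\dots\times(G_a)_{\nu_a}\times\dots\times G_k$; intersecting over $a$ gives $G_\mu=(G_1)_{\nu_1}\times\dots\times(G_k)_{\nu_k}$, which acts on $\mathbb{J}^{-1}(\mu)$ factorwise. As each $\Phi_a$ is free and proper, so is this product action, hence
\[
M_\mu=\mathbb{J}^{-1}(\mu)/G_\mu=\prod_{a=1}^k\big((J^a)^{-1}(\nu_a)/(G_a)_{\nu_a}\big)=\overline{M}_1\times\dots\times\overline{M}_k=\overline{M}
\]
as smooth manifolds, with $\pi_\mu=\pi_{\nu_1}\times\dots\times\pi_{\nu_k}$, where $\pi_{\nu_a}\colon(J^a)^{-1}(\nu_a)\to\overline{M}_a$ is the projection of the cosymplectic reduction of $M_a$.

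It then remains to match the forms. Regardless of whether condition~\eqref{eq:nondegeneratepolyco} is known, the closed forms $\omega^a_\mu,\eta^a_\mu$ on $M_\mu$ are well defined by $\pi_\mu^*\omega^a_\mu=j_\mu^*\omega^a$ and $\pi_\mu^*\eta^a_\mu=j_\mu^*\eta^a$ (see the remark at the end of the proof of Theorem~\ref{thm:main}). Let $q_a\colon\mathbb{J}^{-1}(\mu)\to(J^a)^{-1}(\nu_a)$ be the projection onto the $a$-th factor and $j_{\nu_a}\colon(J^a)^{-1}(\nu_a)\to M_a$ the inclusion; then ${\rm pr}_a\circ j_\mu=j_{\nu_a}\circ q_a$ and $\overline{{\rm pr}}_a\circ\pi_\mu=\pi_{\nu_a}\circ q_a$. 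Using $\omega^a={\rm pr}_a^*\Omega^a$ and Albert's identity $j_{\nu_a}^*\Omega^a=\pi_{\nu_a}^*\overline{\Omega}^a$ from Corollary~\ref{cor:albert}, I compute
\[
\pi_\mu^*\omega^a_\mu=j_\mu^*{\rm pr}_a^*\Omega^a=q_a^*\,j_{\nu_a}^*\Omega^a=q_a^*\,\pi_{\nu_a}^*\overline{\Omega}^a=\pi_\mu^*\,\overline{{\rm pr}}_a^*\overline{\Omega}^a=\pi_\mu^*\overline{\omega}^a,
\]
and since $\pi_\mu$ is a surjective submersion this forces $\omega^a_\mu=\overline{\omega}^a$. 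Running the identical computation with $\lambda^a$ in place of $\Omega^a$ gives $\eta^a_\mu=\overline{\eta}^a$. Finally, $(\overline{M},\overline{\omega}^a,\overline{\eta}^a)$ is a $k$-polycosymplectic manifold by Example~\ref{ex:product}, hence so is $(M_\mu,\omega^a_\mu,\eta^a_\mu)$ (equivalently, condition~\eqref{eq:nondegeneratepolyco} holds here a posteriori), and the asserted identification follows.

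The computations are routine; the one point I would be careful about is the bookkeeping between the two nested product decompositions in play — the one on $M$ and on $\lag=\lag_1\times\dots\times\lag_k$ inherited from the factors $M_a$, and the one on $\lagdk$ that any momentum map for a $k$-polycosymplectic action carries. Pinning down the identifications $\mathbb{J}^{-1}(\mu)=\prod_a(J^a)^{-1}(\nu_a)$ and $G_\mu=\prod_a(G_a)_{\nu_a}$ exactly right — in particular the position of $\nu_a$ inside $\mu_a$ — is the only genuinely delicate step; after that the whole argument is a diagram chase along the commuting projections, together with a single application of Corollary~\ref{cor:albert} to each factor.
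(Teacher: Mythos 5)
Your proof is correct and follows essentially the same route as the paper's: identify $\mathbb{J}^{-1}(\mu)$ and $G_\mu$ as products over the factors, and then transfer the forms by the pull-back chase $j_\mu^*{\rm pr}_a^*\Omega^a=q_a^*j_{\nu_a}^*\Omega^a=q_a^*\pi_{\nu_a}^*\overline{\Omega}^a=\pi_\mu^*\overline{{\rm pr}}_a^*\overline{\Omega}^a$, invoking Albert's theorem on each factor; your bookkeeping of the nested product decompositions (in particular $G_\mu=\prod_a (G_a)_{\nu_a}$) is if anything slightly more explicit than the paper's. One small inaccuracy: your parenthetical claim that $\mu$ is a regular value of $\mathbb{J}$ is false for $k\geq 2$ with nontrivial groups, since the image of $T_x\mathbb{J}^a$ lies inside the $a$-th summand $\lag_a^*\subset\lag^*$, so $T_x\mathbb{J}$ cannot surject onto $\lagdk$; this is harmless for your argument because $\mathbb{J}^{-1}(\mu)$ is a product of submanifolds and the forms descend by the direct computation, but the assertion itself should be dropped.
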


\begin{proof} We have
\begin{equation*}
\mathbb{J}^{-1}(\mu)=(J^1)^{-1}(\mu_1)\times\dots\times (J^k)^{-1}(\mu_k), 
\end{equation*}
and, because $G_\mu=G_{\mu_1}\times\dots\times G_{\mu_k}$, there is an identification 
\[
M_\mu= \mathbb{J}^{-1}(\mu)/G_\mu\simeq\overline{M}_1\times\dots\times \overline{M}_k.
\]
To analyze the reduced polycosymplectic structure, we will use the notations in the following diagram:
\begin{equation*}
\begin{tikzcd}[row sep=large, column sep = 13ex]
&   & (J^1)^{-1}(\mu_1)\arrow[hookrightarrow]{r}{j_1}\arrow[swap,bend right=15]{lld}{\pi_1} &  M_1 \\
 \overline{M}_1 & \overline{M}_1\times\dots\times \overline{M}_k\arrow{l}{\overline{{\rm pr}}_1} & \mathbb{J}^{-1}(\mu)\arrow{l}{\pi_\mu}\arrow[hookrightarrow]{r}{j_\mu=(j_1,\dots,j_k)}\arrow{u}{\Pi_1} &  M\arrow[swap]{u}{{\rm pr}_1}
\end{tikzcd}  
\end{equation*} 
The reduced cosymplectic structure in $\overline{M}_1$ is characterized by the relations:
\[
\pi_1^*\overline{\Omega}^1=j_1^*\Omega^1,\qquad \pi_1^*\overline{\lambda}^1=j_1^*\lambda^1. 
\]
We will now check that $\omega^1_\mu=\overline{{\rm pr}}_1^*\overline{\Omega}^1$ and $\eta^1_\mu=\overline{{\rm pr}}_1^*\overline{\lambda}^1$. Indeed, we have
\[
j_\mu^*\omega^1=j_\mu^*{\rm pr}_1^*\Omega^1=\Pi_1^*j_1^*\Omega^1=\Pi_1^*\pi_1^*\overline{\Omega}^1=\pi_\mu^* \overline{{\rm pr}}_1^*\overline{\Omega}^1,
\]
and similar for $\eta^1_\mu$. The same reasoning applies to the rest of the forms $\omega_\mu^a,\eta_\mu^a$.
\end{proof}

It is possible to show that the sufficient condition of Theorem~\ref{thm:sufficient} is satisfied in this example (see~\cite{Poly-deLucas} where this is done in detail).

\paragraph{Acknowledgements.} We are indebted to the authors of \cite{Poly-deLucas}  for sharing the outline of their results before they came online. We are grateful to Javier de Lucas for pointing out a few inaccuracies in an earlier version of this manuscript. We thank the referees for their comments on the paper. E.\ García-Tora\~no Andr\'es is thankful to FONCYT for funding through
project PICT 2019-00196.

\bibliographystyle{plain}

\end{document}